\let\proof\@undefined
\let\endproof\@undefined
\newcommand{\mc}[1]{\mathcal{#1}}
\newcommand{\mbb}[1]{\mathbb{#1}}
\newcommand{\tsf}[1]{\textsf{#1}}
\newcommand{\defeq}{\triangleq}
\newcommand{\Pp}{\mathbb{P}}
\newcommand{\ind}{1\hspace{-1.3mm}{1}}
\newcommand{\E}{\mathbb{E}}
\newcommand{\ceil}[1]{\left\lceil{#1}\right\rceil}
\newcommand{\floor}[1]{\left\lfloor{#1}\right\rfloor}
\newcommand{\abs}[1]{\lvert{#1}\rvert}
\newcommand{\card}[1]{\abs{#1}}
\newcommand{\norm}[1]{\lVert{#1}\rVert}
\newcommand{\tr}{\mathrm{tr}}
\newcommand{\hy}{\hat{y}}
\newcommand{\lambdauc}{\lambda}
\newtheorem{lemma}{Lemma}
\newtheorem{theorem}[lemma]{Theorem}
\theoremstyle{definition}
\newtheorem{egdummy}{Example}
\theoremstyle{remark}
\begin{document}

\bibliographystyle{unsrt}

\title{On Capacity Scaling in Arbitrary Wireless Networks} 

\author{Urs Niesen, Piyush Gupta, and Devavrat Shah
\thanks{U.~Niesen and D.~Shah are with the Laboratory of
Information and Decision Systems, Department of EECS
at the Massachusetts Institute of Technology. 
Email: \{uniesen,devavrat\}@mit.edu}
\thanks{P.~Gupta is with the Mathematics of Networks and Communications
Research Department, Bell Labs, Alcatel-Lucent.  Email:
pgupta@research.bell-labs.com}
\thanks{The work of U.~Niesen and D.~Shah was
supported in parts by DARPA grant (ITMANET) 18870740-37362-C 
and NSF grant CNS-0546590; the work of P.~Gupta
was supported in part by NSF Grants CCR-0325673 and CNS-0519535.}
}

\maketitle

\begin{abstract}
    In recent work, {\"O}zg{\"u}r, L{\'e}v{\^e}que, and Tse (2007)
    obtained a complete scaling characterization of throughput scaling
    for random extended wireless networks (i.e., $n$ nodes are placed
    uniformly at random in a square region of area $n$). They showed
    that for small path-loss exponents $\alpha\in(2,3]$ cooperative
    communication is order optimal, and for large path-loss exponents
    $\alpha > 3$ multi-hop communication is order optimal.  However,
    their results (both the communication scheme and the proof
    technique) are strongly dependent on the regularity induced with
    high probability by the random node placement.
    
    In this paper, we consider the problem of characterizing the
    throughput scaling in extended wireless networks with arbitrary node
    placement. As a main result, we propose a more general novel
    cooperative communication scheme that works for arbitrarily placed
    nodes.  For small path-loss exponents $\alpha \in (2,3]$, we show
    that our scheme is order optimal for all node placements, and
    achieves exactly the same throughput scaling as in {\"O}zg{\"u}r et
    al. This shows that the regularity of the node placement does not
    affect the scaling of the achievable rates for $\alpha\in (2,3]$.
    The situation is, however, markedly different for large path-loss
    exponents $\alpha >3$. We show that in this regime the scaling of
    the achievable per-node rates depends crucially on the regularity of
    the node placement. We then present a family of schemes that
    smoothly ``interpolate'' between multi-hop and cooperative
    communication, depending upon the level of regularity in the node
    placement. We establish order optimality of these schemes under
    adversarial node placement for $\alpha > 3$.
\end{abstract}

\begin{keywords}
    Arbitrary node placement, 
    capacity scaling,
    cooperative communication,
    hierarchical relaying, 
    multi-hop communication,
    wireless networks.
\end{keywords}

\section{Introduction}
\label{sec:intro}

Consider a wireless network with $n$ nodes placed on $[0,\sqrt{n}]^2$
(usually referred to as an \emph{extended network}), with each node being
the source for one of $n$ source-destination pairs and the destination
for another pair. The performance of this network is captured by
$\rho^*(n)$, the largest uniformly achievable rate of communication
between these source-destination pairs. While the scaling behavior of
$\rho^*(n)$ as the number of nodes $n$ goes to infinity is by now well
understood for random node placement, little is known for the case of
arbitrary node placements. In this paper, we are interested in analyzing
the impact of such arbitrary node placement on the scaling of
$\rho^*(n)$.

\subsection{Related Work}

The problem of determining the scaling of $\rho^*(n)$ was first analyzed
by Gupta and Kumar in~\cite{gup}. They show that, under random
placement of nodes in the region, certain models of communication
motivated by current technology, and random source-destination pairing,
the maximum achievable per-node rate $\rho^*(n)$ can scale at most as
$O(n^{-1/2})$.  Moreover, it was shown that multi-hop communication can
achieve essentially the same order of scaling. 

Since \cite{gup}, the problem has received a considerable amount of
attention. One stream of work~\cite{xie,jov,lev,xue,xie2,fra,ozg} has
progressively broadened the conditions on the channel model and the
communication model, under which multi-hop communication is order
optimal. Specifically, with a power loss of $r^{-\alpha}$ for signals
sent over distance $r$, it has been established that under \emph{high}
signal attenuation $\alpha>3$ and random node placement, the best
achievable per-node rate $\rho^*(n)$ for random source-destination
pairing scales essentially like $\Theta(n^{-1/2})$ and that this scaling
is achievable with multi-hop communication.

Another stream of  work \cite{gup2,xie3,kra,aer,ozg} has proposed
progressively refined multi-user cooperative schemes, which have been
shown to significantly out-perform multi-hop communication in certain
environments. In an exciting recent work, {\"O}zg{\"u}r et
al.~\cite{ozg} have shown that with nodes placed uniformly at random,
and with \emph{low} signal attenuation $\alpha\in(2,3]$, a cooperative
communication scheme can perform significantly better than multi-hop
communication. More precisely, they show that for $\alpha\in(2,3]$, the
best achievable per-node rate for random source-destination pairing
scales as $\rho^*(n)= O(n^{1-\alpha/2+\varepsilon})$ and cooperative
communication achieves a per-node rate of
$\Omega(n^{1-\alpha/2-\varepsilon})$ (here, $\varepsilon>0$ is an
arbitrary but fixed constant). That is, cooperative communication is
essentially order optimal in the attenuation regime $\alpha \in (2,3]$. 

In summary, for random extended networks with random source-destination
pairing, the optimal communication scheme exhibits the following
threshold behavior: for $\alpha \in (2,3]$ the cooperative communication
scheme is order optimal, while for $\alpha > 3$ the multi-hop
communication scheme is order optimal.

\subsection{Our Contributions}

The characterization of the scaling of $\rho^*(n)$ as a function of the
path-loss exponent $\alpha$ mentioned in the last paragraph depends
critically on the regularity induced with high probability by
placing the nodes uniformly at random. However, a wireless network
encountered in practice might not exhibit this amount of regularity. Our
interest is therefore in understanding the impact of the node placement
on the scaling of $\rho^*(n)$. To this end, we consider wireless
networks with arbitrary (i.e., deterministic) node placement (with
minimum-separation constraint).

The impact of this arbitrary node placement depends crucially on the
path-loss exponent $\alpha$. For small path-loss exponents
$\alpha\in(2,3]$, we show that for random source-destination pairing,
the rate of the best communication scheme is upper bounded as $\rho^*(n)
= O(\log^6(n)n^{1-\alpha/2})$. We then present a novel cooperative
communication scheme that achieves for any path-loss exponent $\alpha >
2$ a per-node rate of $\rho^\tsf{HR}(n) \geq n^{1-\alpha/2-o(1)}$. Thus,
our cooperative communication scheme is essentially order optimal for
any such arbitrary network with $\alpha \in (2,3]$. In other words, in
the small path-loss regime, the scaling of $\rho^*(n)$ is the same
irrespective of the regularity of the node placement.

The situation is, however, quite different for large path-loss exponents
$\alpha > 3$.  We show that in this regime the scaling of $\rho^*(n)$
depends crucially on the regularity of the node placement, and multi-hop
communication may not be order optimal for any value of $\alpha$. In
fact, for less regular networks we need more complicated cooperative
communication schemes to achieve optimal network performance. Towards
that end, we present a family of communication schemes that smoothly
``interpolate'' between cooperative communication and multi-hop
communication, and in which nodes communicate at scales that vary
smoothly from local to global. The amount of ``interpolation'' between
the cooperative and multi-hop schemes depends on the level of regularity
of the underlying node placement.  We establish the optimality of this
family of schemes for all $\alpha > 3$ under adversarial node placement.

In summary, for $\alpha\in(2,3]$ the regularity of the node placement
has no impact on the scaling of $\rho^*(n)$. Cooperative communication
is order optimal in this regime and achieves the same scaling as in the
case of random node placement. For $\alpha>3$ the regularity of the node
placement strongly impacts the scaling of $\rho^*(n)$, and a
communication scheme ``interpolating'' between multi-hop and cooperative
communication depending on the regularity of the node placement is order
optimal (under adversarial node placement). In particular, simple
multi-hop communication may not be order optimal for any $\alpha>3$.
This contrasts with the case of random node placement where multi-hop
communication is order optimal for all $\alpha>3$.

\subsection{Organization}

The remainder of this paper is organized as follows.
Section~\ref{sec:model} describes in detail the communication model.
Section~\ref{sec:results} provides formal statements of our results.
Sections~\ref{sec:desc} and \ref{sec:schemes_multihop} describe our new
cooperative communication scheme (for the $\alpha\in (2,3]$ regime) and
``interpolation'' scheme (for the $\alpha> 3$ regime) for arbitrary
wireless networks. Sections~\ref{sec:analysis} through
\ref{sec:adversaryx} contain proofs.  Finally,
Sections~\ref{sec:discussion} and~\ref{sec:conclusions} contain
discussions and concluding remarks.

\section{Model}
\label{sec:model}

In this section, we introduce some notational conventions and describe
in detail the network and channel models.

We use the following conventions: $K_i$ for different $i$ denote
strictly positive finite constants independent of $n$. Vectors and
matrices are denoted by boldface whenever the vector or matrix structure
is of importance. We denote by $(\cdot)^T$ and $(\cdot)^\dagger$
transpose and conjugate transpose, respectively.  To simplify notation,
we assume, when necessary, that fractions are integers and omit
$\ceil{\cdot}$ and $\floor{\cdot}$ operators.

Consider the square
\begin{equation*}
    A(n) \defeq [0,\sqrt{n}]^2
\end{equation*}
of area $n$, and let $V(n)\subset A(n)$ be a
set of $\abs{V(n)} = n$ nodes on\footnote{The setting considered here
with $n$ nodes placed on a square of area $n$ is called an
\emph{extended} network. If the $n$ nodes are placed on a square of unit
area, we speak of a \emph{dense} network. While dense networks are not
treated in detail in this paper, we briefly discuss implications of the
results for the dense setting in Section \ref{sec:dense}.} $A(n)$. We
say that $V(n)$ has \emph{minimum-separation} $r_{\min}$ if $r_{u,v}\geq
r_{\min}$ for all $u,v\in V(n)$, where $r_{u,v}$ is the Euclidean
distance between nodes $u$ and $v$. We use the same channel model as
in~\cite{ozg}. Namely, the (sampled) received signal at node $v$ is
\begin{equation}
    \label{eq:channel}
    y_v[t] = \sum_{u\in V(n)\setminus\{v\}}h_{u,v}[t]x_u[t]+z_v[t]
\end{equation}
for all $v\in V(n)$, and where $\{x_u[t]\}_{u,t}$ are the (sampled)
signals sent by the nodes in $V(n)$.  Here $\{z_v[t]\}_{v,t}$ are 
independent and identically distributed (i.i.d.)
with distribution $\mc{N}_{\mbb{C}}(0,1)$ (i.e., circularly symmetric
complex Gaussian with mean $0$ and variance $1$), and 
\begin{equation*}
    h_{u,v}[t] = r_{u,v}^{-\alpha/2}\exp(\sqrt{-1}\theta_{u,v}[t]),
\end{equation*} 
for \emph{path-loss exponent} $\alpha>2$. We assume that for each
$t\in\mbb{N}$, the phases $\{\theta_{u,v}[t]\}_{u,v}$ are
i.i.d.\footnote{It is worth pointing out that recent work \cite{fra2}
suggests that, under certain assumptions on scattering elements, for
$\alpha\in(2,3)$, and for very large values of $n$, the i.i.d. phase
assumption as a function of $u,v\in V(n)$ used here is too optimistic.
However, subsequent work by the same authors \cite{fra3} shows that
under different assumptions on the scatterers, the channel model used
here is still valid even for $\alpha\in(2,3)$, and for very large values
of $n$. This indicates that the question of channel modeling for very
large networks in the low path-loss regime is somewhat delicate and
requires further investigation. We point out that for $\alpha\geq 3$
this issue does not arise.} with uniform distribution on
$[0,2\pi)$. We either assume that for each $u,v\in V(n)$ the random
process $\{\theta_{u,v}[t]\}_{t}$ is stationary ergodic in $t$, which is
called \emph{fast fading} in the following, or that for each $u,v\in
V(n)$ the random process $\{\theta_{u,v}[t]\}_{t}$ is constant in $t$,
which is called \emph{slow fading} in the following.  In either case, we
assume full channel state information (CSI) is available at all nodes,
i.e., each node knows all $\{\theta_{u,v}[t]\}_{u,v}$ at time $t$. While
the full CSI assumption is quite strong, it can be shown that
availability of a $2$-bit \emph{quantized} version of
$\{\theta_{u,v}[t]\}_{u,v}$ at all nodes is sufficient for the
achievable schemes presented here (see Section \ref{sec:csi} for the
details). We also impose an average power constraint of $1$ on the
signal $\{x_u[t]\}_{t}$ for every node $u\in V(n)$.

Each node $u\in V(n)$ wants to transmit information at uniform rate
$\rho(n)$ to some other node $w\in V(n)$. We call $u$ the \emph{source}
and $w$ the \emph{destination} node of this communication pair. The set
of all communication pairs can be described by a \emph{traffic matrix}
$\lambdauc(n)\in\{0,1\}^{n\times n}$, where the entry in $\lambdauc(n)$
corresponding to $(u,w)$ is equal to $1$ if node $u$ is a source for
node $w$.  We say that $\lambdauc(n)$ is a \emph{permutation traffic
matrix} if it is a permutation matrix (i.e., every node is a source for
exactly one communication pair and a destination for exactly one
communication pair). For a traffic matrix $\lambdauc(n)$, let $\rho^*(n)$
be the highest rate of communication that is uniformly achievable for
each source-destination pair. For a permutation traffic matrix
$\lambdauc(n)$, $\rho^*(n)$ can also be understood as the maximal
achievable per-node rate.

\section{Main Results}
\label{sec:results}

This section presents the formal statement of our results. The results
are divided into two parts. In Section \ref{sec:low}, we consider low
path-loss exponents, i.e., $\alpha\in(2,3]$.  We present a cooperative
communication scheme for arbitrary node placement and for either fast or
slow fading. We show that this communication scheme is order optimal for
all node placements when $\alpha\in(2,3]$. In Section \ref{sec:high}, we
consider high path-loss exponents, i.e., $\alpha >3$.  We present a
communication scheme that ``interpolates'' between the cooperative and
the multi-hop communication schemes, depending on the regularity of the
node placement.  We show that this communication scheme is order optimal
under adversarial node placement with regularity constraint when $\alpha
> 3$.

\subsection{Low Path Loss Regime $\alpha\in(2,3]$}
\label{sec:low}

The first result proposes a novel communication scheme, called
\emph{hierarchical relaying} in the following, and bounds the per-node
rate $\rho^\tsf{HR}(n)$ that it achieves. This provides a lower bound to
$\rho^*(n)$, the largest achievable per-node rate. The hierarchical
relaying scheme enables cooperative communication on the scale of the
network size. In the random node placement case, this cooperation could
be enabled in a cluster around the source node (cooperatively
transmitting) and in a cluster around its destination node
(cooperatively receiving). With arbitrary node placement, such an
approach does no longer work, as both the source as well as the
destination nodes may be isolated. The hierarchical relaying scheme
circumvents this issue by relaying data between each source-destination
pair over a densely populated region in the network. A detailed
description of this scheme is provided in Section \ref{sec:desc}, the
proof of Theorem \ref{thm:achievability} is contained in Section
\ref{sec:achievability}.

\begin{theorem}
    \label{thm:achievability}
    Under fast fading, for any $\alpha >2$, $r_{\min}\in(0,1)$, and
    $\delta\in(0,1/2)$, there exists 
    \begin{equation*}
        b_1(n) \geq n^{-O\big(\log^{\delta-1/2}(n)\big)}
    \end{equation*}
    such that for any $n$, node
    placement $V(n)$ with minimum separation $r_{\min}$, and permutation
    traffic matrix $\lambdauc(n)$, we have
    \begin{equation*}
        \rho^*(n) 
        \geq \rho^\textup{\tsf{HR}}(n)
        \geq b_1(n)n^{1-\alpha/2}.
    \end{equation*}
    The same conclusion holds for slow fading with probability 
    at least
    \begin{equation*}
     1-\exp\Big(-2^{\Omega\big(\log^{1/2+\delta}(n)\big)}\Big)
     = 1-o(1)
    \end{equation*}
    as $n\to\infty$. 
\end{theorem}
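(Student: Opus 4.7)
The plan is to build the hierarchical relaying scheme recursively and to analyze it by tracking how per-node rates degrade from one level of the hierarchy to the next. The top-level structure is the one already advertised in the introduction: for each source--destination pair $(u,w)$, route the data via a \emph{dense} subsquare $S_{u,w}\subset A(n)$ whose population is $\Omega(n^{1-o(1)})$. This subsquare plays the role of a distributed MIMO antenna array. Communication for a pair $(u,w)$ takes three phases: (i) $u$ broadcasts to every node in $S_{u,w}$; (ii) the nodes in $S_{u,w}$ cooperatively forward (using quantize-and-forward style MIMO transmission built on top of the next-level scheme) to another dense cluster near $w$; (iii) that cluster transmits to $w$. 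Many pairs are scheduled together via a space--time decomposition: partition $A(n)$ into many smaller subsquares, pick disjoint dense relay clusters for groups of pairs, and allow concurrent transmission on disjoint cluster groups up to a bounded interference overhead.

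The first technical ingredient is an \emph{existence-of-dense-cluster} lemma: any $V(n)$ with minimum separation $r_{\min}$ contains, inside every sufficiently large subsquare, a further subsquare of side $\Theta(n^{1/2}/\mathrm{polylog}(n))$ with at least $n^{1-o(1)}$ nodes. This follows from a pigeonhole argument together with the $O(1/r_{\min}^2)$ node-density upper bound forced by the separation constraint. The second ingredient is a MIMO capacity lower bound for the uplink/downlink between a point $u$ and a cluster $S$: the cooperative rate is at least $\lvert S\rvert \cdot r_{u,S}^{-\alpha}$ up to logarithmic factors, provided the relay nodes can pool their observations. Because $u$ may be far from $S$, the bound loses only an $n^{O(1)}$ factor on distances, which is absorbed by the $n^{1-\alpha/2}$ target rate through careful cluster-size choice. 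The intra-cluster cooperation needed to transform the observed vector into a useful estimate is exactly the problem of exchanging $O(\lvert S\rvert)$ bits among $\lvert S\rvert$ nodes, which is what the \emph{next} level of the hierarchy solves.

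This suggests the main recursion: if $\rho_\ell(n)$ denotes the per-node rate achievable at recursion depth $\ell$, then one obtains a relation of the form $\rho_\ell(n) \geq c(n)\,\rho_{\ell+1}(n')$, where $n'$ is the number of nodes in a cluster and $c(n)=n^{-o(1)}$ captures the per-level overhead (time-division between the three phases, MIMO rate loss, quantization of exchanged bits). With $O(\log^{1/2+\delta}(n))$ levels one reaches a base case handled by TDMA at rate $n^{-\alpha/2}$, and unrolling gives
\begin{equation*}
    \rho^{\tsf{HR}}(n) \geq \Big(\prod_{\ell} c(n_\ell)\Big)\cdot n^{1-\alpha/2}
    \geq n^{-O(\log^{\delta-1/2}(n))}\cdot n^{1-\alpha/2},
\end{equation*}
which matches the claimed $b_1(n)$. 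Careful bookkeeping of the exponents along this recursion, and verification that distances from isolated sources/destinations to their assigned clusters do not blow up by more than polynomial factors, is the central accounting task.

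The hardest step, I expect, is the passage from fast fading to slow fading. Under fast fading one uses ergodicity of $\{\theta_{u,v}[t]\}$ so that the MIMO rate estimate holds in expectation and concentrates trivially. Under slow fading the phases are frozen, so the cooperative uplink/downlink capacity between a node and a cluster is a random variable depending on the fixed realization, and one must show concentration uniformly over all $O(n)$ source--cluster and cluster--destination links and all $O(\log^{1/2+\delta}(n))$ levels of the hierarchy. The strategy is to bound the outage probability of one MIMO transmission by a Gaussian/Hanson--Wright type concentration inequality on quadratic forms of the i.i.d.\ uniform phases, giving failure probability at most $\exp(-2^{\Omega(\log^{1/2+\delta}(n))})$, and then to take a union bound over all links and levels. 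Since the number of such events is only polynomial in $n$, the doubly-exponential tail survives the union bound and yields the stated $1-o(1)$ probability guarantee.
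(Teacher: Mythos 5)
Your high-level architecture (dense relay regions, recursion, TDMA base case, loss factor of the form $n^{-O(\log^{\delta-1/2}(n))}$) matches the paper, but the engine that actually produces the rate is missing, and without it the scheme you describe falls polynomially short of $n^{1-\alpha/2}$. Your rate ingredient is a \emph{single-user} point-to-cluster bound (rate $\approx \lvert S\rvert\, r_{u,S}^{-\alpha}$), and your concurrency is only across \emph{disjoint} cluster groups. But clusters of side $\Theta(\sqrt{n}/\mathrm{polylog}(n))$ with $n^{1-o(1)}$ nodes number only $n^{o(1)}$, so roughly $n^{1-o(1)}$ source--destination pairs must share one cluster. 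If they access it one at a time (even with power bursting), each uplink slot delivers at most $\log\big(1+P\lvert S\rvert r^{-\alpha}\big)=O(\log n)$ bits per channel use, so the per-node rate is capped near $n^{-1+o(1)}\log n$ (or $n^{-\alpha/2}$ in the power-limited accounting) --- a factor of order $n^{2-\alpha/2}$ short of the target for all $\alpha\in(2,4)$, hence in the whole regime $(2,3]$. The crux of the paper's scheme is precisely what your sketch omits: all $n_{\ell+1}$ sources assigned to a relay squarelet transmit \emph{simultaneously}, the squarelet is turned into an $n_{\ell+1}$-antenna receiver by constant-rate quantization, intra-squarelet exchange (this is what the next recursion level is for), and matched filtering (Lemma~\ref{thm:mac}), and symmetrically into an $n_{\ell+1}$-antenna transmitter by beamforming with $O((\ell+1)\log n)$-bit quantization (Lemma~\ref{thm:bc}); the per-pair rate $\propto P_\ell(n)\,n_{\ell+1}\,a_\ell^{-\alpha/2}$ carries the multiplexing factor $n_{\ell+1}$ that your recursion $\rho_\ell\ge c(n)\rho_{\ell+1}$ with a TDMA base case at $n^{-\alpha/2}$ cannot generate (multiplying $n^{-\alpha/2}$ by factors $c(n)=n^{-o(1)}$ only loses rate). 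Relatedly, your phase (ii)--(iii) presumes ``another dense cluster near $w$,'' which need not exist under arbitrary placement; the paper needs no such cluster because the \emph{same} relay squarelet serves as receive array (MAC phase) and transmit array (BC phase), with load balancing across squarelets given by Lemma~\ref{thm:relay}.

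On slow fading, your route (per-link outage via concentration of quadratic forms in the frozen phases, then a union bound) is genuinely different from the paper's, which never proves per-link concentration: it uses Markov's inequality to get only a constant success probability per relay squarelet, and then exploits \emph{relay diversity} --- each pair is assigned $\Theta(2^{-\ell}\gamma(n))$ distinct relay squarelets via orthogonal schedules (Lemma~\ref{thm:relay1}), whose success events are conditionally independent, so a Chernoff bound gives failure probability $\exp(-\Omega(2^{-\ell}\gamma(n)))$, which is where $\exp\big(-2^{\Omega(\log^{1/2+\delta}(n))}\big)$ comes from. Your approach could plausibly be made to work because the smallest cluster size in the hierarchy is $n_{L}\approx 2^{\log^{1/2+\delta}(n)-\log^{1/2-\delta}(n)}$, so concentration exponential in the cluster size would give a tail of the right form surviving a polynomial union bound; but you would need to justify this for the beamformed interference between concurrently active squarelets (whose weights depend on other squarelets' frozen channels), which the paper handles through its independence/orthogonality structure. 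In any case, fixing the first gap --- the induced MAC/BC with simultaneous intra-cluster multiplexing and the associated quantization bookkeeping --- is the essential missing step.
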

Theorem \ref{thm:achievability} shows that the per-node rate
$\rho^{\tsf{HR}}(n)$ achievable by the hierarchical relaying scheme is
at least $n^{1-\alpha/2-\beta(n)}$, where the ``loss'' term $\beta(n)$
converges to zero as $n\to\infty$ at a rate arbitrarily close to
$O\big(\log^{-1/2}(n)\big)$ (by choosing $\delta$ small).
The performance of the hierarchical relaying scheme can intuitively be
understood as follows. As mentioned before, the scheme achieves
cooperation on a global scale. This leads to a multi-antenna gain of
order $n$. On the other hand, communication is over a distance of order
$n^{1/2}$, leading to a power loss of order $n^{-\alpha/2}$. Combining
these two factors results in a per-node rate of $n^{1-\alpha/2}$.

We note that Theorem~\ref{thm:achievability} remains valid under
somewhat weaker conditions than having minimum separation
$r_{\min}\in(0,1)$.  Specifically, we show that the result of
{\"O}zg{\"u}r et al. \cite{ozg} can be recovered through Theorem
\ref{thm:achievability} as the random node placement satisfies these
weaker conditions.  We discuss this in more detail in
Section~\ref{sec:dmin}. 

The next theorem establishes optimality of the hierarchical relaying
scheme in the range of $\alpha \in (2,3]$ for arbitrary node placement.
The proof of the theorem is presented in Section \ref{sec:converse}.

\begin{theorem}
    \label{thm:converse}
    Under either fast or slow fading, for any $\alpha\in(2,3]$,
    $r_{\min}\in(0,1)$, there exists $b_2(n) = O\big(\log^6(n)\big)$
    such that for any $n$, node placement $V(n)$ with minimum separation
    $r_{\min}$, and for $\lambdauc(n)$ chosen uniformly at random from the
    set of all permutation traffic matrices, we have
    \begin{equation*}
        \rho^*(n) 
        \leq b_2(n)n^{1-\alpha/2} 
    \end{equation*}
    with probability $1-o(1)$ as $n\to\infty$.
\end{theorem}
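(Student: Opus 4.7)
The strategy is a cut-set argument. Since $|V(n)| = n$ and the nodes satisfy minimum separation $r_{\min}$, I take $x = x_0$ to be the median $x$-coordinate of the nodes, partitioning $V(n)$ into $S, S^c$ with $|S|, |S^c| \in [n/2-1, n/2+1]$. The key geometric input from the minimum-separation hypothesis is that, for any $v \in S^c$ at distance $d_v$ from this cut, the packing bound together with $\alpha > 2$ gives $\sum_{u \in S} r_{u,v}^{-\alpha} = O(\max(1, d_v^{2-\alpha}))$, and that a strip of width $1$ parallel to the cut contains only $O(\sqrt{n})$ nodes.

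I then show that a uniformly random permutation $\lambdauc(n)$ sends $\Omega(n)$ source-destination pairs across the cut with probability $1 - o(1)$: the number of such pairs is hypergeometric with mean $\Theta(n)$, so a standard Chernoff/Azuma concentration gives the claim with probability $1 - e^{-\Omega(n)}$.

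The cut-set bound on sum rate then gives $\Omega(n)\,\rho^*(n) \leq I(X_S; Y_{S^c} \mid \mathrm{CSI}) \leq \mathbb{E}[\log\det(I + H Q H^\dagger)]$, maximized over covariances $Q$ with $Q_{uu} \leq 1$. By concavity of $\log\det$ on PSD matrices (Jensen), and because the i.i.d.\ uniform phases $\theta_{u,v}$ make $\mathbb{E}[H Q H^\dagger]$ diagonal with $(v,v)$-entry $\sum_{u \in S} Q_{uu} r_{u,v}^{-\alpha}$, the bound reduces to $\sum_{v \in S^c} \log(1 + \sum_{u \in S} r_{u,v}^{-\alpha}) \leq \sum_{v \in S^c} \log(1 + K\min(1, d_v^{2-\alpha}))$. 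Summing, using that a strip of width $1$ at distance $d$ from the cut contains only $O(\sqrt{n})$ nodes, yields $O(\sqrt{n})\int_1^{\sqrt{n}} d^{2-\alpha}\, dd = O(n^{2-\alpha/2}\log n)$ uniformly for $\alpha \in (2,3]$. Dividing by $\Omega(n)$ crossing pairs gives $\rho^*(n) = O(n^{1-\alpha/2}\log n)$; the additional polylog slack in the stated $b_2(n) = O(\log^6 n)$ is absorbed by constants tracked through the various union bounds and concentration arguments.

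The hard part is the slow-fading regime, where the phases are frozen so Jensen is unavailable. I would handle it by showing that $H^\dagger H$ concentrates in operator norm around its diagonal expectation with high probability over $\{\theta_{u,v}\}$: for $u_1 \neq u_2$, $(H^\dagger H)_{u_1 u_2} = \sum_v r_{u_1,v}^{-\alpha/2} r_{u_2,v}^{-\alpha/2} e^{\sqrt{-1}(\theta_{u_1,v} - \theta_{u_2,v})}$ is a sum of independent zero-mean bounded random variables, so a fourth-moment computation and a union bound over the $O(n^2)$ entries give $\|H^\dagger H - \mathbb{E}\, H^\dagger H\|_{\mathrm{op}} = O(\mathrm{polylog}(n))$ with probability $1-o(1)$. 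Replacing $H^\dagger H$ by its diagonal costs only a polylog factor and reduces the slow-fading analysis to the fast-fading calculation above, once again contributing to the $\log^6 n$ overhead.
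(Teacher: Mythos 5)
There is a genuine gap, and it appears already in your fast-fading argument. The cut-set mutual information must be maximized over input covariances $\bm{Q}$ that may \emph{depend on the channel realization} $\bm{H}$, because the model grants full CSI to all nodes (and the paper explicitly insists the converse hold under this assumption). Your Jensen step, which replaces $\E[\bm{H}\bm{Q}\bm{H}^\dagger]$ by a diagonal matrix with entries $\sum_{u\in S}Q_{uu}r_{u,v}^{-\alpha}$, is only valid when $\bm{Q}$ is independent of the phases. With $\bm{Q}(\bm{H})$ adapted to the phases, the transmitters can beamform coherently across the cut, the cross terms $\E[h_{u_1,v}\overline{h_{u_2,v}}\,Q_{u_1u_2}(\bm{H})]$ do not vanish, and the received power at a node $v$ near the cut can approach $\big(\sum_u r_{u,v}^{-\alpha/2}\big)^2$ rather than $\sum_u r_{u,v}^{-\alpha}$ --- which is exactly the effect a converse for $\alpha\in(2,3]$ must rule out. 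The paper's Lemma~\ref{thm:mimo} is built precisely for this: it works with the \emph{normalized} gains $\tilde h_{u,v}$ and proves an almost-sure bound on the largest singular value of $\widetilde{\bm{H}}$ via trace moments (the Catalan-number argument imported from \cite[Lemma 5.3]{ozg}), which yields a bound on $\log\det(\bm{I}+\widetilde{\bm{H}}^\dagger\bm{Q}\widetilde{\bm{H}})$ that is uniform over channel-dependent $\bm{Q}$, and it covers slow and fast fading by the same steps.

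Your proposed repair for slow fading does not close this hole. Entrywise fourth-moment bounds plus a union bound over the $O(n^2)$ entries of $\bm{H}^\dagger\bm{H}$ control each off-diagonal entry, but entrywise smallness does not give an operator-norm bound: a matrix whose entries are all of size $\log n$ can have operator norm of order $n\log n$, so the claimed $\lVert \bm{H}^\dagger\bm{H}-\E\,\bm{H}^\dagger\bm{H}\rVert_{\mathrm{op}}=O(\mathrm{polylog}(n))$ does not follow and is the hard part, not a patch. Moreover, the paper's route needs a second ingredient you omit: for an \emph{arbitrary} minimum-separation placement the quantity $\max_{v\in S^c}\sum_{u\in S}\lvert\tilde h_{u,v}\rvert^2$ need not be polylogarithmic (large empty regions near the cut make the normalizing sums $\sum_{\tilde v\in S^c}r_{u,\tilde v}^{-\alpha}$ small), so the paper first regularizes the placement by adding a rate-neutral auxiliary node in every square of side $\sqrt{2\log n}$, after which \cite[Lemma 5.3]{ozg} gives the $O(\log^3 n)$ bounds that produce the $\log^6 n$ factor. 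Your geometric estimates ($\sum_u r_{u,v}^{-\alpha}=O(\max\{1,d_v^{2-\alpha}\})$, the hypergeometric count of crossing pairs) are fine, but they address the easy part of the theorem.
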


Note that Theorem \ref{thm:converse} holds only with probability
$1-o(1)$ for different reasons for the slow and fast fading case. For
fast fading, this is due to the randomness in the selection of the
permutation traffic matrix. In other words, for fast fading, with high
probability we select a traffic matrix for which the theorem holds. For
the slow fading case, there is additional randomness due to the fading
realization. Here, with high probability we select a traffic matrix and
we experience a fading for which the theorem hold. 

Comparing Theorems~\ref{thm:achievability} and~\ref{thm:converse}, we
see that for $\alpha\in(2,3]$ the proposed hierarchical relaying scheme
is order optimal, in the sense that
\begin{equation*}
    \lim_{n\to\infty}\frac{\log(\rho^{\tsf{HR}}(n))}{\log(n)}
    = \lim_{n\to\infty}\frac{\log(\rho^{*}(n))}{\log(n)}
    = 1-\alpha/2.
\end{equation*}
Moreover, the rate it achieves is the same order as is
achievable in the case of randomly placed nodes.
Hence in the low path-loss regime $\alpha\in(2,3]$, the heterogeneity
caused by the arbitrary node placement has no effect on achievable
communication rates.

\subsection{High Path Loss Regime $\alpha > 3$}
\label{sec:high}

We now turn to the high path-loss regime $\alpha>3$. In the case of
\emph{randomly} placed nodes, multi-hop communication achieves a
per-node rate of $\rho^{\tsf{MH}}(n) = \Omega(n^{-1/2})$ with
probability $1-o(1)$ and is order optimal for $\alpha> 3$. For
\emph{arbitrarily} placed nodes, the situation is quite different as
Theorem~\ref{thm:adversary} shows. The proof of Theorem
\ref{thm:adversary} is contained in Section \ref{sec:adversary}.

\begin{theorem}
    \label{thm:adversary}
    Under either fast or slow fading, for any $\alpha >3$,
    for any $n$, there exists a node placement $V(n)$ with minimum
    separation $1/2$ such that for $\lambdauc(n)$ chosen uniformly at
    random from the set of all permutation traffic matrices, we have
    \begin{align*}
        \rho^*(n) & \leq 2^{2+5\alpha}n^{1-\alpha/2}, \\
        \rho^\textup{\tsf{MH}}(n) & \leq 4^\alpha n^{-\alpha/2},
    \end{align*}
    as $n\to\infty$ with probability $1-o(1)$.
\end{theorem}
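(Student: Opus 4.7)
My plan is to construct an explicit adversarial placement consisting of two tightly packed clusters in opposite corners of $A(n)$ and then apply a MIMO cut-set bound for the first inequality and a single-link bound for the second. Concretely, place $n/2$ nodes on a grid of spacing $1/\sqrt{2}$ inside the corner square $S_1 \defeq [0,\sqrt{n/8}\,]^2$ and place $n/2$ nodes symmetrically inside $S_2 \defeq [\sqrt{n}-\sqrt{n/8},\sqrt{n}\,]^2$. This configuration has minimum separation $1/\sqrt{2} > 1/2$, lies inside $A(n)$, and every cross-cluster pair $(u,v) \in S_1 \times S_2$ satisfies $r_{u,v} \geq \sqrt{n}/4$. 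Under a uniform random permutation $\lambdauc(n)$, the number $N$ of pairs with source in $S_1$ and destination in $S_2$ has mean $\E[N] = (n/2)^2/(n-1) \geq n/5$, and since swapping two entries in the permutation changes $N$ by at most a constant, a standard concentration bound for functions of random permutations gives $N \geq n/8$ with probability $1-o(1)$.

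For the bound on $\rho^*(n)$ I invoke the MIMO cut-set bound across $(S_1, S_2)$. Letting $H$ denote the $|S_2| \times |S_1|$ cross-cut channel matrix and $Q$ any admissible transmit covariance (with $Q_{uu} \leq 1$), the sum rate across the cut is at most $\log\det(I + H Q H^\dagger) \leq \tr(H Q H^\dagger) = \sum_{u \in S_1} Q_{uu} \sum_{v \in S_2} |h_{u,v}|^2 \leq |S_1|\,|S_2|\,(\sqrt{n}/4)^{-\alpha} = 4^{\alpha-1}\,n^{2-\alpha/2}$, where the linearization $\log(1+x) \leq x$ is tight enough because $n^{1-\alpha/2} = o(1)$ for $\alpha > 2$. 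Dividing by $N \geq n/8$ and absorbing slack into the leading constant gives $\rho^*(n) \leq 2^{2+5\alpha}\, n^{1-\alpha/2}$. In the slow fading case the channel magnitudes depend only on distances, so this step goes through deterministically and only the permutation randomness has to be controlled.

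The bound on $\rho^{\tsf{MH}}(n)$ uses the structural property distinguishing multi-hop from cooperative communication: every transmission is a single-user point-to-point link, and the reliable rate of any such hop from $u$ to $v$ is at most $\log(1 + r_{u,v}^{-\alpha})$. Since $A(n) \setminus (S_1 \cup S_2)$ contains no nodes and hence no potential relays, every bit sent from an $S_1 \to S_2$ pair must at some point traverse a hop of length at least $\sqrt{n}/4$; thus the end-to-end rate of any such pair is upper bounded by $\log(1 + (4/\sqrt{n})^\alpha) \leq 4^\alpha\, n^{-\alpha/2}$. Since at least one $S_1 \to S_2$ pair exists w.h.p., and $\rho^{\tsf{MH}}$ is a uniform per-pair guarantee, the bound transfers directly to $\rho^{\tsf{MH}}(n)$. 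The main obstacle I anticipate is making the phrase ``every bit must traverse a long hop'' fully rigorous inside the paper's formal definition of a multi-hop scheme---in particular, ruling out that clever scheduling could let a route effectively ``teleport'' information across the gap---but the combination of no coherent receive combining and the physical emptiness of the gap should close this cleanly by reducing any admissible cross-cut transmission to a single-user link of length $\geq \sqrt{n}/4$.
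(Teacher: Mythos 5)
Your multi-hop bound and the overall two-cluster strategy match the paper's proof in spirit, but the bound on $\rho^*(n)$ has a genuine gap at its central step. You write
$\tr(HQH^\dagger) = \sum_{u\in S_1} Q_{uu}\sum_{v\in S_2}|h_{u,v}|^2$,
but this identity holds only when $Q$ is diagonal. The cut-set bound you invoke allows \emph{full cooperation} among the transmit nodes, so $Q$ is an arbitrary positive semidefinite matrix with unit diagonal, and under the full-CSI assumption it may even depend on the phase realization. With off-diagonal correlation (coherent beamforming toward the far cluster), $\tr(QH^\dagger H)$ can exceed $\sum_u Q_{uu}\sum_v|h_{u,v}|^2$ by a factor as large as $|S_1|=\Theta(n)$; the correct generic bound is $\tr(QH^\dagger H)\leq \tr(Q)\,\lambda_{\max}(H^\dagger H)$, which without further work only gives $\rho^*(n)=O(n^{2-\alpha/2})$ per pair, a factor $n$ weaker than the claim. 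Ruling out this coherent-combining gain is precisely the hard part of the converse in this regime, and it is what the paper's Lemma~\ref{thm:mimo} supplies: a high-probability (indeed probability-one) bound on the largest singular value of the \emph{normalized} cross-cut matrix $\widetilde{\bm{H}}$, proved via the moment/Catalan-number argument of \cite[Lemma 5.3]{ozg} and Markov's inequality, which for the two-cluster geometry (where $\sqrt{n}/4\leq r_{u,v}\leq 2\sqrt{n}$) gives $\max_{v\in S^c}\sum_{u\in S}\lvert\tilde{h}_{u,v}\rvert^2\leq 2^{3\alpha}$ and hence $C(S,S^c)\leq 2^{2+5\alpha}n^{2-\alpha/2}$. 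For the same reason your remark that in slow fading ``the channel magnitudes depend only on distances, so this step goes through deterministically'' is not valid: the phases matter exactly because $Q$ need not be diagonal, and the paper handles slow fading inside Lemma~\ref{thm:mimo} by showing the ``good channel'' event has probability zero.

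Two smaller points. First, your explicit placement does not quite fit: $n/2$ nodes with minimum separation $1/\sqrt{2}$ require area on the order of $n/4$, but your corner squares have area $n/8$; the paper instead uses two full-height vertical strips $[0,\sqrt{n}/4]\times[0,\sqrt{n}]$ and $[\sqrt{n}/2,\sqrt{n}]\times[0,\sqrt{n}]$ with separation $1/2$, which both accommodate $n/2$ nodes and keep every cross-cut distance in $[\sqrt{n}/4,\,2\sqrt{n}]$ (the upper bound on $r_{u,v}$ is also needed to control the normalized gains). Second, your treatment of $\rho^{\tsf{MH}}(n)$ — some hop must span the empty gap of width $\sqrt{n}/4$, and a point-to-point link of that length carries at most $\log\bigl(1+(4/\sqrt{n})^{\alpha}\bigr)\leq 4^{\alpha}n^{-\alpha/2}$ — is essentially identical to the paper's one-line argument and is fine.
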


Comparing Theorem~\ref{thm:adversary} with Theorem
\ref{thm:achievability} shows that under adversarial node placement with
minimum-separation constraint the hierarchical relaying scheme is order
optimal even when $\alpha>3$. Moreover, Theorem~\ref{thm:adversary}
shows that there exist node placements satisfying a minimum separation
constraint for which hierarchical relaying achieves a rate of at least a
factor of order $n$ higher than multi-hop communication for any
$\alpha>3$. In other words, for those node placements cooperative
communication is necessary for order optimality also for any $\alpha>
3$, in stark contrast to the situation with random node placement, where
multi-hop communication is order optimal for all $\alpha >3$. 

Theorem~\ref{thm:adversary} suggests that it is the level of regularity
of the node placement that decides what scheme to choose for path-loss
exponent $\alpha > 3$. So far, we have seen two extreme cases: For
random node placement, resulting in very regular node placements
with high probability, only local cooperation is necessary and multi-hop
is an order-optimal communication scheme. For adversarial arbitrary node
placement, resulting in a very irregular node placement, global
cooperation is necessary and hierarchical relaying is an order-optimal
communication scheme. We now make this notion of regularity precise, and
show that, depending on the regularity of the node placement, an
appropriate ``interpolation'' between multi-hop and hierarchical
relaying is required for $\alpha > 3$ to achieve the optimal
performance. We refer to this ``interpolation'' scheme as
\emph{cooperative multi-hop} communication in the following.

Before we state the result, we need to introduce some notation.
Consider again a node placement $V(n)\subset A(n)$ with minimum
separation $r_{\min}\in(0,1)$. Divide $A(n)$ into squares of sidelength
$d(n)\leq\sqrt{n}$, and fix a constant $\mu\in(0,1]$. We say that
\emph{$V(n)$ is $\mu$-regular at resolution $d(n)$} if every such square
contains at least $\mu d^2(n)$ nodes. Note that every node placement is
trivially $1$-regular at resolution $\sqrt{n}$; a random node placement
can be shown to be $\mu$-regular at resolution $\log(n)$ with
probability $1-o(1)$ as $n\to\infty$ for any $\mu <1$; and nodes that
are placed on each point in the integer lattice inside $A(n)$ are 
$1$-regular at resolution $1$. 

The cooperative multi-hop scheme enables cooperative communication on
the scale of regularity $d(n)$. Neighboring squares of sidelength $d(n)$
cooperatively communicate with each other. To transmit between a source
and its destination, we use multi-hop communication over those squares.
In other words, we use cooperative communication at small scale $d(n)$, and
multi-hop communication at large scale $\sqrt{n}$. For regular node placements,
i.e., $d(n)=1$, the cooperative multi-hop scheme becomes the classical
multi-hop scheme. For very irregular node placement, i.e.,
$d(n)=n^{1/2}$, the cooperative multi-hop scheme becomes the
hierarchical relaying scheme discussed in the last section. 

The next theorem provides a lower bound on the per-node rate
$\rho^{\tsf{CMH}}(n)$ achievable with the cooperative multi-hop scheme.
The proof of the theorem can be found in Section \ref{sec:multihop}.

\begin{theorem}
    \label{thm:achievabilityx}
    Under fast fading, for any $\alpha >2$, $r_{\min}\in(0,1)$,
    $\mu\in(0,1)$, and $\delta\in(0,1/2)$ there exists 
    \begin{equation*}
        b_3(n) \geq n^{-O\big(\log^{\delta-1/2}(n)\big)}
    \end{equation*}
    such that for any $n$, node placement $V(n)$ with minimum separation
    $r_{\min}$, and permutation traffic matrix $\lambdauc(n)$, we have
    \begin{equation*}
        \rho^*(n) 
        \geq \rho^\textup{\tsf{CMH}}(n)
        \geq b_3(n){d^*}^{3-\alpha}(n)n^{-1/2},
    \end{equation*}
    where
    \begin{equation*}
        d^*(n)
        \defeq \min\{h: \text{$V(n)$ \emph{is $\mu$ regular at resolution} $h$}\}.
    \end{equation*}
    The same conclusion holds for slow fading with probability $1-o(1)$
    as $n\to\infty$. 
\end{theorem}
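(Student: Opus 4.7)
The plan is to combine the hierarchical relaying scheme of Theorem~\ref{thm:achievability}, applied \emph{locally} to clusters of sidelength $d^*(n)$, with classical multi-hop routing at the global scale $\sqrt{n}$. The $\mu$-regularity at resolution $d^*(n)$ is exactly what is needed to ensure that each cluster has enough cooperating nodes to furnish a useful ``hyper-link.'' The outline:

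First, I would tile $A(n)$ with squares $\{A_k\}$ of sidelength $d^*(n)$. By the definition of $d^*(n)$ and the $\mu$-regularity assumption, each $A_k$ contains at least $\mu d^{*2}(n)$ nodes of $V(n)$, each with the same minimum separation $r_{\min}$ as the full network. For each pair of adjacent tiles $(A_k, A_{k'})$ used as a hop in the outer multi-hop, I would regard the $\Theta(d^{*2}(n))$ nodes in $A_k \cup A_{k'}$ as a standalone extended network on an area of order $d^{*2}(n)$, pair each node in $A_k$ with a unique node in $A_{k'}$ to form a bipartite permutation traffic, and apply Theorem~\ref{thm:achievability} to this sub-network. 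This yields a per-pair rate of $d^{*2-\alpha-\beta(d^{*2}(n))}(n)$ and hence an aggregate hyper-link rate
\[
    R_{\mathrm{hop}}(n) \;\defeq\; d^{*4-\alpha-o(1)}(n),
\]
where the $o(1)$ exponent comes from $b_1$ in Theorem~\ref{thm:achievability} and can be absorbed into the claimed $n^{-O(\log^{\delta-1/2}(n))}$ factor by a suitable choice of the inner hierarchical-relaying parameter.

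Second, I would schedule the hyper-links with spatial reuse: $9$-color the tiles so that tiles of the same color are separated by at least $2 d^*(n)$, and use a TDMA schedule that activates only one color class at a time, each active tile transmitting cooperatively to one pre-assigned neighbor. With $\alpha > 2$ and tile separation $2d^*(n)$, the standard shell-summation bound shows that the aggregate out-of-cluster interference at any receiver, summed over concurrently active clusters, is bounded by a constant, inflating the effective noise of Theorem~\ref{thm:achievability} by only a multiplicative constant absorbed in $b_3(n)$. I would then route each of the $n$ source-destination pairs along a grid-straight path hopping through the tiles; each path uses $O(\sqrt{n}/d^*(n))$ hops, and a standard line-crossing argument for permutation traffic shows every hyper-link carries at most $O(\sqrt{n}\, d^*(n))$ flows. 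Thus every pair receives
\[
    \rho^{\tsf{CMH}}(n)
    \;\geq\; \frac{R_{\mathrm{hop}}(n)}{\Theta(\sqrt{n}\, d^*(n))}
    \;=\; \Theta\bigl(d^{*3-\alpha-o(1)}(n)\, n^{-1/2}\bigr),
\]
which is the bound claimed by the theorem. The slow-fading case follows by a union bound: Theorem~\ref{thm:achievability} fails on a single cluster with doubly-exponentially small probability $\exp\bigl(-2^{\Omega(\log^{1/2+\delta}(d^{*2}(n)))}\bigr)$, and there are at most $n/d^{*2}(n) \leq n$ clusters, so the overall failure probability remains $o(1)$.

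The main obstacle I expect is a uniformity issue hidden in the $o(1)$ exponent loss. Theorem~\ref{thm:achievability} is stated with a loss $b_1$ whose exponent depends on the inner network size, which here is $d^{*2}(n)$ and could range from $O(1)$ up to $\Theta(n)$. To obtain the single bound $b_3(n) \geq n^{-O(\log^{\delta-1/2}(n))}$, one must show that the aggregated loss from the inner scheme, combined with the constant factors from interference control and the $9$-TDMA reuse, can be uniformly bounded by $n^{-O(\log^{\delta-1/2}(n))}$ for every admissible value of $d^*(n)$. In the two extreme regimes this is straightforward: $d^*(n) = \Theta(\sqrt{n})$ reproduces Theorem~\ref{thm:achievability} directly, while $d^*(n) = \Theta(1)$ reproduces the classical nearest-neighbor multi-hop rate $n^{-1/2}$; the work lies in writing a single estimate that interpolates cleanly between these two endpoints, and in verifying that the external interference bound of Step~2 is compatible with the internal SINR requirements of the hierarchical-relaying recursion.
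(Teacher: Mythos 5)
Your construction matches the paper's in outline (tile at scale $d^*(n)$, use hierarchical relaying as the intra-cluster cooperation primitive, bound inter-cluster interference by a constant via spatial reuse and $\alpha>2$, then route over the resulting grid graph with $O(\sqrt{n}\,d^*(n))$ flows per edge), and for fast fading it essentially goes through: your worry about the loss exponent is actually benign, since $b_1(d^{*2}(n))\geq \exp\bigl(-O(\log^{1/2+\delta}(d^{*2}(n)))\bigr)\geq n^{-O(\log^{\delta-1/2}(n))}$ for every $d^*(n)\leq\sqrt{n}$. (Minor technicalities: Theorem~\ref{thm:achievability} is stated for square regions, so the hop should be set up inside a $2d^*(n)\times 2d^*(n)$ square containing both tiles, as the paper does with its two shifted tilings of bigger squares; and the interference compatibility is exactly why Lemmas~\ref{thm:mac} and~\ref{thm:bc} are proved for general bounded-power stationary ergodic additive noise.)

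The genuine gap is your slow-fading step. The union bound you invoke fails whenever $d^*(n)$ is small: the per-cluster failure probability from Theorem~\ref{thm:achievability} is $\exp\bigl(-2^{\Omega(\log^{1/2+\delta}(d^{*2}(n)))}\bigr)$, which for $d^*(n)=\Theta(1)$ is a constant and for $d^*(n)$ polylogarithmic is only $n^{-o(1)}$, while the number of clusters is $n/d^{*2}(n)$; so with high probability a non-negligible fraction of hyper-links fail and the grid routing breaks. Note this includes the endpoint $d^*(n)=\Theta(1)$, where you claim to ``reproduce'' classical multi-hop but your scheme as described still runs hierarchical relaying inside each cluster. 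The paper resolves this by splitting into cases: when $d(n)\geq n^{\frac{1}{2+\alpha}\log^{\delta-1/2}(n)}$ the per-cluster failure probability is super-polynomially small and the union bound over $n/d^2(n)$ clusters works; when $d(n)$ is below this threshold, the intra-cluster cooperation is replaced by plain TDMA multi-hop between adjacent tiles, whose performance under phase-only slow fading is deterministic, and whose extra rate loss (a factor of order $d^{4}(n)\leq n^{\frac{4}{2+\alpha}\log^{\delta-1/2}(n)}$ relative to the cooperative hop) is absorbed into the allowed $n^{-O(\log^{\delta-1/2}(n))}$ slack. Your proposal needs this (or an equivalent) case distinction to cover the slow-fading claim for all admissible $d^*(n)$.
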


Theorem~\ref{thm:achievabilityx} shows that if $V(n)$ is regular at
resolution $d^*(n)$ then a per-node rate of at least $\rho^\tsf{CMH}(n)
\geq {d^*}^{3-\alpha}(n)n^{-1/2-\beta(n)}$ is achievable, where, as
before, the ``loss'' term $\beta(n)$ converges to zero as $n\to\infty$
at a rate arbitrarily close to $O\big(\log^{-1/2}(n)\big)$. The
performance of the cooperative multi-hop scheme can intuitively be
understood as follows. The scheme achieves cooperation on a scale of
$d^2(n)$. This leads to a multi-antenna gain of order $d^2(n)$. On the
other hand, communication is over a distance of order $d(n)$, leading to
a power loss of order $d^{-\alpha}(n)$. Moreover, each
source-destination pair at a distance of order $n^{1/2}$ must transmit
their data over order $n^{1/2}d^{-1}(n)$ many hops, leading to a
multi-hop loss of $n^{-1/2}d(n)$. Combining these three factors results
in a per-node rate of $d^{3-\alpha}(n)n^{-1/2}$.

The next theorem shows that Theorem~\ref{thm:achievabilityx} is tight
under adversarial node placement under a constraint on the regularity.
The proof of the theorem is presented in Section \ref{sec:adversaryx}.

\begin{theorem}
    \label{thm:adversaryx}
    Under either fast or slow fading, for any $\alpha >3$, there exists
    $b_4(n) = O\big(\log^6(n)\big)$, such that for any $n$, and $d^*(n)$,
    there exists a node placement $V(n)$ with minimum separation $1/2$
    and $1/2$-regular at resolution $d^*(n)$ such that for $\lambdauc(n)$ chosen
    uniformly at random from the set of all permutation traffic
    matrices, we have
    \begin{equation*}
        \rho^*(n) \leq b_4(n){d^*}^{3-\alpha}(n)n^{-1/2},
    \end{equation*}
    with probability $1-o(1)$ as $n\to\infty$.
\end{theorem}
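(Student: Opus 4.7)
The strategy is to produce an adversarial placement that, subject to the $1/2$-regularity constraint at resolution $d^*(n)$, concentrates nodes within each $d^*\times d^*$ cell as tightly as possible, and then to bound $\rho^*(n)$ via a single MIMO cut-set inequality along a vertical midline cut. Partition $A(n)$ into $n/{d^*}^2(n)$ subsquares of side $d^*(n)$. In each subsquare, place ${d^*}^2(n)$ nodes on a regular grid of spacing $1/2$ occupying only a sub-square of side $d^*(n)/2$ in (say) its lower-left corner. This uses exactly $n$ nodes, has minimum separation $1/2$, and every $d^*\times d^*$ cell contains ${d^*}^2 \geq {d^*}^2/2$ nodes, so the placement is $1/2$-regular at resolution $d^*(n)$; crucially, clusters in neighboring cells are separated by an empty band of width $\geq d^*(n)/2$.

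Next I would perform the cut-set reduction. Let $V_L,V_R$ be the two halves of the node set separated by the vertical line $x=\sqrt{n}/2$ (chosen to coincide with a cell boundary), so $|V_L|=|V_R|=n/2$. For $\lambdauc(n)$ chosen uniformly among permutation matrices, a Chernoff/matching argument shows that with probability $1-o(1)$ at least $n/8$ source-destination pairs straddle the cut, so the aggregate rate across the cut is $\geq (n/8)\rho^*(n)$. Bound this aggregate rate by the fading MIMO cut-set capacity $C_{L\to R}$ using the same cut-set machinery already invoked in the proofs of Theorems~\ref{thm:converse} and~\ref{thm:adversary} (which handles the fast- and slow-fading cases modulo a $\log^6(n)$ factor and the high-probability event that the slow-fading realization is typical).

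Finally I would upper-bound $C_{L\to R}$ directly on the construction. Using Hadamard's inequality with per-node power constraint $\leq 1$,
\begin{equation*}
 C_{L\to R} \;\leq\; \sum_{v\in V_R}\log\Bigl(1+\sum_{u\in V_L}r_{u,v}^{-\alpha}\Bigr).
\end{equation*}
For $v$ in a right-side cluster at column index $i'\geq 1$ from the cut, and a left-side cluster at column $k\geq 1$ and vertical cell-offset $\Delta j$, the ${d^*}^2$ nodes of that cluster contribute $\Theta\bigl({d^*}^{2-\alpha}\bigl((k+i')^2+(\Delta j)^2\bigr)^{-\alpha/2}\bigr)$. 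The vertical sum over $\Delta j\in\mathbb{Z}$ converges (using $\alpha>2$) to $O\bigl((k+i')^{1-\alpha}\bigr)$, and the horizontal sum over $k\geq 1$ then yields $O({d^*}^{2-\alpha}{i'}^{2-\alpha})$. Summing over the ${d^*}^2$ nodes per right-side cluster, the $\sqrt{n}/d^*$ clusters per column, and then over $i'\geq 1$ produces
\begin{equation*}
 C_{L\to R} \;\leq\; O(\log^6 n)\cdot \frac{\sqrt{n}}{d^*(n)}\cdot {d^*}^{4-\alpha}(n) \;=\; O(\log^6 n)\cdot\sqrt{n}\,{d^*}^{3-\alpha}(n),
\end{equation*}
where the final convergent column sum $\sum_{i'\geq 1}{i'}^{2-\alpha}$ is exactly where the hypothesis $\alpha>3$ is used, and the polylogarithmic slack absorbs the MIMO-to-cut-capacity conversion borrowed from Theorem~\ref{thm:converse}. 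Dividing by $n/8$ yields the claimed bound $\rho^*(n)\leq b_4(n){d^*}^{3-\alpha}(n)n^{-1/2}$.

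\textbf{Main obstacle.} The delicate step is the two-stage summation: $\alpha>2$ secures the inner vertical sum, but the outer sum in $i'$ converges only when $\alpha>3$, and this is precisely what pins the bound to ${d^*}^{3-\alpha}\sqrt{n}$ rather than something growing with $\sqrt{n}/d^*$. The remaining technical care is to reuse the fast/slow fading MIMO converse bounds from the proof of Theorem~\ref{thm:converse} so that the Hadamard-type per-row bound above genuinely upper-bounds $C_{L\to R}$ (not merely its Gaussian surrogate), incurring no more than the $\log^6(n)$ factor and holding with probability $1-o(1)$ in the slow-fading case.
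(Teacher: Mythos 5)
Your route is, in essence, the paper's: exhibit a placement that is $1/2$-regular at resolution $d^*(n)$ but has an empty band of width $\Theta(d^*(n))$ at a cut, note that a uniformly random permutation puts $\Theta(n)$ source--destination pairs across the cut with probability $1-o(1)$, and bound the cut capacity so that for $\alpha>3$ the cross-cut received power is $O\big(\sqrt{n}\,{d^*}^{3-\alpha}(n)\big)$. The paper's construction is simpler than yours: $n/2$ nodes spread over $[0,(\sqrt{n}-d^*(n))/2]\times[0,\sqrt{n}]$ and $n/2$ over the right half, i.e., a single gap of width $d^*(n)/2$ at the cut, so the power estimate is a single one-dimensional tail computation rather than your three-fold lattice sum over vertical offsets, left columns and right columns; but your periodic per-cell clustering is equally admissible (it is even $1$-regular at resolution $d^*(n)$, has minimum separation $1/2$, and gives the same $O(\sqrt{n}\,{d^*}^{3-\alpha})$ with $\alpha>3$ entering in the final column sum, exactly as you say). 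The crossing-pair count and the final division by $\Theta(n)$ are also as in the paper (cf.\ the proof of Theorem~\ref{thm:adversary}).

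The one genuine gap is the displayed inequality $C_{L\to R}\le\sum_{v\in V_R}\log\big(1+\sum_{u\in V_L}r_{u,v}^{-\alpha}\big)$. Under per-node power constraints this is \emph{not} a valid upper bound on the MIMO cut capacity: the left nodes may cooperate, so the received power at $v$ is $(\bm{H}^\dagger\bm{Q}\bm{H})_{vv}$ for a generally non-diagonal $\bm{Q}$, which can be as large as $\big(\sum_{u}r_{u,v}^{-\alpha/2}\big)^2$ if phases align. Ruling out such coherent gains is precisely the content of Lemma~\ref{thm:mimo} (the paper's version of \cite[Lemma 5.3]{ozg}), which is what the paper invokes here, and its bound carries the extra factor $\max_{v\in V_R}\sum_{u\in V_L}\lvert\tilde{h}_{u,v}\rvert^2$ that you must show is polylogarithmic for your placement. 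You cannot simply cite the Section~\ref{sec:converse} machinery for this: its hypothesis that every $\sqrt{2\log n}$-square contains a node fails inside your empty bands, and the helper-node trick used there is not available, since adding nodes in the gap at the cut would inflate the cross-cut power back to order $\sqrt{n}$ (up to logarithms) and destroy the ${d^*}^{3-\alpha}$ factor. What is needed instead is a short direct estimate exploiting that the receiving side is dense at scale $d^*(n)$: every left node $u$ has $\Theta({d^*}^2(n))$ right nodes within distance $O\big(\max\{d^*(n),\mathrm{dist}(u,\text{cut})\}\big)$, so the normalizing denominator $\sum_{\tilde{v}\in V_R}r_{u,\tilde{v}}^{-\alpha}$ is bounded below accordingly and $\max_{v}\sum_{u}\lvert\tilde{h}_{u,v}\rvert^2$ comes out polylogarithmic. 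This is the "slight adaptation" of Section~\ref{sec:converse} that the paper alludes to; once it is supplied, your argument goes through and yields the stated bound.
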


As an example, assume that
\begin{equation*}
    d^*(n) = n^{\eta}
\end{equation*}
for some $\eta\geq 0$. Then Theorem~\ref{thm:achievabilityx} shows that
for any node placement of regularity $d^*(n)$ and $\alpha>3$,
\begin{equation*}
    \rho^\tsf{CMH}(n) \geq n^{(3-\alpha)\eta-1/2-\beta(n)},
\end{equation*}
where $\beta(n)$ converges to zero as $n\to\infty$ at a rate arbitrarily
close to $O\big(\log^{-1/2}(n)\big)$. In other words
\begin{equation*}
    \lim_{n\to\infty}\frac{\log(\rho^\tsf{CMH}(n))}{\log(n)} \geq (3-\alpha)\eta-1/2.
\end{equation*}
Moreover, by Theorem \ref{thm:adversaryx} there exist node
placements with same regularity such that for random permutation traffic
with high probability $\rho^*(n)$ is (essentially) of the same order,
in the sense that
\begin{equation*}
    \lim_{n\to\infty}\frac{\log(\rho^*(n))}{\log(n)} \leq (3-\alpha)\eta-1/2.
\end{equation*}
In particular, for $\eta = 0$ (i.e., regular node placement), and for
$\eta = \log\log(n)/\log(n)$ (i.e., random node placement), we
obtain the order $n^{-1/2}$ scaling as expected. For $\eta = 1/2$ (i.e.,
completely irregular node placement), we obtain the order
$n^{1-\alpha/2}$ scaling as in Theorems \ref{thm:achievability} and
\ref{thm:adversary}.

\section{Hierarchical Relaying Scheme}
\label{sec:desc}

This section describes the architecture of our hierarchical relaying
scheme. On a high level, the construction of this scheme is as follows.
Consider $n$ nodes $V(n)$ placed arbitrarily on the square region $A(n)$
with a minimum separation $r_{\min}$. Divide $A(n)$ into squarelets
of equal size. Call a squarelet \emph{dense}, if it contains a number of
nodes proportional to its area. For each source-destination pair, choose
such a dense squarelet as a \emph{relay}, over which it will transmit
information (see Figure \ref{fig:relay}).

\begin{figure}[!ht]
    \begin{center}
        \input{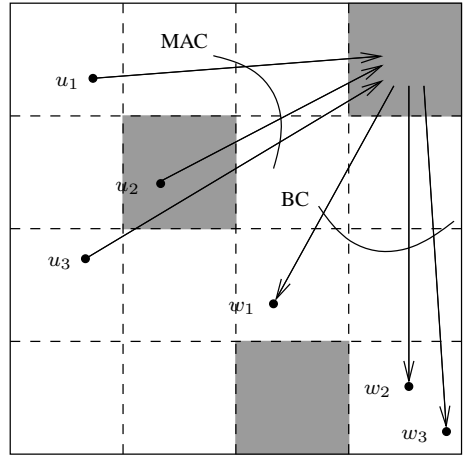}
    \end{center}

    \caption{Sketch of one level  of the hierarchical relaying scheme.
    Here $\{(u_i,w_i)\}_{i=1}^3$ are three source-destination pairs.
    Groups of source-destination pairs relay their traffic over dense
    squarelets, which contain a number of nodes proportional to their
    area (shaded). We time share between the different dense squarelets
    used as relays. Within all these relay squarelets the scheme is used
    recursively to enable joint decoding and encoding at each relay.
    }

    \label{fig:relay}
\end{figure}

Consider now one such relay squarelet and the nodes that are
transmitting information over it. If we assume for the moment that all
the nodes within the same relay squarelet could cooperate then we would
have a multiple access channel (MAC) between the source nodes and the
relay squarelet, where each of the source nodes has one transmit
antenna, and the relay squarelet (acting as one node) has many receive
antennas.  Between the relay squarelet and the destination nodes, we
would have a broadcast channel (BC), where each destination node has one
receive antenna, and the relay squarelet (acting again as one node) has
many transmit antennas. The cooperation gain from using this kind of
scheme arises from the use of multiple antennas for these multiple
access and broadcast channels.

To actually enable this kind of cooperation at the relay squarelet,
local communication within the relay squarelets is necessary. It can be
shown that this local communication problem is actually the same as the
original problem, but at a smaller scale. Hence we can use the same
scheme recursively to solve this subproblem. We terminate the recursion
after several iterations, at which point we use simple TDMA to bootstrap
the scheme.

The construction of the hierarchical relaying scheme is presented in
detail in Section \ref{sec:construction}. A back-of-the-envelope
calculation of the per-node rate it achieves is presented in Section
\ref{sec:rates}. A detailed analysis of the hierarchical relaying scheme
is presented in Sections \ref{sec:analysis} and \ref{sec:achievability}.

\subsection{Construction}
\label{sec:construction}

Recall that
\begin{equation*}
    A(b)\defeq [0,\sqrt{b}]^2
\end{equation*}
is the square region of area $b$. The scheme described here assumes that
$n$ nodes are placed arbitrarily in $A(n)$ with minimum separation
$r_{\min}\in(0,1)$.  We want to find some rate, say $\rho_0$, that can
be supported for all $n$ source-destination pairs of a given permutation
traffic matrix $\lambdauc(n)$. The scheme that is described below is
``recursive'' (and hence hierarchical) in the following sense.  In order
to achieve rate $\rho_0$ for $n$ nodes in $A(n)$, it will use as a
building block a scheme for supporting rate $\rho_1$ for a network of 
\begin{equation*}
    n_1 \defeq \frac{n}{2 \gamma(n)}
\end{equation*}
nodes over $A(a_1)$ (square of area $a_1$) with 
\begin{equation*}
    a_1 \defeq \frac{n}{\gamma(n)}
\end{equation*}
for any permutation traffic matrix $\lambdauc(n_1)$ of $n_1$ nodes.  Here the
\emph{branching factor} $\gamma(n)$ is a function such that $\gamma(n)
\to \infty$ as $n\to \infty$. We will optimize over the choice of
$\gamma(n)$ later.  The same construction is used for the scheme over
$A(a_1)$, and so on. In general, our scheme does the following at level
$\ell \geq 0$  of the hierarchy (or recursion). In order to achieve rate
$\rho_\ell$ for any permutation traffic matrix $\lambdauc(n_\ell)$ over 
\begin{equation*}
    n_\ell \defeq \frac{n}{2^\ell \gamma^\ell(n)}
\end{equation*}
nodes in $A(a_\ell)$, with
\begin{equation*}
    a_\ell \defeq \frac{n}{\gamma^\ell(n)},
\end{equation*}
use a scheme achieving rate $\rho_{\ell+1}$ over $n_{\ell +1}$ nodes in
$A(a_{\ell +1})$  for any permutation traffic matrix $\lambdauc(n_{\ell+1})$.
The recursion is terminated at some level $L(n)$ to be chosen later. 

We now describe how the hierarchy is constructed between levels $\ell$
and $\ell +1$ for $0\leq \ell < L(n)$. Each source-destination pair
chooses some squarelet as a relay over which it transmits its message.
This relaying of messages takes place in two phases
-- a multiple access phase and a broadcast phase. We first describe the
selection of relay squarelets, then the operation of the network during
the multiple access and broadcast phases, and finally the termination of
the hierarchical construction.

\subsubsection{Setting up Relays} 

Given $n_\ell$ nodes in $A(a_\ell)$, divide the square region
$A(a_\ell)$ into $\gamma(n)$ equal sized squarelets.  Denote them by
$\{A_k(a_{\ell+1})\}_{k=1}^{\gamma(n)}$.  Call a squarelet \emph{dense}
if it contains at least $n_\ell/2\gamma(n) = n_{\ell+1}$ nodes. In other
words, a dense squarelet contains a number of nodes of at least a
$1/2^{\ell+1}$ fraction of its area.  We show that since the nodes in
$A(a_\ell)$ have constant minimum separation $r_{\min}$, a squarelet can
contain at most $O(a_{\ell+1})$ (i.e.  $O(a_\ell/\gamma(n))$) nodes, and
hence that there are at least $\Theta(2^{-\ell}\gamma(n))$ dense
squarelets. Each source-destination pair chooses a dense squarelet such
that both the source and the destination are at a distance
$\Omega(\sqrt{a_{\ell+1}})$ from it. We call this dense squarelet the
\emph{relay} of this source-destination pair. We show that the relays
can be chosen such that each relay squarelet has at most $n_{\ell+1}$
communication pairs that use it as relay, and we assume this worst case
in the following discussion.

\subsubsection{Multiple Access Phase} 

Source nodes that are assigned to the same (dense) relay squarelet send
their messages simultaneously to that relay. We time share between the
$\Theta(2^{-\ell}\gamma(n))$ different relay squarelets.  If the nodes
in the relay squarelet could cooperate, we would be dealing with a MAC
with at most $n_{\ell+1}$ transmitters, each with one antenna, and one
receiver with at least $n_{\ell+1}$ antennas. In order to achieve this
cooperation, we use a hierarchical (i.e., recursive) construction. For
this recursive construction, assume that we have access to a
communication scheme to transmit data according to a permutation traffic
matrix $\lambdauc(n_{\ell+1})$ between $n_{\ell+1}$ nodes located in a
square of area $a_{\ell+1}$. We now show how this scheme at scale
$a_{\ell+1}$ can be used to construct a scheme for scale $a_{\ell}$ (see
Figure~\ref{fig:mac}).

\begin{figure*}[!ht]
    \hspace{-0.25cm}
    \scalebox{0.97}{
    \input{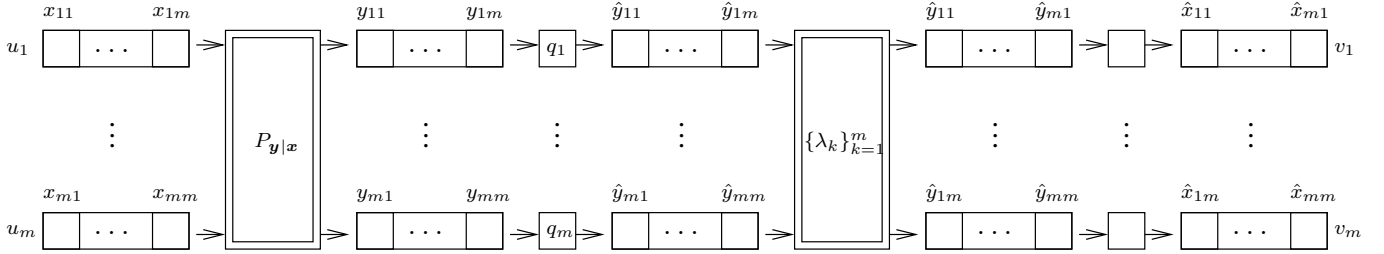}
    }
    \caption{Description of the multiple access phase at level $\ell$ in
    the hierarchy with  $m\defeq n_{\ell+1}$.  The first system block
    represents the wireless channel, connecting source nodes
    $\{u_i\}_{i=1}^{n_{\ell+1}}$ with relay nodes
    $\{v_i\}_{i=1}^{n_{\ell+1}}$. The second system block are quantizers
    $\{q_i\}_{i=1}^{n_{\ell+1}}$ used at the relay nodes. The third
    system block represents using $n_{\ell+1}$ times the communication
    scheme at level $\ell+1$ (organized as $n_{\ell+1}$ permutation
    traffic matrices $\{\lambdauc_k(n_{\ell+1})\}_{k=1}^{n_{\ell+1}}$) to
    ``transpose'' the matrix of quantized observations
    $\{\hat{y}_{ij}\}_{i,j=1}^{n_{\ell+1}}$. In other words, before the
    third system block, node $v_1$ has access to
    $\{\hat{y}_{1j}\}_{j=1}^{n_{\ell+1}}$, and after the third system
    block, node $v_1$ has access to
    $\{\hat{y}_{i1}\}_{i=1}^{n_{\ell+1}}$.  The fourth system block are
    matched filters used at the relay nodes. }

    \label{fig:mac}
\end{figure*}

Suppose there are $n_{\ell+1}$ source nodes $u_1,\dots, u_{n_{\ell+1}}$
(located anywhere in $A(a_\ell)$) that relay their message over the
$n_{\ell+1}$ relay nodes $v_1,\dots, v_{n_{\ell+1}}$ (located in the
same dense squarelet of area $a_{\ell+1}$). Each source node $u_i$
divides its message bits into $n_{\ell+1}$ parts of equal length.
Denote by $x_{ij}$ the encoded part $j$ of the message bits of node
$u_i$ ($x_{ij}$ is really a large sequence of channel symbols; to
simplify the exposition, we shall, however, assume it is only a single
symbol). The message parts corresponding to
$\{x_{ij}\}_{i=1}^{n_{\ell+1}}$ will be relayed over node $v_j$, as will
become clear in the following. Sources $\{u_i\}_{i=1}^{n_{\ell+1}}$,
transmit $\{x_{ij}\}_{i=1}^{n_{\ell+1}}$ at time $j$ for $j\in
\{1,\ldots n_{\ell+1}\}$. 

Let $y_{kj}$ be the observed channel output at relay $v_k$ at time $j$.
Note that $y_{kj}$ depends only on channel inputs
$\{x_{ij}\}_{i=1}^{n_{\ell+1}}$.  In order to decode the message parts
corresponding to $\{x_{ij}\}_{i=1}^{n_{\ell+1}}$ at relay node $v_j$, it
needs to obtain the observations $\{y_{ij}\}_{i=1}^{n_{\ell+1}}$ from
all other relay nodes. In other words, all relays need to exchange
information. For this, each relay $v_k$ quantizes its observation
$\{y_{kj}\}_{j=1}^{n_{\ell+1}}$ at an appropriate rate $K$ independent
of $n$ to obtain $\{\hy_{kj}\}_{j=1}^{n_{\ell+1}}$. Quantized
observation $\hy_{kj}$ is to be sent from relay $v_k$ to relay $v_j$.
Thus, each of the $n_{\ell+1}$ relay nodes now has a message of size $K$
for every other relay node. 

This communication demand within the relay squarelet can be organized as
$n_{\ell+1}$ permutation traffic matrices
$\{\lambdauc_j(n_{\ell+1})\}_{j=1}^{n_{\ell+1}}$ between the $n_{\ell+1}$
relay nodes. Note that these relay nodes are located in the same square
of area $a_{\ell+1}$.  In other words, we are now faced with the
original problem, but at smaller scale $a_{\ell+1}$. Therefore, using
$n_{\ell+1}$ times the assumed scheme for transmitting according to a
permutation traffic matrix for $n_{\ell+1}$ nodes in $A(a_{\ell+1})$,
relay $v_j$ can obtain all quantized observations
$\{\hy_{ij}\}_{i=1}^{n_{\ell+1}}$.  Now $v_j$ uses $n_{\ell+1}$ matched
filters on $\{\hy_{ij}\}_{i=1}^{n_{\ell+1}}$ to obtain estimates
$\{\hat{x}_{ij}\}_{i=1}^{n_{\ell+1}}$ of
$\{x_{ij}\}_{i=1}^{n_{\ell+1}}$. In other words, each node $v_j$
computes\footnote{Note that, since we assume full CSI, node $v_j$ has
access to the channel gains $\{h_{u_i,v_k}[j]\}_{i,k}$ at any time $t\geq j$.
In particular, this is the case at the time the matched filtering is
performed.}
\begin{equation*}
    \hat{x}_{ij} = \sum_{k=1}^{n_{\ell+1}} 
    \frac{h_{u_i,v_k}^\dagger[j]}
    {\sqrt{\sum_k \lvert h_{u_i,v_k}[j]\rvert^2}}\hat{y}_{kj}
\end{equation*}
for every $i\in\{1,\ldots,n_{\ell+1}\}$.  Using these estimates it then
decodes the messages corresponding to $\{x_{ij}\}_{i=1}^{n_{\ell+1}}$.

\subsubsection{Broadcast Phase} 

Nodes in the same relay squarelet then send their decoded messages
simultaneously to the destination nodes corresponding to this relay.  We
time share between the different relay squarelets. If the nodes in the
relay squarelet could cooperate, we would be dealing with a BC with one
transmitter with at least $n_{\ell+1}$ antennas and with at most
$n_{\ell+1}$ receivers, each with one antenna. In order to achieve this
cooperation, a similar hierarchical construction as for the MAC phase is
used. As in the MAC phase, assume that we have access to a scheme to
transmit data according to a permutation traffic matrix
$\lambdauc(n_{\ell+1})$ between $n_{\ell+1}$ nodes located in a square
of area $a_{\ell+1}$. We again use this scheme at scale
$a_{\ell+1}$ in the construction of the scheme for scale $a_{\ell}$ (see
Figure~\ref{fig:bc}).

\begin{figure*}[!ht]
    \hspace{-0.25cm}
    \scalebox{0.97}{
    \input{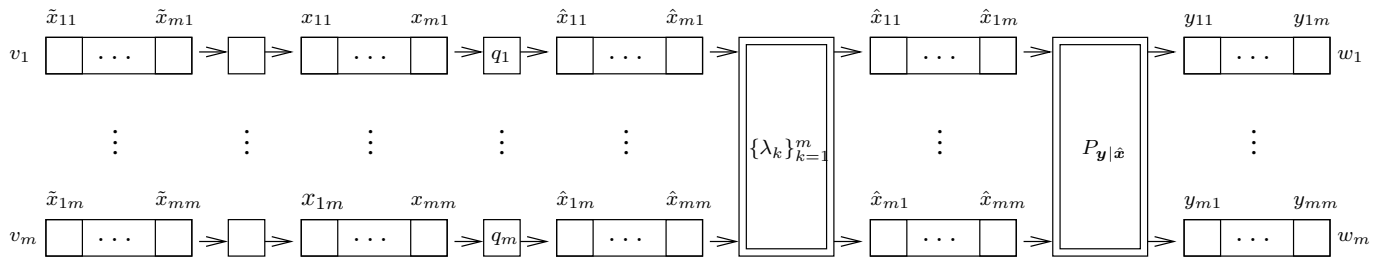}
    }

    \caption{Description of the broadcast phase at level $\ell$ in the
    hierarchy with $m\defeq n_{\ell+1}$.  The first system block
    represents transmit beamforming at each of the relay nodes
    $\{v_i\}_{i=1}^{n_{\ell+1}}$. The second system block are quantizers
    $\{q_i\}_{i=1}^{n_{\ell+1}}$ used at the relay nodes.  The third
    system block represents using $n_{\ell+1}$ times the communication
    scheme at level $\ell+1$ (organized as $n_{\ell+1}$ permutation
    traffic matrices $\{\lambdauc_k(n_{\ell+1})\}_{k=1}^{n_{\ell+1}}$) to
    ``transpose'' the matrix of quantized beamformed channel symbols
    $\{\hat{x}_{ij}\}_{i,j=1}^{n_{\ell+1}}$. In other words, before the
    third system block, node $v_1$ has access to
    $\{\hat{x}_{i1}\}_{i=1}^{n_{\ell+1}}$, and after the third system
    block, node $v_1$ has access to
    $\{\hat{x}_{1j}\}_{j=1}^{n_{\ell+1}}$. The fourth system block is
    the wireless channel, connecting relay nodes
    $\{v_i\}_{i=1}^{n_{\ell+1}}$ with destination nodes
    $\{w_i\}_{i=1}^{n_{\ell+1}}$.}

    \label{fig:bc}
\end{figure*}

Suppose there are $n_{\ell+1}$ relay nodes $v_1,\dots, v_{n_{\ell+1}}$
(located in the same dense squarelet of area $a_{\ell+1}$) that relay
traffic for $n_{\ell+1}$ destination nodes $w_1,\dots, w_{n_{\ell+1}}$
(located anywhere in $A(a_{\ell})$). Recall that at the end of the MAC
phase, each relay node $v_j$ has (assuming decoding was successful)
access to parts $j$ of the message bits of all source nodes
$\{u_i\}_{i=1}^{n_{\ell+1}}$. Node $v_j$ re-encodes these parts
independently; call $\{\tilde{x}_{ij}\}_{i=1}^{n_{\ell+1}}$ the encoded
channel symbols (as before, we assume $\tilde{x}_{ij}$ is only a single
symbol to simplify exposition). Relay node $v_j$ then performs transmit
beamforming on $\{\tilde{x}_{ij}\}_{i=1}^{n_{\ell+1}}$ for the
$n_{\ell+1}$ transmit antennas of $\{v_k\}_{k=1}^{n_{\ell+1}}$ to be
sent at time $T+j$ (for some appropriately chosen $T>0$ not depending on
$j$). Call $x_{kj}$ the resulting channel symbol to be sent from relay
node $v_k$. Then\footnote{Note that, since we only assume causal CSI,
relay node $v_j$ does not actually have
access to $\{h_{v_k,w_i}[T+j]\}_{k,i}$ at the time the beamforming is
performed. This problem can, however, be circumvented. The details are
provided in the proofs (see Lemma \ref{thm:bc}).}
\begin{equation*}
    x_{kj} = \sum_i \frac{h_{v_k,w_i}^\dagger[T+j]}
    {\sqrt{\sum_k \lvert h_{v_k,w_i}[T+j]\rvert^2}}\tilde{x}_{ij}.
\end{equation*}
In order to actually send this channel symbol, relay node $v_k$ needs to
obtain $x_{kj}$ from node $v_j$. Thus, again all relay nodes need to
exchange information. 

To enable local cooperation within the relay squarelet, each relay node
$v_j$ quantizes its beamformed channel symbols
$\{x_{kj}\}_{k=1}^{n_{\ell+1}}$ at an appropriate rate $K\log(n)$ with
$K$ independent of $n$ to obtain $\{\hat{x}_{kj}\}_{k=1}^{n_{\ell+1}}$.
Now, quantized value $\hat{x}_{kj}$ is sent from relay $v_j$ to relay
$v_k$.  Thus, each of the $n_{\ell+1}$ relay nodes now has a message of
size $K\log(n)$ for every other relay node. 

This communication demand within the relay squarelet can be organized as
$n_{\ell+1}$ permutation traffic matrices
$\{\lambdauc_k(n_{\ell+1})\}_{k=1}^{n_{\ell+1}}$ between the $n_{\ell+1}$
relay nodes. Note that these relay nodes are located in the same square
of area $a_{\ell+1}$.  Hence, we are again faced with the original
problem, but at smaller scale $a_{\ell+1}$. Using $n_{\ell+1}$ times the
assumed scheme for transmitting according to a permutation traffic
matrix for $n_{\ell+1}$ nodes in $A(a_{\ell+1})$, relay $v_k$ can obtain
all quantized beamformed channel symbols
$\{\hat{x}_{kj}\}_{j=1}^{n_{\ell+1}}$.  Now each $v_k$ sends
$\hat{x}_{kj}$ over the wireless channel at time instance $T+j$ (with
$T$ chosen to account for the preceding MAC phase and the local
cooperation in the BC phase). Call $y_{ij}$ the received channel output
at destination node $w_i$ at time instance $T+j$.  Using $y_{ij}$,
destination node $w_i$ can now decode part $j$ of the message bits of
its source node $u_i$.

\subsubsection{Spatial Re-Use and Termination of Recursion} 

The scheme does appropriately weighted time-division among different
levels $0\leq \ell \leq L(n)$.  Within any level $\ell \geq 1$, multiple
regions of the original square $A(n)$ of area $n$ are being operated in
parallel.  The details related to the effects of interference between
different regions operating at the same level of hierarchy are discussed
in the proofs. 

The recursive construction terminates at some large enough level $L =
L(n)$ (to be chosen later).  At this scale, we have $n_L$ nodes in area
$A(a_L)$. A permutation traffic matrix at this level comprises $n_L$
source-destination pairs. These transmissions are performed using TDMA.
Again, multiple regions in the original square of area $n$ at level $L$
are active simultaneously.

\subsection{Achievable Rates}
\label{sec:rates}

Here we present a back-of-the-envelope calculation of the per-node
rate $\rho^{\tsf{HR}}(n)$ achievable with the hierarchical relaying scheme
described in the previous section. The complete proof is stated in Section
\ref{sec:achievability}. We assume throughout that long block codes and
corresponding optimal decoders are used for transmission. 

Instead of computing the rate achieved by hierarchical relaying, it will
be convenient to instead analyze its inverse, i.e., the time utilized
for transmission of a single message bit from each source to its
destination under a permutation traffic matrix $\lambdauc(n)$. Using the
hierarchical relaying scheme, each message travels through $L$ levels
of the hierarchy. Call $\tau_\ell(n)$ the amount of time spent for the
transmission of one message bit between each of the $n_\ell$
source-destination pairs  at level $\ell$ in the hierarchy. We compute
$\tau_\ell(n)$ recursively. 

At any level $\ell \geq 1$, there are multiple regions of area $a_\ell$
operating at the same time. Due to the spatial re-use, each of these
regions gets to transmit a constant fraction of time. It can be shown
that the addition of interference due to this spatial re-use leads only
to a constant loss in achievable rate.  Hence the time required to send
one message bit is only a constant factor higher than the one needed if
region $A(a_\ell)$ is considered separately. Consider now one such
region $A(a_\ell)$. By the time sharing construction, only one of its
$\Theta(2^{-\ell}\gamma(n))$ dense relay squarelets of area $a_{\ell+1}$
is active at any given moment. Hence the time required to operate all
relay squarelets is a $\Theta(2^{-\ell}\gamma(n))$ factor higher than
for just one relay squarelet separately. Consider now one such relay
squarelet, and assume $n_{\ell+1}$ source nodes in $A(a_{\ell})$
communicate each $n_{\ell+1}$ message bits to their respective
destination nodes through a MAC phase and BC phase with the help of the
$n_{\ell+1}$ relay nodes in this relay squarelet of area $a_{\ell+1}$. 

In the MAC phase, each of the $n_{\ell+1}$ sources simultaneously sends
one bit to each of the $n_{\ell+1}$ relay nodes. The total time for
this transmission is composed of two terms.
\begin{enumerate}
    \item[i)] Transmission of $n_{\ell+1}$ message bits from each of the
        $n_{\ell+1}$ source nodes to those many relay nodes. Since we
        time share between $\Theta(2^{-\ell}\gamma(n))$ relay
        squarelets, we can transmit with an average power constraint of
        $\Theta(2^{-\ell}\gamma(n))$ during the time a relay squarelet
        is active, and still satisfies the overall average power
        constraint of $1$. With this ``bursty'' transmission strategy,
        we require a total of
        \begin{equation}
            \label{eq:mac1}
            O\bigg(n_{\ell+1} \frac{a_\ell^{\alpha/2}}{2^{-\ell}\gamma(n)n_{\ell+1}}\bigg) \\
            = O\big(n_{\ell+1}4^\ell \gamma^{\ell(1-\alpha/2)}(n)n^{\alpha/2-1}\big)
        \end{equation}
        channel uses to transmit $n_{\ell+1}$ bits per source node. The
        terms on the left-hand side of~\eqref{eq:mac1} can be understood
        as follows: $n_{\ell+1}$ is the number of bits to be
        transmitted; $a_\ell^{\alpha/2}$ is the power loss since most
        nodes communicate over a distance of $\Theta(a_\ell^{1/2})$;
        $2^{-\ell}\gamma(n)$ is the average transmit power;
        $n_{\ell+1}$ is the multiple-antenna gain, since we have
        that many transmit and receive antennas.
    \item[ii)] We show that constant rate quantization of the received
        observations at the relays is sufficient. Hence the $n_{\ell+1}$
        bits for all sources generate $O(n_{\ell+1})$ transmissions at
        level $\ell+1$ of the hierarchy. Therefore, 
        \begin{equation}
            \label{eq:mac2}
            O(n_{\ell+1} \tau_{\ell+1}(n))
        \end{equation}
        channel uses are needed to communicate all quantized
        observations to their respective relay nodes.
\end{enumerate}
Combining~\eqref{eq:mac1} and~\eqref{eq:mac2}, accounting for the factor
$2^{-\ell}\gamma(n)$ loss due to time division between relay
squarelets, we obtain that the transmission time for one message bit
from each source to the relay squarelet in the MAC phase at level $\ell$
is
\begin{equation}
    \label{eq:be0}
    \tau_\ell^{\tsf{MAC}}(n) 
    = O\Big(2^\ell \gamma^{1+\ell(1-\alpha/2)}(n) n^{\alpha/2-1} + \tau_{\ell+1}(n)\Big).
\end{equation}

Next, we compute the number of channel uses per message bit received by the
destination nodes in the BC phase.  Similar to the MAC phase, each of the
$n_{\ell+1}$ relay nodes has $n_{\ell+1}$ message bits out of which one
bit is to be transmitted to each of the $n_{\ell+1}$ destination nodes.
Since there are $n_{\ell+1}$ relay nodes, each destination node receives
$n_{\ell+1}$ message bits. As before the required transmission time has
two components.
\begin{enumerate}
    \item[i)] Transmission of the encoded and quantized message bits from each
        of the $n_{\ell+1}$ relay nodes to all other relay nodes at
        level $\ell+1$ of the hierarchy. We show that each message bit
        results in $O\big( (\ell+1)\log n\big)$ quantized bits.
        Therefore, $O\big(n_{\ell+1} (\ell+1)\log n\big)$ bits need to
        be transmitted from each relay node. This requires 
        \begin{equation}
            \label{eq:bc1}
            O\big(n_{\ell+1} (\ell+1) \log (n) \tau_{\ell+1}(n)\big)
        \end{equation}
        channel uses.
    \item[ii)] Transmission of $n_{\ell+1}$ message bits from the relay nodes to 
        each destination node. As before, we use bursty transmission with an
        average power constraint of $\Theta(2^{-\ell}\gamma(n))$
        during the fraction $\Theta(2^{\ell}\gamma^{-1}(n))$ of time
        each relay squarelet is active (this satisfies the overall
        average power constraint of $1$). Using this bursty strategy
        requires
        \begin{equation}
            \label{eq:bc2}
            O\bigg(n_{\ell+1} \frac{a_\ell^{\alpha/2}}{2^{-\ell}\gamma(n)n_{\ell+1}}\bigg) \\
            = O\big(n_{\ell+1}4^\ell \gamma^{\ell(1-\alpha/2)}(n)n^{\alpha/2-1}\big)
        \end{equation}
        channel uses for transmission of $n_{\ell+1}$ bits per
        destination node. As in the MAC phase, $n_{\ell+1}$ in the left
        hand side of~\eqref{eq:bc2} can be understood as the number of
        bits to be transmitted, $a_\ell^{\alpha/2}$ as the power loss
        for communicating over distance $\Theta(a_\ell^{1/2})$,
        $2^{-\ell}\gamma(n)$ as the average transmit power, and
        $n_{\ell+1}$ as the multiple-antenna gain.
\end{enumerate}
Combining~\eqref{eq:bc1} and~\eqref{eq:bc2}, accounting for a factor 
$2^{-\ell}\gamma(n)$ loss due to time division between relay squarelets,
the transmission time for one message bit from the relays
to each destination node in the BC phase at level $\ell$ is
\begin{equation}
    \label{eq:be1}
    \tau_\ell^{\tsf{BC}}(n) 
    = O\Big(2^\ell \gamma^{1+\ell(1-\alpha/2)}(n) n^{\alpha/2-1} \\
    + (\ell+1) \log (n) \tau_{\ell+1}(n)\Big).
\end{equation}

From \eqref{eq:be0} and \eqref{eq:be1}, we obtain the following recursion
\begin{align}
    \label{eq:be2}
    \tau_\ell(n) 
    & = \tau_\ell^{\tsf{MAC}}(n) + \tau_\ell^{\tsf{BC}}(n) \nonumber\\
    & = O\Big(2^\ell \gamma^{\ell(1-\alpha/2) + 1}(n) n^{\alpha/2-1} 
    + (\ell+1) \log (n) \tau_{\ell+1}(n)\Big) \nonumber\\
    & = O\Big(2^L \gamma(n) n^{\alpha/2-1} 
    + L\log (n) \tau_{\ell+1}(n)\Big),
\end{align}
where we have used $\alpha>2$. This recursion holds for all $0 \leq \ell
< L$. At level $L$, we use TDMA among $n_L$ nodes in region $A(a_L)$
with a permutation traffic matrix $\lambdauc(n_L)$. Each of the $n_L$
source-destination pairs uses the wireless channel for $1/n_L$ fraction
of the time at power $O(n_L)$, satisfying the average power constraint.
Assuming the received power is less than $1$ for all $n$ (so that we
operate in the power limited regime), we can achieve a rate of at least
$\Omega(a_L^{-\alpha/2})$ between any source-destination pair.
Equivalently
\begin{align}
    \label{eq:be3}
    \tau_L(n) 
    & = O(a_L^{\alpha/2}) \nonumber\\
    & = O\big(n^{\alpha/2} \gamma^{-L\alpha/2}(n)\big) \nonumber\\
    & = O\big(n^{\alpha/2} \gamma^{-L}(n)\big).
\end{align}

Combining \eqref{eq:be2} and \eqref{eq:be3}, we have
\begin{align}
    \label{eq:be4}
    \tau_0(n) 
    & = O\big(n^{\alpha/2-1}2^L\gamma(n)+ L \log(n) \tau_1(n)\big) \nonumber\\
    & = \ldots \nonumber\\
    & = O\Big(n^{\alpha/2-1}\big(L\log(n)\big)^L 2^L\gamma(n) 
    + \big(L\log(n)\big)^L \tau_L(n)\Big) \nonumber\\
    & = O\Big(n^{\alpha/2-1}\big(L\log(n)\big)^L \big(2^L\gamma(n) 
    + n \gamma^{-L}(n)\big)\Big).
\end{align}
The term
\begin{equation*}
    \big(L\log(n)\big)^L \big(2^L\gamma(n) + n \gamma^{-L}(n)\big)
\end{equation*}
is the ``loss'' factor over the desired order $n^{\alpha/2-1}$ scaling,
and we now choose the branching factor $\gamma(n)$ and the hierarchy 
depth $L\defeq L(n)$ to make it small. Fix a $\delta\in(0,1/2)$ and set
\begin{align*}
    L(n) & \defeq \log^{1/2-\delta}(n), \\
    \gamma(n) & \defeq n^{1/(L(n)+1)}. 
\end{align*}
With this 
\begin{align*}
    \big(L(n)\log(n)\big)^{L(n)} 
    & \leq n^{2\log^{-1/2-\delta}(n)\log\log(n)}, \\
    2^{L(n)}\gamma(n)
    & \leq n^{\log^{-1/2-\delta}(n)+\log^{\delta-1/2}(n)}, \\
    n\gamma^{-L(n)}(n)
    & \leq n^{\log^{\delta-1/2}(n)}.
\end{align*}
Since $\delta>0$, the $n^{\log^{\delta-1/2}(n)}$ term dominates in 
\eqref{eq:be4}, and we obtain 
\begin{align*}
    \tau_0(n) 
    & \leq \tilde{b}(n)n^{\alpha/2 - 1}, 
\end{align*}
where 
\begin{equation*}
    \tilde{b}(n) \leq n^{O(\log^{\delta-1/2}(n))}.
\end{equation*}
Hence the per-node rate of the hierarchical relaying scheme is lower
bounded as
\begin{equation*}
    \rho^{\tsf{HR}}(n) 
    = 1/\tau_0(n) 
    \geq b(n)n^{1-\alpha/2},
\end{equation*}
with 
\begin{equation*}
    b(n) \geq n^{-O(\log^{\delta-1/2}(n))}.
\end{equation*}
Note that to minimize the loss term, we should choose
$\delta>0$ to be small.

\section{Cooperative Multi-Hop Scheme}
\label{sec:schemes_multihop}

In this section, we provide a brief description of the cooperative
multi-hop scheme. The details of the construction and the analysis of its
performance can be found in Section \ref{sec:multihop}.

Recall that a node placement $V(n)$ is $\mu$-regular at resolution
$d(n)$ if every square $[id(n),(i+1)d(n)]\times[jd(n),(j+1)d(n)]$ for
some $i,j\in\mbb{N}$ contains at least $\mu d^2(n)$ nodes. Given such a
node placement $V(n)$, divide it into squares of sidelength $d(n)$.
Consider four adjacent squares, combined into a bigger square of
sidelength $2d(n)$. By the regularity assumption on $V(n)$, this bigger
square contains at least $4\mu d^2(n)$ nodes. Hence we can apply the
hierarchical relaying scheme introduced in the last section to support
any permutation traffic within this bigger square at a per-node rate of
\begin{equation*}
    b(n)(d^2(n))^{1-\alpha/2} = b(n)d^{2-\alpha}(n),
\end{equation*}
where $b(n)$ is essentially of order $n^{-\log^{-1/2}(n)}$. By properly
choosing the permutation traffic matrices within every possible such
bigger square of sidelength $2d(n)$, this creates a equivalent
communication graph with $n/d^2(n)$ nodes each corresponding to a square
of sidelength $d(n)$ in $A(n)$, and with edges between nodes
corresponding to neighboring squares. With the above communication
procedure and appropriate spatial re-use, each such edge has a capacity
of 
\begin{equation*}
    d^2(n) b(n)d^{2-\alpha}(n)=b(n)d^{4-\alpha}(n). 
\end{equation*}
The resulting communication graph is depicted in Figure \ref{fig:backbone2}.
\begin{figure}[!ht]
    \begin{center}
        \input{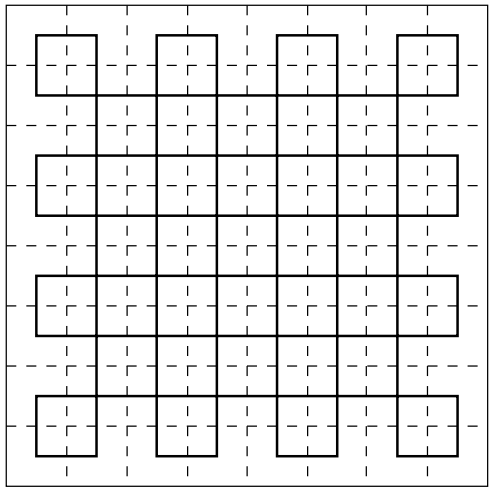}
    \end{center}

    \caption{Communication graph (in bold) resulting from the
    construction of the cooperative multi-hop scheme. The entire square
    has sidelength $\sqrt{n}$, and the dashed squares have sidelength
    $d(n)$. Each (bold) edge in the communication graph corresponds to
    using the hierarchical relaying scheme between the nodes in the
    adjacent squares of sidelength $d(n)$.}

    \label{fig:backbone2}
\end{figure}

Now, to send a message from a source node in $V(n)$ to its destination
node, we first locate the squares of sidelength $d(n)$ they are located
in. We then route the message over the edges of the communication graph
constructed above in a multi-hop fashion. By the construction of the
communication graph, each such edge is implemented using the
hierarchical relaying scheme. In other words, we perform multi-hop
communication over distance $\sqrt{n}$ with hop length $d(n)$, and each
such hop is implemented using hierarchical relaying over distance
$d(n)$.  Since each edge in the communication graph has a capacity of
$b(n)d^{4-\alpha}(n)$ and has to support roughly $n^{1/2}d(n)$
source-destination pairs, we obtain a per-node rate of 
\begin{align*}
    \rho^{\tsf{CMH}}(n) 
    & \geq b(n)d^{4-\alpha}(n)n^{-1/2}d^{-1}(n)  \\
    & = b(n)d^{3-\alpha}(n)n^{-1/2}
\end{align*}
per source-destination pair.

\section{Analysis of the Hierarchical Relaying Scheme}
\label{sec:analysis}

In this section, we analyze in detail the hierarchical relaying scheme.
Throughout Sections~\ref{sec:relay} to \ref{sec:bc}, we consider
communication at level $\ell$, $0\leq \ell< L=L(n)$, of the hierarchy.
All constants $K_i$ are independent of $\ell$.  

Recall that at level $\ell$, we have a square region $A(a_{\ell})$ of
area 
\begin{equation*}
    a_{\ell}\defeq \frac{n}{\gamma^\ell(n)}
\end{equation*}
containing 
\begin{equation*}
    n_{\ell} \defeq \frac{n}{2^\ell \gamma^\ell(n)}
\end{equation*}
nodes $V(n_\ell)$. We divide $A(a_\ell)$ into $\gamma(n)$ squarelets of
area $a_{\ell+1}$.  Recall that a squarelet of area $a_{\ell+1}$ in
level $\ell$ of the hierarchy is called dense if it contains at least
$n_{\ell+1}$ nodes.  We impose a power constraint of $P_\ell(n) =
\Theta(2^{-\ell}\gamma(n))$ during the time any particular relay
squarelet is active. Since we time share between
$\Theta(2^{-\ell}\gamma(n))$ relay squarelets, this satisfies the
overall average power constraint (by choosing constants appropriately). 

Since other regions of area $a_\ell$ are active at the same time as the
one under consideration, we have to deal with interference.  To this
end, we consider a slightly more general noise model that includes the
experienced interference at the relay squarelets. More precisely, we
assume that, for all $u\in V(n_\ell)$, the additive noise term
$\{z_u[t]\}_{t}$ is independent of the signal $\{x_u[t]\}_{t}$ and of
the channel gains $\{h_{u,v}[t]\}_{v,t}$; that the noise term is
stationary and ergodic across time $t$, but with arbitrary dependence
across nodes $u$; and that the noise has zero mean and bounded power
$N_0$ independent of $n$. Note that we do not require the additive noise
term to be Gaussian. In the above, $N_0$ accounts for both noise (which
has power $1$ in the original model), as well as interference. We show
in Section~\ref{sec:achievability} that these assumptions are valid.

Recall the following choice of $\gamma(n)$ and $L(n)$:
\begin{equation}
    \label{eq:gammadef}
    \begin{aligned}
        L(n) & \defeq \log^{1/2-\delta}(n), \\
        \gamma(n) & \defeq n^{1/(L(n)+1)},
    \end{aligned}
\end{equation}
with $\delta\in(0,1/2)$ independent of $n$. This choice satisfies
\begin{equation}
    \label{eq:assumptions}
    \begin{aligned}
        \gamma(n) & \leq \gamma(\tilde{n}) & \text{if $n\leq\tilde{n}$}, \\
        \gamma^{L(n)}(n) & \leq n & \text{for all $n$}, \\
        2^{-L(n)}\gamma(n) & \to \infty & \text{as $n\to\infty$},
    \end{aligned}
\end{equation}
The first condition in \eqref{eq:assumptions} implies that the number of
squarelets $\gamma(n)$ we divide $A(n)$ into increases in $n$. The
second condition implies the squarelet area $a_{L(n)}$ at the last level
of the hierarchy is bigger than $1$. As we shall see, the third
condition implies that the number of dense squarelets at the last level
(and hence at every level) grows unbounded as $n\to\infty$ (see Lemma
\ref{thm:structure} below). 

Throughout Section~\ref{sec:analysis}, we consider the fast fading
channel model. Slow fading is discussed in Section~\ref{sec:slow}.

\subsection{Setting up Relays}
\label{sec:relay}

The first lemma states that the minimum-separation requirement $r_{\min}\in(0,1)$
implies that a constant fraction of squarelets must be dense. We point
out that this is the only consequence of the minimum-separation
requirement used to prove Theorem~\ref{thm:achievability}.
Thus Theorem~\ref{thm:achievability} remains valid if we just assume
that Lemma~\ref{thm:structure} below holds directly. See also
Section~\ref{sec:dmin} for further details.

\begin{lemma}
    \label{thm:structure}
    For any $V(n_\ell)\subset A(a_\ell)$ with $\card{V(n_\ell)}\geq
    n_\ell$ and with minimum separation $r_{\min}\in(0,1)$,  each of its
    squarelets of area $a_{\ell+1}$ contains at most $K_1
    a_\ell/\gamma(n)$ nodes, and there are at least $K_2
    2^{-\ell}\gamma(n)$ dense squarelets.
\end{lemma}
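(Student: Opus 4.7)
The lemma has two parts, and the second follows from the first by a pigeonhole/counting argument. The only input that really matters is the minimum-separation assumption $r_{\min}>0$.

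\textbf{Part 1 (per-squarelet upper bound).} The plan is a standard packing estimate. Around every node $u\in V(n_\ell)$, place an open disk of radius $r_{\min}/2$; by the minimum-separation hypothesis these disks are pairwise disjoint. Fix a squarelet $S$ of area $a_{\ell+1}=a_\ell/\gamma(n)$ and side $\sqrt{a_{\ell+1}}$. Every disk centred at a node in $S$ is contained in the $r_{\min}/2$-neighbourhood of $S$, which is a region of area at most $(\sqrt{a_{\ell+1}}+r_{\min})^2$. Comparing areas yields
\begin{equation*}
|V(n_\ell)\cap S|\cdot \pi (r_{\min}/2)^2 \;\le\; \bigl(\sqrt{a_{\ell+1}}+r_{\min}\bigr)^2.
\end{equation*}
Since $r_{\min}<1\le\sqrt{a_{\ell+1}}$ (by the second condition in \eqref{eq:assumptions}, $a_{\ell+1}\ge 1$), the right-hand side is at most $4 a_{\ell+1}$, so $|V(n_\ell)\cap S|\le K_1 a_{\ell+1}=K_1 a_\ell/\gamma(n)$ with $K_1\defeq 16/(\pi r_{\min}^2)$ (adjusting the constant absorbs the lower-order terms).

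\textbf{Part 2 (lower bound on dense squarelets).} Let $D$ be the number of dense squarelets, so $\gamma(n)-D$ squarelets each contain strictly fewer than $n_{\ell+1}=n_\ell/(2\gamma(n))$ nodes. The total node count splits as
\begin{equation*}
n_\ell \;\le\; |V(n_\ell)| \;=\; \sum_{S\text{ dense}} |V(n_\ell)\cap S| + \sum_{S\text{ not dense}} |V(n_\ell)\cap S|
\;\le\; D\cdot K_1\frac{a_\ell}{\gamma(n)} + (\gamma(n)-D)\cdot \frac{n_\ell}{2\gamma(n)}.
\end{equation*}
The non-dense contribution is at most $n_\ell/2$, so $D\cdot K_1 a_\ell/\gamma(n)\ge n_\ell/2$. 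Using $n_\ell/a_\ell=2^{-\ell}$, this rearranges to
\begin{equation*}
D \;\ge\; \frac{\gamma(n)}{2K_1}\cdot\frac{n_\ell}{a_\ell} \;=\; \frac{1}{2K_1}\,2^{-\ell}\gamma(n),
\end{equation*}
which is the desired bound with $K_2\defeq 1/(2K_1)$.

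\textbf{Main obstacle.} There is no serious obstacle; both parts are elementary counting. The only point requiring mild care is justifying that $\sqrt{a_{\ell+1}}\ge r_{\min}$ so that $(\sqrt{a_{\ell+1}}+r_{\min})^2\le 4 a_{\ell+1}$; this is guaranteed by the choice \eqref{eq:gammadef} which ensures $a_{L(n)}\ge 1$ and hence $a_{\ell+1}\ge 1> r_{\min}^2$ for all $\ell<L(n)$. The constants $K_1,K_2$ thus depend only on $r_{\min}$ and not on $\ell$ or $n$, as required.
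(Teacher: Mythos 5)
Your proof is correct and follows essentially the same route as the paper's: a disk-packing argument (radius $r_{\min}/2$, squarelet enlarged by $r_{\min}$, using $a_{\ell+1}\geq 1$ from \eqref{eq:assumptions}) for the per-squarelet bound, and a node-counting argument over dense versus non-dense squarelets for the second claim. Your rearrangement in Part 2 (bounding the non-dense total directly by $n_\ell/2$) is a slightly cleaner way to reach the same constant $K_2 = 1/(2K_1)$, and the differing value of $K_1$ is immaterial.
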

\begin{proof}
    Put a circle of radius $r_{\min}/2$ around each node. By the
    minimum-separation requirement, these circles do not intersect. Each
    node covers an area of $\pi r_{\min}^2/4$. Increasing the sidelength of
    each squarelet by $r_{\min}$, this provides a total area of
    \begin{equation*}
        \big(\sqrt{a_\ell/\gamma(n)}+r_{\min}\big)^2
        \leq \frac{a_\ell}{\gamma(n)}(1+r_{\min})^2
    \end{equation*}
    in which the circles around these nodes are packed. Here we have used that
    $\gamma^{\ell+1}(n)\leq n$ by~\eqref{eq:assumptions}, and therefore
    \begin{equation*}
        \gamma(n) \leq n/\gamma^\ell(n) = a_\ell.
    \end{equation*}
    Hence there can be at most $K_1 a_\ell/\gamma(n)$ nodes per squarelet with 
    \begin{equation*}
        K_1 \defeq 4\frac{(1+r_{\min})^2}{\pi r_{\min}^2}.
    \end{equation*} 
    Note that, since $r_{\min} < 1$, we have $K_1 > 1$. 

    Let $d(n_\ell)$ be the number of dense squarelets in $A(a_\ell)$,
    and therefore $\gamma(n)-d(n_\ell)$ is the number of squarelets that
    are not dense. By the argument in the last paragraph, each dense
    squarelet contains at most $K_1 a_\ell/\gamma(n)$ nodes, and those
    squarelets that are not dense contain less than $n_{\ell+1}$ nodes
    by the definition of dense squarelets. Hence $d(n_\ell)$ must
    satisfy
    \begin{equation*}
        d(n_\ell) K_1 a_\ell/\gamma(n) +\big(\gamma(n)-d(n_\ell)\big)n_{\ell+1} 
        \geq \card{V(n_\ell)}
        \geq n_\ell.
    \end{equation*}
    Thus, using $a_\ell = 2^\ell n_\ell$,
    $n_{\ell+1}=n_\ell/2\gamma(n)$, we have
    \begin{equation*}
        d(n_{\ell})K_12^\ell+(\gamma(n)-d(n_{\ell}))/2 
        \geq \gamma(n).
    \end{equation*}
    As $K_12^\ell > 1$, this yields
    \begin{equation*}
        d(n_\ell)
        \geq \frac{1-1/2}{K_12^\ell-1/2}\gamma(n)
        \geq \frac{2^{-\ell}}{2K_1}\gamma(n)
        = K_2 2^{-\ell}\gamma(n),
    \end{equation*}
    with
    \begin{equation*}
        K_2 \defeq \frac{1}{2K_1}.
    \end{equation*}
\end{proof}

Consider $V(n_\ell)\subset A(a_\ell)$ with $\card{V(n_\ell)}$, and
choose arbitrary $K_2 2^{-\ell} \gamma(n)$ dense squarelets of area
$a_{\ell+1}$ (as guaranteed by Lemma~\ref{thm:structure}). Call those
squarelets $\{A_k(a_{\ell+1})\}_{k=1}^{K_2 2^{-\ell} \gamma(n)}$. For
each sour-destination pair, we now select one such dense squarelet to
relay traffic over. To avoid bottlenecks, this selection has to be done
such that all relay squarelets carry approximately the same amount of
traffic. Moreover, for technical reasons, the distances from the source
and the destination to the relay squarelet cannot be too small.

Formally, the selection of relay squarelets can be described by the
\emph{schedules} $S\in \{0,1\}^{n_\ell\times K_2 2^{-\ell}\gamma(n)}$
with $s_{u,k}=1$ if source node $u$ relays traffic over dense squarelet
$k$, and $\widetilde{S}\in\{0,1\}^{K_2 2^{-\ell}\gamma(n)\times n_\ell}$
with $\tilde{s}_{k,w}=1$ if destination node $w$ receives traffic
from dense squarelet $k$. With slight abuse of notation, let
$r_{u,A_k(a_{\ell+1})}$ be the distance between node $u\in V(n_\ell)$
and the closest point in $A_k(a_{\ell+1})$, i.e., 
\begin{equation}
    \label{eq:ruadef}
    r_{u,A_k(a_{\ell+1})} 
    \defeq \min_{v\in A_k(a_{\ell+1})} r_{u,v}.
\end{equation}
Define the sets
\begin{align}
    \label{eq:schedules}
    \mc{S}(n_\ell) & \defeq \Big\{S\in \{0,1\}^{n_\ell\times K_2 2^{-\ell}\gamma(n)}:  \nonumber\\
    & \qquad\quad 0\leq {\textstyle\sum_{u=1}^{n_\ell} s_{u,k}} \leq n_{\ell+1} \ \forall k, \nonumber\\
    & \qquad\quad 0\leq {\textstyle\sum_{k=1}^{K_2 2^{-\ell}\gamma(n)}} s_{u,k} \leq 1 \ \forall u, \nonumber\\
    & \qquad\quad s_{u,k} = 1 \text{ implies } r_{u,A_k(a_{\ell+1})}
    \geq \sqrt{2a_{\ell+1}} \ \forall u,k
    \Big\}
\end{align}
and 
\begin{equation*}
    \widetilde{\mc{S}}(n_\ell) 
    \defeq \big\{\widetilde{S}\in\{0,1\}^{K_2 2^{-\ell}\gamma(n)\times n_\ell}:
    \widetilde{S}^T\in\mc{S}(n_{\ell})\big\}.
\end{equation*}
The sets $\mc{S}(n_\ell)$ and $\widetilde{\mc{S}}(n_\ell)$ are the
collection of schedules satisfying the conditions mentioned in the last
paragraph. More precisely, the first condition in \eqref{eq:schedules}
ensures that at most $n_{\ell+1}$ source-destination pairs relay over
the same dense squarelet, the second condition ensures that each
source-destination pair chooses at most one relay squarelet, and the
third condition ensures that sources and destinations are at least at
distance $\sqrt{2a_{\ell+1}}$ from the chosen relay squarelet.

Next, we prove that any node placement that satisfies Lemma
\ref{thm:structure} allows for a decomposition of any permutation
traffic matrix $\lambdauc(n_\ell)$ into a small number of schedules belonging to
$\mc{S}(n_\ell)$ and $\widetilde{\mc{S}}(n_\ell)$. 
\begin{lemma}
    \label{thm:relay}
    There exist $K_3$ such that for all $n$ large enough (independent of
    $\ell$), and every permutation traffic matrix
    $\lambdauc(n_\ell)\in\{0,1\}^{n_\ell\times n_\ell}$ we can find $K_3 2^\ell$
    schedules $\{S^{(i)}(n_\ell)\}_{i=1}^{K_3
    2^\ell}\subset\mc{S}(n_\ell)$,
    $\{\widetilde{S}^{(i)}(n_\ell)\}_{i=1}^{K_3
    2^\ell}\subset\widetilde{\mc{S}}(n_\ell)$ satisfying
    \begin{equation*}
        \lambdauc(n_\ell) 
        = \sum_{i=1}^{K_3 2^\ell} S^{(i)}(n_\ell)\widetilde{S}^{(i)}(n_\ell).
    \end{equation*}
\end{lemma}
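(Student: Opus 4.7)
The plan is to first establish that every source--destination pair $(u,w)$ with $\lambdauc_{u,w}(n_\ell)=1$ has $\Omega(2^{-\ell}\gamma(n))$ valid relay squarelets, then greedily distribute pairs to these squarelets in a load-balanced manner, and finally chunk the resulting assignment into $K_3 2^\ell$ schedules. For the first ingredient I would bound the forbidden squarelets: a dense squarelet $A_k(a_{\ell+1})$ is forbidden iff it intersects the disk of radius $\sqrt{2a_{\ell+1}}$ around $u$ or $w$, and each such disk has area $O(a_{\ell+1})$, so it intersects only a constant number $K_4$ of squarelets of area $a_{\ell+1}$. Combined with Lemma~\ref{thm:structure} and the condition $2^{-L(n)}\gamma(n)\to\infty$ from~\eqref{eq:assumptions}, at least $(K_2/2)2^{-\ell}\gamma(n)$ dense squarelets remain valid for each pair once $n$ is large enough (independent of $\ell<L(n)$).

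For the load-balancing step I would assign pairs one at a time, sending each to a valid squarelet of currently minimum load. A pigeonhole argument then bounds the maximum load by $\lceil n_\ell/((K_2/2)2^{-\ell}\gamma(n))\rceil\leq K_5\cdot 2^\ell n_{\ell+1}$ for some constant $K_5$ (using $n_\ell=2\gamma(n)n_{\ell+1}$): otherwise, all valid squarelets of the offending pair would already carry load above this threshold, contradicting the total pair count $n_\ell$.

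Finally, for each squarelet I would partition its assigned pairs into blocks of size at most $n_{\ell+1}$, producing at most $\lceil K_5\rceil 2^\ell$ blocks per squarelet, and assemble these blocks across squarelets into a common palette of colors $i\in\{1,\ldots,K_3 2^\ell\}$ with $K_3=\lceil K_5\rceil$. For each color $i$, set $s^{(i)}_{u,k}=1$ iff $u$'s pair is colored $i$ and assigned to squarelet $k$, and $\tilde{s}^{(i)}_{k,w}=1$ symmetrically; all defining constraints of $\mc{S}(n_\ell)$ and $\widetilde{\mc{S}}(n_\ell)$ then hold by construction (per-squarelet capacity $n_{\ell+1}$, uniqueness of the chosen squarelet per source or destination, and the distance requirement), and the identity $\lambdauc(n_\ell)=\sum_i S^{(i)}\widetilde{S}^{(i)}$ follows because each pair in $\lambdauc$ is covered by its unique assigned squarelet in its unique color. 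The main obstacle is the load-balance bound in the middle step, which relies essentially on the constant fraction of valid squarelets per pair guaranteed by the growth condition in~\eqref{eq:assumptions}; the remaining partitioning is routine combinatorial bookkeeping.
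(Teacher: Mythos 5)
Your proposal is correct, and its first half is exactly the paper's: both use Lemma~\ref{thm:structure} together with the condition $2^{-L(n)}\gamma(n)\to\infty$ from \eqref{eq:assumptions} to conclude that, for $n$ large enough uniformly in $\ell<L(n)$, every source--destination pair has at least a constant fraction (about $K_2 2^{-\ell-1}\gamma(n)$) of admissible dense relay squarelets at distance at least $\sqrt{2a_{\ell+1}}$. Where you differ is the construction of the schedules. The paper builds them as \emph{rounds} of a sequential matching: pairs pick, one at a time, any admissible squarelet not yet holding $n_{\ell+1}$ pairs; when some pair fails, the round closes and defines one pair $(S^{(i)},\widetilde{S}^{(i)})$, loads are reset, and the procedure restarts on the unmatched pairs. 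A failure means all $\geq K_2 2^{-\ell-1}\gamma(n)$ admissible squarelets of the failing pair are saturated, so each completed round matches at least $K_2 2^{-\ell-2}n_\ell$ pairs, giving $K_3=4/K_2$. You instead make one globally load-balanced assignment (min-load greedy), bound the maximum per-squarelet load by roughly $(4/K_2)2^{\ell}n_{\ell+1}$ via pigeonhole (your argument is sound: if the least-loaded admissible squarelet of some pair already had load $\geq\lceil n_\ell/m\rceil$ with $m$ the number of admissible squarelets, the total load would already be $\geq n_\ell$), and then slice each squarelet's load into blocks of size $n_{\ell+1}$, the $i$-th blocks across squarelets forming the $i$-th schedule. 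This yields the same $\Theta(2^\ell)$ schedule count with essentially the same constant (yours slightly looser from ceilings, which you should absorb using $n_{\ell+1}\geq 2^{-(\ell+1)}\gamma(n)\to\infty$), and your blocks clearly satisfy the capacity, uniqueness, and distance constraints defining $\mc{S}(n_\ell)$ and $\widetilde{\mc{S}}(n_\ell)$. One shared caveat: the equality $\lambdauc(n_\ell)=\sum_i S^{(i)}(n_\ell)\widetilde{S}^{(i)}(n_\ell)$ should be read, as in the paper, in the sense that each communication pair is served exactly once by its assigned squarelet in its assigned schedule (the literal matrix product also contains cross terms between distinct pairs sharing a relay squarelet); your reading matches the paper's, so this is not a gap in your argument.
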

\begin{proof}
    Pick an arbitrary source-destination pair in $\lambdauc(n_\ell)$, and
    consider the squarelets containing the source and the destination
    node. Since each squarelet has side length $\sqrt{a_{\ell+1}}$,
    there are at most $50$ squarelets at distance less than
    $\sqrt{2a_{\ell+1}}$ from either of those two squarelets. As
    $2^{-L(n)}\gamma(n)\to\infty$ as $n\to\infty$ by
    \eqref{eq:assumptions}, there exists $K$ (independent of $\ell$)
    such that for $n\geq K$ we have $50 \leq  K_2 2^{-\ell-1}
    \gamma(n)$. Since there are at least $K_2 2^{-\ell}\gamma(n)$ dense
    squarelets by Lemma~\ref{thm:structure}, there must exist at least
    $K_2 2^{-\ell-1}\gamma(n)$ dense squarelets that are at distance at
    least $\sqrt{2a_{\ell+1}}$ from both the squarelets containing the
    source and the destination node.

    In order to construct a decomposition of $\lambdauc(n_\ell)$, we use the
    following procedure. Sequentially, each of the $n_\ell$
    source-destination pairs chooses one of the (at least) $K_2
    2^{-\ell-1}\gamma(n)$ dense squarelets at distance at least $\sqrt{2
    a_{\ell+1}}$ that has not already been chosen by $n_{\ell+1}$ other
    pairs. If any source-destination pair can not select such a
    squarelet, then stop the procedure and use the source-destination
    pairs matched with dense squarelets so far to define matrices
    $S^{(1)}(n_\ell)$ and $\widetilde{S}^{(1)}(n_\ell)$. Now, remove all
    the matched source-destination pairs, forget that dense squarelets
    were matched to any source-destination pair and redo the above
    procedure, going through the remaining source-destination pairs.  
    
    Let
    \begin{equation*}
        K_3 \defeq 4/K_2.
    \end{equation*}
    We claim that by repeating this process of generating matrices
    $S^{(i)}(n_\ell)$ and $\widetilde{S}^{(i)}(n_\ell)$, we can match
    all source-destination pairs to some dense squarelet with at most
    $K_3 2^\ell$ such matrices. Indeed, a new pair of matrices
    is generated only when a source-destination pair can not be matched
    to any of its available (at least) $K_2 2^{-\ell-1}\gamma(n)$ dense
    squarelets. If this happens, all these dense squarelets are matched
    by $n_{\ell+1}=n_\ell/2\gamma(n)$ pairs. Hence at least $K_2
    2^{-\ell-2} n_\ell$ source-destination pairs are matched in each
    ``round''. Since there are $n_\ell$ total pairs, we need at most 
    \begin{equation*}
        \frac{n_\ell}{K_2 2^{-\ell-2}n_\ell} = K_3 2^{\ell}
    \end{equation*}
    matrices $S^{(i)}(n_\ell)$ and $\widetilde{S}^{(i)}(n_\ell)$.
\end{proof}

For a permutation traffic matrix $\lambdauc(n_\ell)$, communication proceeds as follows.
Write
\begin{equation*}
    \lambdauc(n_\ell) = \sum_{i=1}^{K_3 2^\ell} S^{(i)}(n_\ell)\widetilde{S}^{(i)}(n_\ell)
\end{equation*}
as in Lemma~\ref{thm:relay}. Split time into $K_3 2^\ell$ equal length time
slots. In slot $i$, we use $S^{(i)}(n_\ell)\widetilde{S}^{(i)}(n_\ell)$ as our
traffic matrix. Consider without loss of generality $i=1$ in the
following. Write
\begin{equation*}
    S^{(1)}(n_\ell)\widetilde{S}^{(1)}(n_\ell)
    = \sum_{k=1}^{K_2 2^{-\ell}\gamma(n)}
    S^{(1,k)}(n_{\ell+1})\widetilde{S}^{(1,k)}(n_{\ell+1}),
\end{equation*}
where $S^{(1,k)}(n_{\ell+1})\widetilde{S}^{(1,k)}(n_{\ell+1})$ is the
traffic relayed over the dense squarelet $A_k(a_{\ell+1})$. We time
share between the schedules for $k\in\{1,\ldots,K_2
2^{-\ell}\gamma(n)\}$. Consider now any such $k$. In the worst case,
there are exactly $n_{\ell+1}$ communication pairs to be relayed over
$A_k(a_{\ell+1})$, and the relay squarelet $A_k(a_{\ell+1})$ contains
exactly $n_{\ell+1}$ nodes. We shall assume this worst case in the
following.

We focus on the transmission according to the traffic matrix
$S^{(1,1)}(n_{\ell+1})\widetilde{S}^{(1,1)}(n_{\ell+1})$. Let
$V(n_{\ell+1})$ be the nodes in $A_1(a_{\ell+1})$, and let
$U(n_{\ell+1})$ and $W(n_{\ell+1})$ be the source and destination nodes
of $S^{(1,1)}(n_{\ell+1})\widetilde{S}^{(1,1)}(n_{\ell+1})$,
respectively. In other words, the source nodes $U(n_{\ell+1})$
communicate to their respective destination nodes $W(n_{\ell+1})$ using
the nodes $V(n_{\ell+1})$ as relays.

\subsection{Multiple Access Phase}

Each source node in $U(n_{\ell+1})$ splits its message into $n_{\ell+1}$
equal length parts. Part $j$ at every node $u\in U(n_{\ell+1})$ is to be
relayed over the $j$-th node in $V(n_{\ell+1})$. Each part is separately
encoded at the source and separately decoded at the destination. After
the source nodes are done transmitting their messages, the nodes in the
relay squarelet quantize their (sampled) observations corresponding to
part $j$ and communicate the quantized values to the $j$-th node in the
relay squarelet. This node then decodes the $j$-th message parts of all
source nodes. Note that this induces a uniform traffic pattern between
the nodes in the relay squarelet, i.e., every node needs to transmit
quantized observations to every other node. While this traffic pattern
does not correspond to a permutation traffic matrix, it can be written
as a sum of $n_{\ell+1}$ permutation traffic matrices. A fraction
$1/n_{\ell+1}$ of the traffic within the relay squarelet is transmitted
according to each of these permutation traffic matrices.  This setup is
depicted in Figure~\ref{fig:mac} in Section~\ref{sec:construction}.

Assuming for the moment that we have a scheme to send the quantized
observations to the dedicated node in the relay squarelet, the traffic
matrix $S^{(1,1)}(n_{\ell+1})$ between $U(n_{\ell+1})$ and
$V(n_{\ell+1})$ describes then a MAC with $n_{\ell+1}$ transmitters,
each with one antenna, and one receiver with $n_{\ell+1}$ antennas. We
call this the \emph{MAC induced by $S^{(1,1)}(n_{\ell+1})$} in the
following. Before we analyze the rate achievable over this induced MAC,
we need an auxiliary result on quantized channels.

\begin{figure}[!ht]
    \begin{center}
        \input{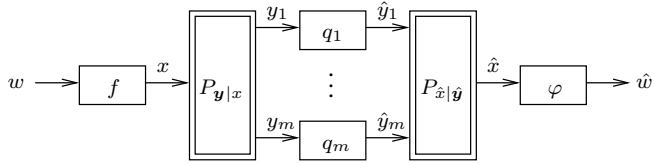}
    \end{center}
    \caption{Sketch of the quantized channel. $f$ and $\varphi$ are
    the channel encoder and decoder, respectively; $\{q_k\}_{k=1}^m$ are
    quantizers; $P_{\bm{y}|x}$ and $P_{\hat{x}|\hat{\bm{y}}}$ represent
    stationary ergodic channels with the indicated marginal distributions.}
    \label{fig:quant}
\end{figure}

Consider the quantized channel in Figure~\ref{fig:quant}. Here, $f$ is
the channel encoder, $\varphi$ the channel decoder, $\{q_k\}_{k=1}^m$
quantizers. All these have to be chosen.  $P_{\bm{y}|x}$ and
$P_{\hat{x}|\hat{\bm{y}}}$, on the other hand, represent fixed
stationary ergodic channels with the indicated marginal distributions.
We call $R$ the rate of the channel code $(f,\varphi)$ and
$\{R_k\}_{k=1}^m$ the rates of quantizers $\{q_k\}_{k=1}^m$.

\begin{lemma}
    \label{thm:quant}
    If there exist distributions $P_x$ and
    $\{P_{\hat{y}_k|y_k}\}_{k=1}^m$ such that $R<I(x;\hat{x})$ and $R_k
    > I(y_k;\hat{y}_k),\, \forall k$, then $\big(R,\{R_k\}_{k=1}^m\big)$ is
    achievable over the quantized channel.
\end{lemma}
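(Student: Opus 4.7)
My plan is to combine three standard achievability arguments: random channel coding for the outer code $(f,\varphi)$, independent random vector-quantization codebooks for each $q_k$, and a joint-typicality analysis of the induced end-to-end channel from $x$ to $\hat{x}$. Fix the distributions $P_x$ and $\{P_{\hat{y}_k|y_k}\}_{k=1}^m$ achieving the claimed mutual-information inequalities, and let $P_{\hat{y}_k}$ denote the induced marginals. For each $k$, I would generate a random quantization codebook of size $2^{nR_k}$ by drawing codewords i.i.d.\ (in time) from $P_{\hat{y}_k}^n$, and define $q_k$ to map $y_k^n$ to the first codeword jointly typical with it under $P_{y_k}P_{\hat{y}_k|y_k}$. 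Since $x^n$ is chosen to be stationary ergodic and the channels $P_{\bm{y}|x}$ and $P_{\hat{x}|\hat{\bm{y}}}$ are stationary ergodic, the process $\{y_k[t]\}_t$ is itself stationary ergodic, and the covering lemma for stationary ergodic sources guarantees the encoding succeeds with probability tending to one whenever $R_k > I(y_k;\hat{y}_k)$, producing a $\hat{y}_k^n$ that is jointly $\varepsilon$-typical with $y_k^n$.

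Second, I would argue that conditioned on the success of every quantizer, the block $(x^n,\bm{y}^n,\hat{\bm{y}}^n,\hat{x}^n)$ is jointly typical under the product distribution $P_x\,P_{\bm{y}|x}\,\prod_{k=1}^m P_{\hat{y}_k|y_k}\,P_{\hat{x}|\hat{\bm{y}}}$. Because the quantizers operate marginally on each $y_k$ but their outputs are then fed jointly into $P_{\hat{x}|\hat{\bm{y}}}$, this step is the main obstacle: we need the Markov lemma (extended to stationary ergodic sequences) to conclude that, given the marginal conditional typicality of each pair $(y_k^n,\hat{y}_k^n)$, the vector $\hat{\bm{y}}^n$ is jointly typical with $\bm{y}^n$ under the product kernel $\prod_k P_{\hat{y}_k|y_k}$. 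Passing this joint typicality through the stationary ergodic channel $P_{\hat{x}|\hat{\bm{y}}}$ then yields joint typicality of $(x^n,\hat{x}^n)$ under the induced $P_{x,\hat{x}}$.

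Third, I would generate a random channel code $f$ of rate $R$ by drawing $2^{nR}$ codewords i.i.d.\ from $P_x^n$, and use joint-typicality decoding at $\varphi$ with respect to the induced $P_{x,\hat{x}}$. The standard achievability result for information transmission over stationary ergodic channels (the Shannon--McMillan--Breiman/AEP-based argument) then gives vanishing probability of error whenever $R < I(x;\hat{x})$. Averaging the overall error probability over the channel codebook and the $m$ quantization codebooks, and extracting a deterministic realization, yields the claim.

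The most delicate point, as flagged above, is the joint-typicality step when the $m$ quantizers are designed independently but enter the downstream channel jointly; all other pieces are invocations of classical random-coding and rate-distortion results for stationary ergodic sources and channels. A secondary bookkeeping point is to verify that the Markov lemma holds under our stationarity/ergodicity assumptions (rather than the usual i.i.d.\ setting), which follows by the same argument as in Cover--Thomas once we work with the stationary ergodic AEP.
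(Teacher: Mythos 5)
Your argument is essentially the proof of the result the paper itself invokes: the paper's entire proof of this lemma is the single line ``the proof follows from a simple extension of Theorem~1 in Appendix~II of~\cite{ozg}'', and the content of that cited theorem is exactly your random channel coding plus per-relay covering (test-channel quantization) plus joint-typicality/Markov-lemma argument for the induced end-to-end channel from $x$ to $\hat{x}$. The one soft spot you flag yourself --- justifying the covering step and the Markov lemma under stationary ergodic rather than i.i.d.\ statistics --- is treated at the same level of rigor as in the paper, which simply appeals to the cited result, so there is no substantive divergence between your route and the paper's.
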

\begin{proof}
    The proof follows from a simple extension of Theorem 1 in Appendix
    II of~\cite{ozg}.
\end{proof}

\begin{lemma}
    \label{thm:mac}
    Let the additive noise $\{z_v\}_{v\in V(n_{\ell+1})}$ be
    uncorrelated (over $v$).  For the MAC induced by
    $S^{(1,1)}(n_{\ell+1})$ with per-node average power constraint
    $P_\ell(n)\leq n_{\ell+1}^{-1}a_\ell^{\alpha/2}$, a 
    rate of
    \begin{equation*}
        \rho^\textup{\tsf{MAC}}_\ell(n) 
        \geq K_4 P_\ell(n)n_{\ell+1}a_\ell^{-\alpha/2}
    \end{equation*}
    per source node is achievable, and the number of bits required at
    each relay node to quantize the observations is at most $K_5$ bits
    per $n_{\ell+1}$ total message bits\footnote{Total message bits
    refers to the sum of all message bits transmitted by the
    $n_{\ell+1}$ source nodes.} sent by the source nodes.
\end{lemma}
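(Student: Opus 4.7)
The plan is to apply the quantized channel lemma (Lemma~\ref{thm:quant}) with iid complex Gaussian codebooks of power $P_\ell(n)$ at each source, per-antenna scalar Gaussian quantization at the relays, and matched-filter detection on the quantized observations. Concretely, each source $u_i\in U(n_{\ell+1})$ encodes its $n_{\ell+1}$ message parts independently and transmits part $j$ during channel use $j$, for $j=1,\ldots,n_{\ell+1}$.

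The first step is to bound the per-antenna received power at each relay $v_k$. Using the schedule constraint $r_{u,A_1(a_{\ell+1})}\geq\sqrt{2a_{\ell+1}}$ together with the minimum-separation of $U(n_{\ell+1})$ and $\alpha>2$, a standard packing/integration estimate yields $\sum_{u\in U(n_{\ell+1})}r_{u,v_k}^{-\alpha}=O(a_{\ell+1}^{1-\alpha/2})$; combined with the hypothesis on $P_\ell(n)$ and bounded $N_0$, this makes $\mathbb{E}[\lvert y_{v_k}\rvert^2]$ uniformly $O(1)$. I would then apply Lemma~\ref{thm:quant} with a Gaussian test channel $\hat y_{v_k}=y_{v_k}+e_{v_k}$ where $e_{v_k}\sim\mc{N}_{\mbb{C}}(0,D)$ for a constant $D$, so that $I(y_{v_k};\hat y_{v_k})\leq\log(1+\mathbb{E}[\lvert y_{v_k}\rvert^2]/D)\leq K_5$. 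Over $n_{\ell+1}$ channel uses this gives $K_5 n_{\ell+1}$ quantization bits per relay, matching the claimed overhead.

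The heart of the argument is the SINR analysis of
\begin{equation*}
\hat x_{ij}=\sum_k \frac{h_{u_i,v_k}^\dagger[j]}{\sqrt{\sum_k\lvert h_{u_i,v_k}[j]\rvert^2}}\hat y_{v_k}[j].
\end{equation*}
Because $u_i$ and every $v_k$ lie inside $A(a_\ell)$, we have $r_{u_i,v_k}\leq K\sqrt{a_\ell}$, so the coherent signal power is $P_\ell(n)\sum_k\lvert h_{u_i,v_k}\rvert^2=\Omega\bigl(P_\ell(n) n_{\ell+1} a_\ell^{-\alpha/2}\bigr)$. For the interference from any $u_{i'}\neq u_i$, the iid random phases $\{\theta_{u_{i'},v_k}\}_k$ (fast fading) make the cross term $\sum_k h_{u_i,v_k}^\dagger h_{u_{i'},v_k}$ a zero-mean complex sum whose variance is exactly $\sum_k r_{u_i,v_k}^{-\alpha}r_{u_{i'},v_k}^{-\alpha}$. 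Normalizing and summing over all $u_{i'}\neq u_i$, the same packing estimate applied to $U(n_{\ell+1})$ bounds the total interference contribution by $O\bigl(P_\ell(n) n_{\ell+1} a_\ell^{-\alpha/2}\bigr)$. Together with the noise $N_0$ and quantization noise power $D$ (both $O(1)$) and the regime hypothesis $P_\ell(n) n_{\ell+1} a_\ell^{-\alpha/2}\leq 1$, the interference-plus-noise denominator is $O(1)$, so $\text{SINR}=\Omega\bigl(P_\ell(n) n_{\ell+1} a_\ell^{-\alpha/2}\bigr)$. Using $\log(1+x)\geq x/2$ for $x\leq 1$ in Lemma~\ref{thm:quant} then yields the claimed per-source rate $K_4 P_\ell(n) n_{\ell+1} a_\ell^{-\alpha/2}$.

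The main obstacle is controlling the interference coherence across the $n_{\ell+1}$ receive antennas: without the independence of the phases $\{\theta_{u_{i'},v_k}\}_k$, the cross terms could in principle sum to $\Theta(n_{\ell+1}^2 P_\ell(n) a_\ell^{-\alpha})$, an extra factor of $n_{\ell+1}$ that would destroy the claimed SINR. So the fast-fading (iid phase averaging) assumption is essential for this lemma; the slow-fading counterpart would need a concentration argument and is naturally deferred to Section~\ref{sec:slow}. The remaining technical work is routine: verify the packing bound $\sum_u r_{u,v_k}^{-\alpha}=O(a_{\ell+1}^{1-\alpha/2})$ under the given minimum-separation constraint, confirm that Lemma~\ref{thm:quant} applies to the non-Gaussian conditional distribution $P_{\hat x\mid\hat{\bm y}}$ induced by the matched filter (via a standard rate-distortion construction for bounded-variance sources), and justify the coherent combining lower bound on $\sum_k r_{u_i,v_k}^{-\alpha}$ from the fact that all $v_k\in A_1(a_{\ell+1})\subset A(a_\ell)$.
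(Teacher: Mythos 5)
Your overall architecture (Lemma~\ref{thm:quant}, a Gaussian test-channel quantizer of constant distortion, and a matched filter) matches the paper, but you omit two ingredients of the paper's proof that are essential, and both omissions create genuine gaps. The first is power control. You let every source transmit continuously at the same power $P_\ell(n)$, whereas the sources in $U(n_{\ell+1})$ sit anywhere in $A(a_\ell)$ at distances from the relay squarelet ranging from $\sqrt{2a_{\ell+1}}$ up to $\Theta(\sqrt{a_\ell})$. Your interference step then fails: the packing estimate over $U(n_{\ell+1})$ gives $\sum_{u'\neq u}r_{u',v}^{-\alpha}=O\big(a_{\ell+1}^{1-\alpha/2}\big)$, and since $a_{\ell+1}^{1-\alpha/2}=2^{\ell+1}\gamma^{\alpha/2}(n)\,n_{\ell+1}a_\ell^{-\alpha/2}$, the post-matched-filter interference is \emph{not} $O\big(P_\ell(n) n_{\ell+1}a_\ell^{-\alpha/2}\big)$ as you claim: if the interferers sit at distance $\Theta(\sqrt{a_{\ell+1}})$ while the desired source is at distance $\Theta(\sqrt{a_\ell})$, the interference exceeds the coherent signal term by a factor of order $\gamma^{\alpha/2}(n)$ (and, with $P_\ell(n)$ near its cap $n_{\ell+1}^{-1}a_\ell^{\alpha/2}$, it is itself of order $\gamma^{\alpha/2}(n)$, not $O(1)$). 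This near--far problem is exactly why the paper scales each source's transmit power as $\tilde{r}_u^{\alpha}n_{\ell+1}^{-1}a_\ell^{\alpha/2}$, with $\tilde{r}_u$ the normalized distance to the relay squarelet: then every interference term satisfies $\tilde{r}_{u'}^{\alpha}r_{u',v}^{-\alpha}\leq(2a_\ell)^{-\alpha/2}$ and the total interference is comparable to the (equalized) signal for every source uniformly, which is what the per-source-node rate claim requires.

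The second gap concerns the quantization claim, and here burstiness is not a cosmetic choice. You count $K_5$ bits per \emph{channel use} and conclude "$K_5 n_{\ell+1}$ bits over $n_{\ell+1}$ channel uses, matching the claimed overhead," but the lemma requires at most $K_5$ bits per $n_{\ell+1}$ \emph{total message bits}. With continuous transmission at power $P_\ell(n)$, the per-source mutual information per channel use is only $\Theta\big(P_\ell(n) n_{\ell+1}a_\ell^{-\alpha/2}\big)$, which in the regime where the lemma is invoked is polynomially small in $n$; since each channel use still costs $\Theta(1)$ quantization bits per relay, your scheme needs $\Theta\big(1/(P_\ell(n) n_{\ell+1}a_\ell^{-\alpha/2})\big)$ quantization bits per $n_{\ell+1}$ message bits, not a constant -- and this constant is precisely what keeps the hierarchical recursion (the traffic induced at level $\ell+1$) from blowing up. The paper avoids this by bursty transmission: nodes transmit at power essentially $n_{\ell+1}^{-1}a_\ell^{\alpha/2}$ during a randomly chosen fraction $P_\ell(n) n_{\ell+1}a_\ell^{-\alpha/2}\leq 1$ of the time, so that during active slots $I(x_u;\hat{x}_u)\geq K_4$ is a constant per observation, the number of relay observations per message bit is constant, and the duty cycle appears only as the factor $P_\ell(n) n_{\ell+1}a_\ell^{-\alpha/2}$ in the rate. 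Your remark that the lemma as proved is a fast-fading statement (with slow fading deferred) is consistent with the paper, but without distance-dependent power control and bursty, duty-cycled transmission neither of the lemma's two claims follows from your argument.
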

\begin{proof}
    The source nodes send signals with a power of (essentially)
    $n_{\ell+1}^{-1}a_\ell^{\alpha/2}$ for a fraction $P_\ell(n)
    n_{\ell+1}a_\ell^{-\alpha/2}\leq 1$ of time and are silent for the
    remaining time. To ensure that interference is uniform, the time
    slots during which the nodes send signals are chosen randomly as
    follows. Generate independently for each region $A(a_\ell)$ a
    Bernoulli process $\{B[t]\}_{t\in\mbb{N}}$ with parameter
    $P_\ell(n) n_{\ell+1}a_\ell^{-\alpha/2}/(1+\eta)\leq 1$ for some
    small $\eta>0$. The nodes in $A(a_\ell)$ are active 
    whenever $B[t]=1$ and remain silent otherwise. Since the
    blocklength of the codes used is assumed to be large, this satisfies
    the average power constraint of $P_\ell(n)$ with high probability
    for any $\eta>0$. Since we are interested only in the scaling of
    capacity, we ignore the additional $1/(1+\eta)$ term in the
    following to simplify notation. Clearly, we only need to consider
    the fraction of time during which $B[t] = 1$. 
    
    Let $\bm{y}$ be the
    received vector at the relay squarelet, $\hat{\bm{y}}$ the
    (componentwise) quantized observations. We use a matched filter at
    the relay squarelet, i.e., 
    \begin{equation*}
        \hat{x}_u = \frac{\bm{h}_u^\dagger}{\norm{\bm{h}_u}}\hat{\bm{y}},
    \end{equation*}
    where column vector $\bm{h}_u = \{h_{u,v}\}_{v\in V(n_{\ell+1})}$
    are the channel gains between node $u\in U(n_{\ell+1})$ and the
    nodes in the relay squarelet $V(n_{\ell+1})$. The use of a matched
    filter is possible since we assume full CSI is available at all the nodes.

    We now use Lemma~\ref{thm:quant} to show that we can design
    quantizers $\{q_v\}_{v\in V(n_{\ell+1})}$ of constant rate and
    achieve a per-node communication rate of at least $K_4
    P_\ell(n)n_{\ell+1}a_\ell^{-\alpha/2}$. The first channel in
    Lemma~\ref{thm:quant} (see Figure~\ref{fig:quant}) will correspond
    to the wireless channel between a source node $u$ and its relay
    squarelet $V(n_{\ell+1})$. The second ``channel'' in
    Lemma~\ref{thm:quant} will correspond to the matched filter used at
    the relay squarelet. To apply Lemma~\ref{thm:quant}, we need to find
    a distribution for $x_u$ and for $\hat{y}_v|y_v$. Define
    \begin{equation*}
        \tilde{r}_u
        \defeq r_{u,A_1(a_{\ell+1})}/\sqrt{2a_\ell}
        \leq 1
    \end{equation*}
    with $r_{u,A_1(a_{\ell+1})}$ as in \eqref{eq:ruadef}, to be the
    normalized distance of the source node $u\in U(n_{\ell+1})$ to the
    relay squarelet $A_1(a_{\ell+1})$. For each $u\in U(n_{\ell+1})$ let
    $x_u\sim\mc{N}_{\mbb{C}}(0,\tilde{r}_u^{\alpha}n_{\ell+1}^{-1}a_\ell^{\alpha/2})$
    independent of $x_{\tilde{u}}$ for $u\neq \tilde{u}$, and let
    $\hat{y}_v = y_v+\tilde{z}_v$ for
    $\tilde{z}_v\sim\mc{N}_{\mbb{C}}(0,\Delta^2)$ independent of
    $\bm{y}$ and for some $\Delta^2>0$. Note that the channel input
    $x_u$ has power that depends on the normalized distance
    $\tilde{r}_u$ (i.e., only nodes $u\in U(n_{\ell+1})$ that are at
    maximal distance $\sqrt{2a_{\ell}}$ from the relay squarelet
    transmit at full available power). This is to ensure that all
    signals are received at roughly the same strength by the relays.

    We proceed by computing the mutual informations $I(y_v;
    \hat{y}_v\vert\{h_{\tilde{u},\tilde{v}}\})$ and
    $I(x_u;\hat{x}_u\vert\{h_{\tilde{u},\tilde{v}}\})$ as required in
    Lemma \ref{thm:quant} (the conditioning on
    $\{h_{\tilde{u},\tilde{v}}\}$ being due to the availability of full
    CSI). Note first that by construction of $S^{(1,1)}(n_{\ell+1})$
    (see \eqref{eq:schedules}), we have for $u\in U(n_{\ell+1})$ and
    $v\in V(n_{\ell+1})$
    \begin{equation*}
        r_{u,A_1(a_{\ell+1})}
        \leq r_{u,v}
        \leq 2r_{u,A_1(a_{\ell+1})},
    \end{equation*}
    and hence
    \begin{equation}
        \label{eq:rtildebounds}
        \frac{1}{2\sqrt{2a_{\ell}}} 
        \leq \frac{\tilde{r}_u}{r_{u,v}}
        \leq \frac{1}{\sqrt{2a_{\ell}}}.
    \end{equation}
    From this, and since $\abs{h_{u,v}}^2=r_{u,v}^{-\alpha}$, we obtain
    \begin{equation}
        \label{eq:hbounds}
        \begin{aligned}
            2^{-3\alpha/2}a_\ell^{-\alpha/2}
            & \leq \abs{h_{u,v}}^2 \tilde{r}_u^{\alpha}
            \leq 2^{-\alpha/2}a_\ell^{-\alpha/2}, \\
            2^{-3\alpha/2}n_{\ell+1}a_\ell^{-\alpha/2}
            & \leq \norm{\bm{h}_{u}}^2 \tilde{r}_u^{\alpha}
            \leq 2^{-\alpha/2}n_{\ell+1}a_\ell^{-\alpha/2}.
        \end{aligned}
    \end{equation}

    We start by computing $I(y_v; \hat{y}_v\vert\{h_{\tilde{u},\tilde{v}}\})$. We have
    \begin{equation*}
        \hat{y}_v 
        = \sum_{u\in U(n_{\ell+1})}h_{u,v}x_u+z_v+z_{\tilde{v}},
    \end{equation*}
    and hence $\hat{y}_v$ has mean zero and variance
    \begin{align*}
        \E (\abs{\hat{y}_v}^2)
        & = \sum_{u\in U(n_{\ell+1})}\abs{h_{u,v}}^2
        \tilde{r}_u^{\alpha}n_{\ell+1}^{-1}a_\ell^{\alpha/2}  +N_0+\Delta^2 \\
        & \leq n_{\ell+1} 2^{-\alpha/2} a_{\ell}^{-\alpha/2}
        n_{\ell+1}^{-1}a_\ell^{\alpha/2} +N_0+\Delta^2 \\
        & =  2^{-\alpha/2}+N_0+\Delta^2,
    \end{align*}
    where we have used \eqref{eq:hbounds}. Hence
    \begin{align}
        \label{eq:i1}
        I(y_v; \hat{y}_v\vert\{h_{\tilde{u},\tilde{v}}\})
        & = h(\hat{y}_v\vert\{h_{\tilde{u},\tilde{v}}\})
        -h(\hat{y}_v|y_v, \{h_{\tilde{u},\tilde{v}}\}) \nonumber\\
        & \leq \log\big(2\pi e \E(\abs{\hat{y}_v}^2) \big)
        -\log(2\pi e\Delta^2) \nonumber\\
        & \leq \log\big(2\pi e(2^{-\alpha/2}+N_0+\Delta^2)\big)
        -\log(2\pi e\Delta^2) \nonumber\\
        & = \log\Big(1+\frac{2^{-\alpha/2}+N_0}{\Delta^2}\Big).
    \end{align}

    We now compute $I(x_u;\hat{x}_u\vert\{h_{\tilde{u},\tilde{v}}\})$. We have
    \begin{equation*}
        \hat{x}_u 
        = \norm{\bm{h}_u}x_u+\sum_{\tilde{u}\in U(n_{\ell+1})\setminus\{u\}}
        \frac{\bm{h}_u^\dagger\bm{h}_{\tilde{u}}}{\norm{\bm{h}_u}}x_{\tilde{u}}
        +\frac{\bm{h}_u^\dagger}{\norm{\bm{h}_u}}(\bm{z}+\tilde{\bm{z}}).
    \end{equation*}
    Conditioned on $\{\bm{h}_{\tilde{u}}\}_{\tilde{u}\in U(n_{\ell+1})}$,
    \begin{equation*}
        \norm{\bm{h}_u}x_u
        \sim \mc{N}_{\mbb{C}}\big(0,\norm{\bm{h}_u}^2\tilde{r}_u^\alpha
        n_{\ell+1}^{-1}a_\ell^{\alpha/2} \big),
    \end{equation*}
    and
    \begin{equation*}
        \E\Big( \Big|
        {\textstyle\sum_{\tilde{u}\in U(n_{\ell+1})\setminus\{u\}}}
        \frac{\bm{h}_u^\dagger\bm{h}_{\tilde{u}}}{\norm{\bm{h}_u}}x_{\tilde{u}}+
        \frac{\bm{h}_u^\dagger}{\norm{\bm{h}_u}}(\bm{z}+\tilde{\bm{z}})
        \Big|^2 \Big| \{\bm{h}_{\tilde{u}}\}\Big) 
        = 
        n_{\ell+1}^{-1}a_\ell^{\alpha/2}
        \sum_{\tilde{u}\in U(n_{\ell+1})\setminus\{u\}} 
        \tilde{r}_{\tilde{u}}^\alpha\frac{\abs{\bm{h}_u^\dagger\bm{h}_{\tilde{u}}}^2}{\norm{\bm{h}_u}^2}+
        N_0+\Delta^2,
    \end{equation*}
    where we have used the assumption that $\{z_v\}_{v\in
    V(n_{\ell+1})}$ are uncorrelated in the second line. Using
    \eqref{eq:hbounds}, this is, in turn, upper bounded by
    \begin{equation*}
        2^{3\alpha/2} \tilde{r}_u^{\alpha} n_{\ell+1}^{-2}a_\ell^{\alpha}
        \sum_{\tilde{u}\in U(n_{\ell+1})\setminus\{u\}} \tilde{r}_{\tilde{u}}^{\alpha}
        \abs{\bm{h}_u^\dagger\bm{h}_{\tilde{u}}}^2
        + N_0+\Delta^2.
    \end{equation*}
    Similarly, we can lower bound the received signal power as
    \begin{equation*}
        \E\big(\norm{\bm{h}_u}^2\abs{x_u}^2\big)
        \geq 2^{-3\alpha/2}.
    \end{equation*}
    Since Gaussian noise is the worst additive noise under a power
    constraint \cite{iha}, and applying Jensen's inequality to the
    convex function $\log(1+1/x)$, we obtain 
    \begin{align}
        \label{eq:tilderho}
        I(x_u; \hat{x}_u\vert\{h_{\tilde{u},\tilde{v}}\})
        & \geq \E \Bigg(\log\bigg(
        1+\frac{2^{-3\alpha/2}}
        {2^{3\alpha/2} \tilde{r}_u^{\alpha} n_{\ell+1}^{-2} a_\ell^{\alpha}
        \sum_{\tilde{u}\in U(n_{\ell+1})\setminus\{u\}}
        \tilde{r}_{\tilde{u}}^{\alpha} \abs{\bm{h}_u^\dagger\bm{h}_{\tilde{u}}}^2+N_0+\Delta^2}
        \bigg) \Bigg) \nonumber \\
        & \geq \log\bigg(
        1+\frac{2^{-3\alpha/2}}
        {2^{3\alpha/2} \tilde{r}_u^{\alpha} n_{\ell+1}^{-2} a_\ell^{\alpha}
        \sum_{\tilde{u}\in U(n_{\ell+1})\setminus\{u\}}
        \tilde{r}_{\tilde{u}}^{\alpha} \E\big(\abs{\bm{h}_u^\dagger\bm{h}_{\tilde{u}}}^2\big)+N_0+\Delta^2}
        \bigg).
    \end{align}
    We have for $u\neq \tilde{u}$,
    \begin{align}
        \label{eq:angle}
        \E\big(\abs{\bm{h}_u^\dagger\bm{h}_{\tilde{u}}}^2 \big)
        & = \E(\bm{h}_u^\dagger\bm{h}_{\tilde{u}}\bm{h}_{\tilde{u}}^\dagger\bm{h}_u) \nonumber\\
        & = \sum_{v\in V(n_{\ell+1})} \abs{h_{u,v}}^2\abs{h_{\tilde{u},v}}^2 \nonumber\\
        & = \sum_{v\in V(n_{\ell+1})} r_{u,v}^{-\alpha} r_{\tilde{u},v}^{-\alpha},
    \end{align}
    and hence using \eqref{eq:rtildebounds}
    \begin{align*}
        \E\bigg(\tilde{r}_u^\alpha \sum_{\tilde{u} \in U(n_{\ell+1})\setminus\{u\}} 
        \tilde{r}_{\tilde{u}}^\alpha \abs{\bm{h}_u^\dagger\bm{h}_{\tilde{u}}}^2 \bigg)
        & \qquad = \sum_{\tilde{u} \in U(n_{\ell+1})\setminus\{u\}} \sum_{v\in V(n_{\ell+1})} 
        \tilde{r}_u^{\alpha}r_{u,v}^{-\alpha}
        \tilde{r}_{\tilde{u}}^{\alpha}r_{\tilde{u},v}^{-\alpha} \\
        & \qquad \leq 2^{-\alpha} n_{\ell+1}^2 a_{\ell}^{-\alpha}.
    \end{align*}
    Therefore we can continue~\eqref{eq:tilderho} as
    \begin{equation}
        \label{eq:i2}
        I(x_u; \hat{x}_u\vert\{h_{\tilde{u},\tilde{v}}\}) 
        \geq \frac{1}{2}\log\bigg(1+\frac{2^{-3\alpha/2}}{2^{\alpha/2}+N_0+\Delta^2}\bigg)
        \defeq K_4.
    \end{equation}

    Using~\eqref{eq:i1} and~\eqref{eq:i2} in Lemma~\ref{thm:quant}, and
    observing that we only communicate during a fraction
    \begin{equation*}
        P_\ell(n) n_{\ell+1}a_\ell^{-\alpha/2}\leq 1
    \end{equation*}
    of time yields a per source node rate $\rho^\tsf{MAC}_\ell(n)$ arbitrarily
    close to 
    \begin{equation*}
        K_4 P_\ell(n) n_{\ell+1}a_\ell^{-\alpha/2}
    \end{equation*}
    and a quantizer of rate arbitrarily close to
    \begin{equation*}
        \log\Big(1+\frac{2^{-\alpha/2}+N_0}{\Delta^2}\Big)
    \end{equation*}
    bits per observation at each relay node. Since by~\eqref{eq:i2}
    mutual information
    $I(x_u;\hat{x}_u\vert\{h_{\tilde{u},\tilde{v}}\})$ is at least $K_4$
    for every $u\in U(n_{\ell+1})$ during the fraction of time we
    actually communicate, this implies that there are at most $1/K_4$
    observations at each relay node per $n_{\ell+1}$ total message bits.
    Thus the number of bits per relay node required to quantize the
    observations is at most 
    \begin{equation*}
        K_5 \defeq
        \frac{1}{K_4}\log\Big(1+\frac{2^{-\alpha/2}+N_0}{\Delta^2}\Big)
    \end{equation*}
    bits per $n_{\ell+1}$ total message bits sent by the source nodes.
\end{proof}

\subsection{Broadcast Phase}
\label{sec:bc}

At the end of the MAC phase, each node in the relay squarelet received a
part of the message sent by each source node. In the BC phase, each node
in the relay squarelet encodes these messages together for $n_{\ell+1}$
transmit antennas. The encoded message is then quantized and
communicated to all the nodes in the relay squarelet. These nodes then
send the quantized encoded message to the destination nodes
$W(n_{\ell+1})$.  Note that this again induces a uniform traffic pattern
between the nodes in the relay squarelet, i.e., every node needs to
transmit quantized encoded messages to every other node. While this
traffic pattern does not correspond to a permutation traffic matrix it
can be written as a sum of $n_{\ell+1}$ permutation traffic matrices. A
fraction $1/n_{\ell+1}$ of the traffic within the relay squarelet is
transmitted according to each of these permutation traffic matrices.
This setup is depicted in Figure~\ref{fig:bc} in
Section~\ref{sec:construction}.

Assuming for the moment that we have a scheme to send the quantized
encoded messages to the corresponding nodes in the relay squarelet, the
traffic matrix $\widetilde{S}^{(1,1)}(n_{\ell+1})$ between
$V(n_{\ell+1})$ and $W(n_{\ell+1})$ describes then a BC with one
transmitter with $n_{\ell+1}$ antennas and $n_{\ell+1}$ receivers, each
with one antenna. We call this the \emph{BC induced by
$\widetilde{S}^{(1,1)}(n_{\ell+1})$} in the following.

\begin{lemma}
    \label{thm:bc}
    For the BC induced by $\widetilde{S}^{(1,1)}(n_{\ell+1})$ with per-node
    average power constraint $P_\ell(n)\leq
    n_{\ell+1}^{-1}a_\ell^{\alpha/2}$, a rate of
    \begin{equation*}
        \rho^\textup{\tsf{BC}}_\ell(n) \geq K_6
        P_\ell(n)n_{\ell+1}a_\ell^{-\alpha/2}
    \end{equation*}
    is achievable per destination node, and the number of bits required
    to quantize the observations is at most $K_7 (\ell+1) \log(n)$ bits
    at each relay node per $n_{\ell+1}$ total message
    bits\footnote{Total message bits refers to the sum of all message
    bits received by the $n_{\ell+1}$ destination nodes.} received by
    the destination nodes. 
\end{lemma}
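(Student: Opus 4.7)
The plan is to mirror the proof of Lemma \ref{thm:mac}, treating the BC phase as the transmit-side dual of the MAC phase via beamforming in place of matched filtering. I would start by adopting the same bursty transmission strategy: introduce an i.i.d.\ Bernoulli gating process $\{B[t]\}$ with parameter $P_\ell(n) n_{\ell+1} a_\ell^{-\alpha/2}/(1+\eta) \leq 1$ so that relays transmit at instantaneous power $\Theta(n_{\ell+1}^{-1} a_\ell^{\alpha/2})$ only during an appropriate fraction of time, satisfying the overall constraint $P_\ell(n)$. Set $\tilde{r}_w \defeq r_{w,A_1(a_{\ell+1})}/\sqrt{2 a_\ell}$ and choose re-encoded symbols $\tilde{x}_{ij} \sim \mc{N}_{\mbb{C}}(0,\tilde{r}_{w_i}^{\alpha} n_{\ell+1}^{-1} a_\ell^{\alpha/2})$ independently across $i$, so that signals destined for far-away receivers are power-boosted analogously to how the MAC proof distance-normalized transmit powers.

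Next, I would compute the received signal at each destination: with beamformed transmit symbols $x_{kj} = \sum_i h_{v_k,w_i}^\dagger[T+j]/\|\bm{h}_{\cdot,w_i}\|\,\tilde{x}_{ij}$, the output at $w_i$ is
\begin{equation*}
y_{ij} \;=\; \|\bm{h}_{\cdot,w_i}\|\,\tilde{x}_{ij} \;+\; \sum_{i'\neq i} \frac{\bm{h}_{\cdot,w_i}^T \bm{h}_{\cdot,w_{i'}}^\ast}{\|\bm{h}_{\cdot,w_{i'}}\|}\,\tilde{x}_{i'j} \;+\; z_{ij}.
\end{equation*}
Using the same distance bounds \eqref{eq:rtildebounds}--\eqref{eq:hbounds} (valid because the schedule \eqref{eq:schedules} guarantees $r_{v,w} \geq \sqrt{2 a_{\ell+1}}$ for $v \in V(n_{\ell+1})$, $w \in W(n_{\ell+1})$), the signal power is at least $2^{-3\alpha/2}$, while the cross-term variance is controlled by the same calculation as in \eqref{eq:angle}, yielding $\E|\bm{h}_{\cdot,w_i}^T\bm{h}_{\cdot,w_{i'}}^\ast|^2 = \sum_v r_{v,w_i}^{-\alpha} r_{v,w_{i'}}^{-\alpha}$ and hence a constant SINR after applying Jensen. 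Gating reduces the achievable rate by $P_\ell(n) n_{\ell+1} a_\ell^{-\alpha/2}$, producing the claimed bound $\rho_\ell^{\tsf{BC}}(n) \geq K_6 P_\ell(n) n_{\ell+1} a_\ell^{-\alpha/2}$ via Lemma \ref{thm:quant}, now applied with $P_{\bm{y}|x}$ playing the role of the beamformer-plus-channel cascade.

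The quantization accounting then needs a new argument, since in the BC each beamformed symbol $x_{kj}$ has magnitude that grows with $n$: its variance is bounded by $\sum_i |h_{v_k,w_i}|^2 \tilde{r}_{w_i}^{\alpha} n_{\ell+1}^{-1} a_\ell^{\alpha/2}/\|\bm{h}_{\cdot,w_i}\|^2$, which via \eqref{eq:hbounds} is at most polynomial in $n_{\ell+1}$ and hence in $n$. Quantizing this to fixed noise floor $\Delta^2$ requires $O(\log n)$ bits per symbol. The additional factor of $(\ell+1)$ comes from the fact that quantization distortion compounds through the recursive levels of the hierarchy: at level $\ell$, the effective noise $N_0$ already contains the contributions from quantization at levels $0,1,\ldots,\ell-1$, so to keep the SINR bound uniform one budgets an extra $\log n$ bits per hierarchy level, giving $K_7(\ell+1)\log n$ bits per $n_{\ell+1}$ message bits. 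Finally, the causal-CSI issue noted in the footnote is resolved by shifting every BC transmission by a delay $T$ large enough for relay $v_j$ to have observed $\{h_{v_k,w_i}[T+j]\}_{k,i}$ before the actual channel use; the resulting CSI values can be piggybacked onto the $O(\log n)$ quantization stream without changing the order of the quantization rate.

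The main obstacle is the quantization step. Unlike the MAC case, where the received observations at a relay have $O(1)$ power (bounded by the received signal plus additive noise) and constant-rate quantization suffices, here the beamformed transmit symbols coherently combine contributions from $n_{\ell+1}$ virtual streams, so their dynamic range scales polynomially in $n$, and the distortion introduced must be small enough that, when it propagates through both the channel and the downstream hierarchical recursion, the SINR at the destination still supports a constant rate. Formalizing that a choice of $\Delta^2$ which is constant at every level but large enough to keep the aggregated distortion below the noise floor forces precisely the $(\ell+1)\log(n)$-bit scaling is the most delicate part of the argument.
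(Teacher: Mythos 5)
Your outline of the coding/bursty-transmission/beamforming part does mirror the paper's proof (Bernoulli gating, transmit beamforming, the bound \eqref{eq:hbound2} coming from the schedule constraint, Jensen plus the worst-case-Gaussian-noise argument, and Lemma~\ref{thm:quant} with the roles of the two blocks swapped relative to Lemma~\ref{thm:mac}). But there is a genuine gap at exactly the point the paper flags as the new difficulty of the BC phase: causal CSI. Your main derivation writes the beamformer with the true gains $h_{v_k,w_i}[T+j]$, and your proposed fix --- ``shift every BC transmission by a delay $T$ large enough for relay $v_j$ to have observed $\{h_{v_k,w_i}[T+j]\}$ before the actual channel use'' --- is logically impossible under fast fading with causal CSI: the gains at time $T+j$ are not known to anyone strictly before time $T+j$, and the beamformed symbols must be computed, quantized, and distributed to the other relay nodes through the level-$(\ell+1)$ recursion (many channel uses) before they can be transmitted. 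No amount of delay resolves this, and ``piggybacking CSI onto the quantization stream'' does not help either. The paper circumvents it differently: each phase is quantized to $\{0,\pi/2,\pi,3\pi/2\}$, the relay pre-encodes a copy of the beamformed symbols for \emph{every} quantized channel state, the states are cycled over time, and at each transmission slot the relays send the pre-encoded sample whose assumed state is closest to the actual realization; ergodicity of $\{\theta_{v,w}[t]\}_t$ guarantees each state occurs a $1/(1+\eta)$ fraction of time, the destinations (which do have current CSI) reorder, and the only price is a constant beamforming loss, $\cos^2(\pi/4)\norm{\bm{h}_w}^4\leq\lvert\bm{h}_w\hat{\bm{h}}_w^\dagger\rvert^2\leq\norm{\bm{h}_w}^4$, together with $\E\lvert\bm{h}_w\hat{\bm{h}}_{\tilde{w}}^\dagger\rvert^2=\E\lvert\bm{h}_w\bm{h}_{\tilde{w}}^\dagger\rvert^2$ so that the interference calculation \eqref{eq:angle} still applies. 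Without some such device your SINR computation presumes non-causal CSI and the lemma is not proved for the fast-fading model it is stated in.

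Two smaller points. First, your explanation of the $K_7(\ell+1)\log(n)$ quantization bound is not the paper's mechanism and is not needed: quantization distortion does \emph{not} compound across hierarchy levels, because the quantized symbols are delivered digitally (reliably) by the level-$(\ell+1)$ scheme; the paper keeps $\Delta^2$ constant at every level, and the factor $(\ell+1)$ arises simply because the per-symbol variance is at most $n_{\ell+1}^{-1}a_\ell^{\alpha/2}=2^{\ell+1}\gamma^{1+\ell(1-\alpha/2)}(n)n^{\alpha/2-1}\leq 2^{\ell+1}n^{\alpha/2}$, whose logarithm (divided by $\Delta^2$) is at most $K_7(\ell+1)\log(n)$; your own dynamic-range observation already gives this, so the ``extra $\log n$ bits per level'' story should be dropped. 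Second, with your distance-weighted powers $\tilde{r}_{w_i}^\alpha n_{\ell+1}^{-1}a_\ell^{\alpha/2}$ you still must verify the per-relay-node transmit constraint \emph{after} beamforming and quantization, as in \eqref{eq:xhatvar}: by \eqref{eq:hbound2} the beamformed symbol has variance up to $2^\alpha n_{\ell+1}^{-1}a_\ell^{\alpha/2}+\Delta^2$, so a normalization such as the paper's $K=2^{-\alpha}(1-\Delta^2)$ is required; this is only a constant-factor repair, but it is part of the proof.
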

\begin{proof}
    Consider a node $v\in V(n_{\ell+1})$ in the relay squarelet, say the
    first one. From the MAC phase, this node received the first part of
    the messages of each source node $u\in U(n_{\ell+1})$. We would like
    to jointly encode these message parts at the relay node using
    transmit beamforming, and then transmit the corresponding encoded
    signal using all the nodes in the relay squarelet. However, this
    cannot be done directly, because at the encoding time, the future
    channel state at transmission time is unknown. 
    
    We circumvent this problem by reordering the signals to be
    transmitted at the relay nodes as follows. Let
    \begin{equation*}
        \{\hat{\theta}_{v,w}\}_{v\in V(n_{\ell+1}), w\in W(n_{\ell+1})}
        \in\{0,\pi/2,\pi,3\pi/2\}^{n_{\ell+1}^2}.
    \end{equation*}
    be a ``quantized'' channel state. The part of the messages at
    node one in the relay squarelet is encoded for $n_{\ell+1}$ transmit nodes
    with an assumed channel gain of 
    \begin{equation*}
        \hat{h}_{v,w}[t] =
        r_{v,w}^{-\alpha/2}\exp(\sqrt{-1}\hat{\theta}_{v,w}[t]),
    \end{equation*}
    where the $\{\hat{\theta}_{v,w}[t]\}_{v,w,t}$ are cycled as a
    function of $t$ through all possible values in
    $\{0,\pi/2,\pi,3\pi/2\}^{n_{\ell+1}^2}$. The components of the
    encoded messages are then quantized and each component sent to the
    corresponding node in the relay squarelet. Once all nodes in the
    relay squarelet have received the encoded message, they send in each
    time slot a sample of the encoded messages corresponding to the
    quantized channel state closest (in Euclidean distance) to the
    actual channel realization in that time slot. By ergodicity of
    $\{\theta_{u,v}[t]\}_{t}$, each quantized channel state is used
    approximately the same number of times. More precisely, as the
    message length grows to infinity, we can send samples of the encoded
    message parts a $1/(1+\eta)$ fraction of time with probability
    approaching $1$ for any $\eta>0$.  Since we have no constraint on
    the encoding delay in our setup, we can choose $\eta$ arbitrarily
    small, and given that we are only interested in scaling laws, we
    will ignore this term in the following to simplify notation. Note
    that the destination nodes can reorder the received samples since we
    assume full CSI. In the following, we let
    $\{\hat{\theta}_{v,w}\}_{v,w}$ be the random quantized channel state
    induced by $\{\theta_{v,w}\}_{v,w}$ through the above procedure.
    Denote by $\{\hat{h}_{v,w}\}_{v,w}$ the corresponding channel
    gains.

    As in the MAC phase, the nodes in the relay squarelet send signals
    at a power (essentially) $n_{\ell+1}^{-1}a_\ell^{\alpha/2}$ a
    fraction $P_\ell(n)n_{\ell+1}a_\ell^{-\alpha/2}\leq 1$ of time and
    are silent for the remaining time. To create interference at uniform
    power, this is done in the same randomized manner as in the MAC
    phase.  Generate independently for each region $A(a_\ell)$ a
    Bernoulli process $\{B[t]\}_{t\in\mbb{N}}$ with parameter
    $P_\ell(n)n_{\ell+1}a_\ell^{-\alpha/2}/(1+\eta)$ for some small
    $\eta>0$. The nodes in $A(a_\ell)$ are active whenever $B[t]=1$ and
    remain silent otherwise. As before, we ignore the additional
    $1/(1+\eta)$ term. Again we only need to consider the fraction of
    time during which $B[t] = 1$.  
    
    Consider the message part at a relay node for destination node $w\in
    W(n_{\ell+1})$.  We encode this part independently; call
    $\tilde{x}_w$ the encoded message part. The relay node then performs
    transmit beamforming to construct the encoded message for all its
    destination nodes, i.e., 
    \begin{equation*}
        \bm{x} = \sum_{w\in W(n_{\ell+1})}\frac{\hat{\bm{h}}_w^\dagger}{\norm{\bm{h}_w}}
        \tilde{x}_w,
    \end{equation*}
    where row vector $\bm{h}_w=\{h_{v,w}\}_{v\in V(n_{\ell+1})}$ contains the
    channel gains to node $w$, and where we have used
    $\lvert\hat{h}_{v,w}\rvert=\abs{h_{v,w}}$. The relay node then quantizes
    the vector of encoded messages componentwise and forwards the
    quantized version $\hat{\bm{x}}$ to the other nodes in the relay
    squarelet. These nodes then send $\hat{\bm{x}}$ over the channel to
    the destination nodes. The received signal at destination node $w$
    is thus
    \begin{equation*}
        y_w = \bm{h}_w\hat{\bm{x}}+z_w.
    \end{equation*}

    With this, we have the setup considered in Lemma~\ref{thm:quant}
    (with different variable names). The first ``channel'' in
    Lemma~\ref{thm:quant} (see Figure~\ref{fig:quant}) will correspond
    to the transmit beamforming used at the relay squarelet.  The second
    channel in Lemma~\ref{thm:quant} will now correspond to the wireless
    channel between the relay squarelet $V(n_{\ell+1})$ and a
    destination node $w$. To apply Lemma~\ref{thm:quant}, we need to
    find a distribution for $\tilde{x}_w$ and for $\hat{x}_v|x_v$. We
    also need to guarantee that $\hat{x}_v$ satisfies the power
    constraint at each node $v$ in the relay squarelet. For each $w\in
    W(n_{\ell+1})$ let
    $\tilde{x}_w\sim\mc{N}_{\mbb{C}}(0,Kn_{\ell+1}^{-1}a_\ell^{\alpha/2})$
    (for some $K$ to be chosen later) independent of
    $\tilde{x}_{\tilde{w}}$ for $w\neq\tilde{w}$, and let
    $\hat{x}_v=x_v+\tilde{z}_v$ for
    $\tilde{z}_v\sim\mc{N}_{\mbb{C}}(0,\Delta^2)$ independent of
    $\bm{x}$ and for some $\Delta^2>0$. We then have
    \begin{equation*}
        y_w =  \frac{\bm{h}_w\hat{\bm{h}}_{w}^\dagger}{\norm{\bm{h}_w}}\tilde{x}_w
        +\sum_{\tilde{w}\in W(n_{\ell+1})\setminus\{w\}}
        \frac{\bm{h}_w\hat{\bm{h}}_{\tilde{w}}^\dagger}{\norm{\bm{h}_{\tilde{w}}}}
        \tilde{x}_{\tilde{w}}+\bm{h}_w\tilde{\bm{z}}+z_w.
    \end{equation*}

    We proceed by computing the mutual informations $I(x_v;
    \hat{x}_v\vert\{h_{\tilde{u},\tilde{v}}\})$ and $I(\tilde{x}_w;
    y_w\vert\{h_{\tilde{u},\tilde{v}}\})$ as required in Lemma
    \ref{thm:quant} (the conditioning in $\{h_{\tilde{u},\tilde{v}}\}$
    again being due to the availability of full CSI).  Note first that
    by construction of $\widetilde{S}^{(1,1)}(n_{\ell+1})$, we have for
    any $w\in W(n_{\ell+1})$ 
    \begin{equation*}
        2 \min_{v\in V(n_{\ell+1})} r_{v,w} 
        \geq \max_{v\in V(n_{\ell+1})} r_{v,w},
    \end{equation*}
    and therefore
    \begin{equation}
        \label{eq:hbound2}
        \frac{\abs{h_{v,w}}^2}{\norm{\bm{h}_{w}}^2}
        \leq \frac{\big(\min_{v\in V(n_{\ell+1})} r_{v,w}\big)^{-\alpha}}
        {n_{\ell+1}\big(\max_{v\in V(n_{\ell+1})} r_{v,w}\big)^{-\alpha}} \\
        \leq \frac{2^\alpha}{n_{\ell+1}}.
    \end{equation}

    We start by computing $I(x_v;
    \hat{x}_v\vert\{h_{\tilde{u},\tilde{v}}\})$. $\hat{x}_v$ has mean
    zero and variance 
    \begin{align}
        \label{eq:xhatvar}
        \E \big(\abs{\hat{x}_v}^2 \big)
        & = \sum_{w\in W(n_{\ell+1})}\frac{\abs{h_{v,w}}^2}{\norm{\bm{h}_{w}}^2}
        Kn_{\ell+1}^{-1}a_\ell^{\alpha/2} +\Delta^2 \nonumber\\
        & \leq n_{\ell+1} \frac{2^\alpha}{n_{\ell+1}}
        Kn_{\ell+1}^{-1}a_\ell^{\alpha/2}+\Delta^2 \nonumber\\
        & \leq n_{\ell+1}^{-1}a_\ell^{\alpha/2},
    \end{align}
    for
    \begin{equation*}
        K \defeq 2^{-\alpha}(1-\Delta^2),
    \end{equation*}
    which is positive for $\Delta^2<1$, and where we have used
    \eqref{eq:hbound2} and that
    \begin{equation*}
        n_{\ell+1}^{-1}a_{\ell}^{\alpha/2}
        \geq 2^{\ell+1}\gamma(n)
        \geq 1
    \end{equation*}
    by \eqref{eq:assumptions}.  Equation~\eqref{eq:xhatvar} shows that
    $\hat{x}_v$ satisfies the power constraint of node $v$ in the relay
    squarelet $V(n_{\ell+1})$.  Moreover, we obtain
    \begin{align}
        \label{eq:i3}
        I(x_v; \hat{x}_v\vert\{h_{\tilde{u},\tilde{v}}\})
        & = h(\hat{x}_v\vert\{h_{\tilde{u},\tilde{v}}\})
        -h(\hat{x}_v|x_v,\{h_{\tilde{u},\tilde{v}}\}) \nonumber\\
        & \leq \log\Big(2\pi e \E\big(\abs{\hat{x}_v}^2\big)\Big)-\log(2\pi e\Delta^2) \nonumber\\
        & \leq \log\bigg(\frac{n_{\ell+1}^{-1}a_\ell^{\alpha/2}}{\Delta^2}\bigg).
    \end{align}

    It remains to compute $I(\tilde{x}_w;
    y_w\vert\{h_{\tilde{u},\tilde{v}}\})$. Note that the encoding
    procedure guarantees that
    \begin{equation*}
        \cos(\pi/4)^2\norm{\bm{h}_w}^4
        \leq \lvert\bm{h}_w\hat{\bm{h}}^\dagger_w\rvert^2
        \leq \norm{\bm{h}_w}^4.
    \end{equation*}
    Moreover, for $w\neq\tilde{w}$,
    \begin{align*}
        \E\big(\lvert \bm{h}_{w}\hat{\bm{h}}^\dagger_{\tilde{w}}\rvert^2\big)
        & = \E (\bm{h}_{w}\hat{\bm{h}}^\dagger_{\tilde{w}}\hat{\bm{h}}_{\tilde{w}} \bm{h}^\dagger_{w}) \\
        & = \sum_{v\in V(n_{\ell+1})} 
        \E \big(\abs{h_{vw}}^2\lvert\hat{h}_{v\tilde{w}}\rvert^2\big) \\
        & = \sum_{v\in V(n_{\ell+1})} 
        \E \big(\abs{h_{vw}}^2\abs{h_{v\tilde{w}}}^2\big) \\
        & = \E\big( \lvert \bm{h}_{w}\bm{h}^\dagger_{\tilde{w}}\rvert^2\big) .
    \end{align*}
    From this, we get by a similar argument as in Lemma~\ref{thm:mac} that
    \begin{align}
        \label{eq:i4}
        I(\tilde{x}_w; y_w\vert\{h_{\tilde{u},\tilde{v}}\}) 
        \geq K_6.
    \end{align}

    Using~\eqref{eq:i3} and~\eqref{eq:i4} in Lemma~\ref{thm:quant}, and
    observing that we only communicate during a fraction
    \begin{equation*}
        P_\ell(n)n_{\ell+1}a_\ell^{-\alpha/2}
    \end{equation*}
    of time, yields a per destination node rate $\rho^\tsf{BC}_\ell(n)$
    arbitrarily close to 
    \begin{equation*}
        K_6 P_\ell(n)n_{\ell+1}a_\ell^{-\alpha/2}
    \end{equation*}
    bits per channel use and a quantizer rate arbitrarily close to
    \begin{equation*}
        \log\Big(\frac{n_{\ell+1}^{-1}a_\ell^{\alpha/2}}{\Delta^2}\Big)
    \end{equation*}
    bits per encoded sample. Since by~\eqref{eq:i4} mutual information
    $I(\tilde{x}_w; y_w\vert\{h_{\tilde{u},\tilde{v}}\})$ is at least
    $K_6$ for every $w\in W(n_{\ell+1})$ during the fraction of time we
    actually communicate, this implies that there are at most $1/K_6$
    encoded message samples for each relay node per $n_{\ell+1}$ total
    message bits received by the destination nodes $W(n_{\ell+1})$. Thus
    the number of bits required at each relay node to quantize the
    encoded message samples is at most 
    \begin{align*}
        \frac{1}{K_6}\log\Big(\frac{n_{\ell+1}^{-1}a_\ell^{\alpha/2}}{\Delta^2}\Big)
        & = \frac{1}{K_6}
        \log\Big(\frac{1}{\Delta^2}2^{\ell+1}\gamma^{1+\ell(1-\alpha/2)}(n) n^{\alpha/2-1}\Big) \\
        & \leq \frac{1}{K_6}
        \log\Big(\frac{1}{\Delta^2}2^{\ell+1} n^{\alpha/2}\Big) \\
        & \leq K_7 (\ell+1) \log(n)
    \end{align*}
    bits per $n_{\ell+1}$ total message bits received by the destination
    nodes, and where we have used $\gamma(n)\leq n$ by
    \eqref{eq:assumptions}.
\end{proof}

\section{Proof of Theorem~\ref{thm:achievability}}
\label{sec:achievability}

The proof of Theorem \ref{thm:achievability} is split into two parts. In
Section \ref{sec:fast} we prove the theorem for fast fading, and in
Section \ref{sec:slow} for slow fading.

\subsection{Fast Fading}
\label{sec:fast}

In this section, we prove Theorem \ref{thm:achievability} under fast
fading, i.e., $\{\theta_{u,v}[t]\}_{t}$ is stationary and ergodic in
$t$. We first prove that the assumptions on the power constraint and the
interference made in Section \ref{sec:analysis} (see Lemmas
\ref{thm:mac} and \ref{thm:bc}) during the analysis of one level of the
hierarchical relaying scheme are valid. We then use the results proved
there to analyze the behavior of the entire hierarchy, yielding a lower
bound on the per-node rate achievable with hierarchical relaying.

We first argue that the constraint $P_\ell(n) \leq
n_{\ell+1}^{-1}a_\ell^{\alpha/2}$ needed in Lemmas \ref{thm:mac} and
\ref{thm:bc} is satisfied.  Consider the hierarchical relaying scheme as
described in Section~\ref{sec:desc} and fix a level $\ell$, $0\leq\ell<
L=L(n)$ in this hierarchy. At level $\ell$ we have a square of area
$a_\ell= n/\gamma^{\ell}(n)$, with $n_\ell= n/2^\ell \gamma^\ell(n)$
source-destination pairs.  Since we are time sharing between
$K_22^{-\ell}\gamma(n)$ relay squarelets at this level, we have an
average power constraint of 
\begin{equation*}
    P_\ell(n)\defeq K_2 2^{-\ell}\gamma(n)
\end{equation*}
during the time any particular relay squarelet is active.
Since $\alpha>2$ and since $n\gamma^{-L(n)}(n)\to\infty$ as
$n\to\infty$, we have, for $n$ large enough (independent of $\ell$),
that
\begin{align*}
    P_\ell(n)
    & = K_22^{-\ell}\gamma(n) \\
    & \leq 2^{-\ell} \gamma(n) \Big(\frac{n}{\gamma^{L(n)}(n)}\Big)^{\alpha/2-1} \\
    & \leq 2^{\ell+1} \gamma(n) \Big(\frac{n}{\gamma^{\ell}(n)}\Big)^{\alpha/2-1} \\
    & = n_{\ell+1}^{-1}a_\ell^{\alpha/2}.
\end{align*}
Therefore the power constraint in Lemmas~\ref{thm:mac}
and~\ref{thm:bc} is satisfied. 

We continue by analyzing the interference caused by spatial re-use.
Recall that the MAC and BC phases at level $\ell$ induce permutation
traffic within the dense squarelets at level $\ell+1$. The permutation
traffic within those dense squarelets at level $\ell+1$ is transmitted
in parallel with spatial re-use. We now describe in detail how this
spatial re-use is performed. Partition the squarelets of area
$a_{\ell+1}$ (i.e., at level $\ell+1$) into four subsets such that in
each subset all squarelets are at distance at least $\sqrt{a_{\ell+1}}$
from each other. The traffic that the MAC and BC phases at level $\ell$
induce in each of the relay squarelets at level $\ell+1$ is transmitted
simultaneously within all relay squarelets in the same subset. Consider
now one such subset.  We show that at any relay squarelet the
interference from other relay squarelets in the same subset is
stationary and ergodic within each phase, additive (i.e., independent of
the signals and channel gains in this relay squarelet), and of bounded
power $N_0-1$ independent of $n$. 

We first argue that the interference is stationary and ergodic within
each phase. Note first that on any level $\ell+1$ in the hierarchy, all
relay squarelets are either simultaneously in the MAC phase or
simultaneously in the BC phase. Furthermore, all relay squarelets are
also synchronized for transmissions within each of these phases (recall
that the induced traffic in level $\ell+1$ is uniform and is sent
sequentially as permutation traffic). Hence it suffices to show that the
interference generated by either the MAC or BC induced by some
permutation traffic matrix is stationary and ergodic. Since all
codebooks for either of these cases are generated as i.i.d. Gaussian
multiplied by a Bernoulli process, and in the BC phase beamformed for
stationary and ergodic fading, this is indeed the case.

The additivity of the interference follows easily for the MAC phase,
since codebooks are generated independently of the channel realization
in this case. Moreover, since the channel gains are independent from
each other and all codebooks are generated as independent zero mean
processes, the interference in the MAC phase is also uncorrelated (over
space) within each relay squarelet. For the BC phase, the codebook
depends only on the channel gains within each relay squarelet at level
$\ell+1$. Since the channel gains within relay squarelets are
independent of the channel gains between relay squarelets, this
interference is additive as well. 

We now bound the interference power. Note that by the randomized
time-sharing construction within the MAC and BC phases (see Lemmas
\ref{thm:mac} and \ref{thm:bc}), in each relay squarelet, at most
$n_{\ell+1}$ nodes transmit at an average power of $1$. In the MAC
phase, all nodes use independently generated codebooks with power at
most $1$, and thus the received interference power from another relay
squarelet at distance $i\sqrt{a_{\ell+1}}$ is at most
\begin{equation*}
    n_{\ell+1} i^{-\alpha}a_{\ell+1}^{-\alpha/2}
    = i^{-\alpha}2^{-(\ell+1)}\Big(\frac{n}{\gamma^{\ell+1}(n)}\Big)^{1-\alpha/2}
    \leq i^{-\alpha},
\end{equation*}
by~\eqref{eq:assumptions}.  In the BC phase, the nodes in each active
relay squarelet use beamforming to transmit to nodes within their own
squarelet.  Since the channel gains within a relay squarelet are
independent of the channel gains between relay squarelets, the same
calculation as in \eqref{eq:angle} shows that we can upper bound the
received interference power from another relay squarelet at distance
$i\sqrt{a_{\ell+1}}$ by
\begin{equation*}
    n_{\ell+1} i^{-\alpha}a_{\ell+1}^{-\alpha/2}
    \leq i^{-\alpha},
\end{equation*}
in the BC phase as well. 

Now, by the way in which we perform spatial re-use, every active relay
squarelet has at most $8i$ active relay squarelets at distance at least
$i\sqrt{a_{\ell+1}}$. Hence the total interference power received at an
active relay squarelet is at most
\begin{equation*}
    \sum_{i=1}^{\infty}8i2^\alpha i^{-\alpha} \defeq N_0-1 < \infty
\end{equation*}
since $\alpha > 2$. With this, we have shown that the interference term
has the properties required for Lemmas~\ref{thm:mac} and~\ref{thm:bc} to
apply.

We now apply those two lemmas to obtain a lower bound on the rate
achievable with hierarchical relaying. Call $\tau_\ell(n)$ the number of
channel uses to transmit one bit from each of $n_{\ell}$ source nodes to
the corresponding destination nodes at level $\ell$.
Lemma~\ref{thm:relay} states that for $n$ large enough (independent of
$\ell$), we relay over each dense squarelet at most $K_3 2^\ell$ times.
Combining this with Lemma~\ref{thm:mac}, we see that to transmit one bit
from each source to its destination at this level we need at most
\begin{equation*}
    4K_32^\ell K_22^{-\ell}\gamma(n) \frac{1}{K_4P_\ell(n)}
    n_{\ell+1}^{-1}a_{\ell}^{\alpha/2} \\
    = \frac{K_3 2^{2\ell+3}}{K_4}n^{\alpha/2-1}\gamma^{1+\ell(1-\alpha/2)}(n)
\end{equation*}
channel uses for the MAC phase. Here, the factor $4$ accounts for
the spatial re-use, $K_32^\ell$ accounts for relaying over the same
relay squarelets multiple times, $K_22^{-\ell}\gamma(n)$ accounts for
time sharing between the relay squarelets, and the last term
accounts for the time required to communicate over the MAC.
Similarly, combining Lemmas~\ref{thm:relay} and~\ref{thm:bc}, we need
at most
\begin{equation*}
    \frac{K_3 2^{2\ell+3}}{K_6}n^{\alpha/2-1}\gamma^{1+\ell(1-\alpha/2)}(n)
\end{equation*}
channel uses for the BC phase. Moreover, at level $\ell+1$ in the hierarchy 
this induces a per-node traffic demand of at most $K_5$ bits from the MAC phase, and at
most $K_7(\ell+1)\log(n)$ from the BC phase. Thus we obtain the following
recursion 
\begin{align}
    \label{eq:rec}
    \tau_\ell(n) 
    & \leq 8 K_3\Big(\frac{1}{K_4} + \frac{1}{K_6}\Big) 
    n^{\alpha/2-1}\gamma(n)\big(4\gamma^{1-\alpha/2}(n)\big)^{\ell}
    + (K_5+K_7(\ell+1)\log(n))\tau_{\ell+1}(n) \nonumber \\
    & \leq \widetilde{K}n^{\alpha/2-1}\gamma(n)4^{\ell}+
    K(\ell+1)\log(n)\tau_{\ell+1}(n) \nonumber\\
    & \leq \widetilde{K}n^{\alpha/2-1}\gamma(n)4^{L}+
    KL\log(n)\tau_{\ell+1}(n)
\end{align}
for positive constants $K,\widetilde{K}$ independent of $n$ and
$\ell$.

We use TDMA at scale $a_L$ with $n_L$ nodes and
source-destination pairs. Time sharing between all source-destination
pairs, we have (during the time we communicate for each node) an average
power constraint of $n_L$. Since at this level we communicate over
a distance of at most $2a_L^{1/2}$, we have 
\begin{equation}
    \label{eq:tauL}
    \tau_L(n) 
    \leq n_L\log^{-1}\bigg(1+\frac{n_L}{2^{\alpha}N_0 a_L^{\alpha/2}}\bigg).
\end{equation}
Since
\begin{equation*}
    n_L a_L^{-\alpha/2} 
    \leq n_L a_L^{-1} 
    = 2^{-L}
    \to 0
\end{equation*}
as $n\to\infty$, we can upper bound
\eqref{eq:tauL} as
\begin{align}
    \label{eq:tauL2}
    \tau_L(n) 
    & \leq K' a_L^{\alpha/2} \nonumber\\
    & = K' n^{\alpha/2}\gamma^{-L \alpha/2}(n) \nonumber\\
    & \leq K' n^{\alpha/2}\gamma^{-L}(n)
\end{align}
for some constant $K'$. 

Now, using the recursion~\eqref{eq:rec} $L$ times, and combining with
\eqref{eq:tauL2}, we obtain
\begin{align}
    \label{eq:t0}
    \tau_0 (n)
    & \leq \widetilde{K}n^{\alpha/2-1}\gamma(n)4^L+ KL\log(n)\tau_1(n) \nonumber\\
    & \leq \ldots \nonumber\\
    & \leq \widetilde{K}n^{\alpha/2-1}\gamma(n)4^L\bigg(\sum_{\ell=0}^{L-1}\big(K L \log(n)\big)^{\ell}\bigg) \nonumber\\
    & \quad + \big(KL\log(n)\big)^L \tau_L(n) \nonumber\\
    & \leq n^{\alpha/2-1}\big( K L \log(n)\big)^{L}
    \Big(\widetilde{K}4^L\gamma(n) + K' n\gamma^{-L}(n)\Big).
\end{align}
Using the definition of $\gamma(n)$ and $L=L(n)$ in \eqref{eq:gammadef},
we have for $n$ large enough
\begin{align*}
    \big(KL(n)\log(n)\big)^{L(n)} 
    & \leq n^{2\log^{-1/2-\delta}(n)\log\log(n)}, \\
    4^{L(n)}\gamma(n)
    & \leq n^{2\log^{-1/2-\delta}(n)+\log^{\delta-1/2}(n)}, \\
    n\gamma^{-L(n)}(n)
    & \leq n^{\log^{\delta-1/2}(n)}.
\end{align*}
Since $\delta>0$, the $n^{\log^{\delta-1/2}(n)}$ term dominates in 
\eqref{eq:t0}, and we obtain 
\begin{equation*}
    \tau_0(n)
    \leq \tilde{b}(n)n^{\alpha/2-1},
\end{equation*}
where 
\begin{equation*}
    \tilde{b}(n) \leq n^{O(\log^{\delta-1/2}(n))},
\end{equation*}
as $n\to\infty$. Therefore
\begin{equation*}
    \rho^*(n) 
    \geq \rho^{\tsf{HR}}(n)
    = 1/\tau_0(n) 
    \geq b(n)n^{1-\alpha/2},
\end{equation*}
with
\begin{equation*}
    b(n) \geq n^{-O(\log^{\delta-1/2}(n))},
\end{equation*}
concluding the proof for the fast fading case.

\subsection{Slow Fading}
\label{sec:slow}

In this section, we prove Theorem \ref{thm:achievability} under slow
fading, i.e., $\{\theta_{u,v}[t]\}_{t}$ is constant as a function of
$t$. We sketch the necessary modifications for the scheme described in
Section~\ref{sec:desc} to achieve a per-node rate of at least
$b(n)n^{1-\alpha/2}$ in the slow fading case. 

Consider level $\ell$, $0\leq \ell < L(n)$ in the hierarchy.
Instead of relaying the message of a source-destination pair over
one relay squarelet as in the scheme described in
Section~\ref{sec:desc}, we relay the message over many dense
squarelets that are at least at distance $\sqrt{2a_{\ell+1}}$ from
both the source and the destination nodes. We time share between the
different relays. The idea here is that the wireless channel between
any node and its relay squarelet might be in a bad state due to the
slow fading, making communication over this relay squarelet
impossible. Averaged over many relay squarelets, however, we get
essentially the same performance as in the fast fading case.

We first state a (somewhat weaker) version of Lemma \ref{thm:relay},
appropriate for this setup. Consider again the collection of schedules
$\mc{S}(n_\ell)$ and $\widetilde{\mc{S}}(n_\ell)$ satisfying the
conditions that no relay squarelet is selected by more than $n_{\ell+1}$
source-destination pairs and that all sources and destinations are at
least at distance $\sqrt{2a_{\ell+1}}$ from their relay squarelet (see
Section \ref{sec:relay} for the formal definition). The next lemma shows
that for each source-destination pair, we can find
$K_22^{-\ell-1}\gamma(n)$ distinct relay squarelets satisfying the above
conditions (the requirement that these relay squarelets are distinct is
expressed by the orthogonality condition of the schedules in Lemma
\ref{thm:relay1} below).
\begin{lemma}
    \label{thm:relay1} 
    For every $n$ large enough (independent of $\ell$) and every
    permutation traffic matrix
    $\lambdauc(n_\ell)\in\{0,1\}^{n_\ell\times n_\ell}$ there are schedules
    $\{S^{(i)}(n_\ell)\}_{i=1}^{K_2 2^{-\ell} \gamma^2(n)} \subset
    \mc{S}(n_\ell)$, ${\{\widetilde{S}^{(i)}(n_\ell)\}}_{i=1}^{K_2
    2^{-\ell} \gamma^2(n)} \subset \widetilde{\mc{S}}(n_\ell)$
    satisfying
    \begin{equation*}
        \lambdauc(n_\ell) = \frac{1}{K_22^{-\ell-1}\gamma(n)} 
        \sum_{i=1}^{ K_2 2^{-\ell}\gamma^2(n)} S^{(i)}(n_\ell)\widetilde{S}^{(i)}(n_\ell),
    \end{equation*} 
    where $\{S^{(i)}(n_\ell)\}_i$, $\{\widetilde{S}^{(i)}(n_\ell)\}_i$ 
    are collections of orthogonal matrices in the sense that for $i\neq i'$,
    \begin{equation}
        \label{eq:disjoint}
        \begin{aligned}
            \sum_{u,k} s_{u,k}^{(i)} s_{u,k}^{(i')} & = 0, \\
            \sum_{k,u} \tilde{s}_{k,u}^{(i)}
            \tilde{s}_{k,u}^{(i')} & = 0.
        \end{aligned}
    \end{equation}
\end{lemma}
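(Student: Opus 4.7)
The plan is to adapt the greedy argument of Lemma~\ref{thm:relay} by assigning each source--destination pair to $K_2 2^{-\ell-1}\gamma(n)$ distinct relay squarelets instead of just one, and then carefully packing these assignments into schedules so that both the membership constraints of $\mc{S}(n_\ell),\widetilde{\mc{S}}(n_\ell)$ and the new orthogonality constraint \eqref{eq:disjoint} are satisfied.

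First I would invoke the same distance-counting argument used in the proof of Lemma~\ref{thm:relay}: of the at least $K_2 2^{-\ell}\gamma(n)$ dense squarelets guaranteed by Lemma~\ref{thm:structure}, at most $50$ lie within distance $\sqrt{2a_{\ell+1}}$ of the source or destination squarelet of a given pair, and for $n$ large (independent of $\ell$) $50\leq K_2 2^{-\ell-1}\gamma(n)$ because $2^{-L(n)}\gamma(n)\to\infty$ by \eqref{eq:assumptions}. Hence every pair $(u,w)\in\lambdauc(n_\ell)$ admits at least $K_2 2^{-\ell-1}\gamma(n)$ admissible dense relays, and I fix a set $\mc{K}_{u,w}=\{k^{(j)}_{u,w}\}_{j=1}^{K_2 2^{-\ell-1}\gamma(n)}$ of exactly that many distinct admissible relays, together with an ordering of them.

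Next I would split the construction into $K_2 2^{-\ell-1}\gamma(n)$ independent copies indexed by $j$. In copy $j$ every pair $(u,w)$ is preassigned to the single relay $k^{(j)}_{u,w}$, and it only remains to group these $n_\ell$ assignments into schedules respecting the per-relay capacity $n_{\ell+1}=n_\ell/(2\gamma(n))$. A straightforward greedy packing -- process pairs sequentially and open a new schedule within copy $j$ whenever a pair's preassigned relay has already been chosen by $n_{\ell+1}$ other pairs in the current schedule -- produces at most $2\gamma(n)$ schedules per copy (since each relay receives at most $n_\ell$ requests in a copy, and each schedule accommodates $n_{\ell+1}=n_\ell/(2\gamma(n))$ of them per relay). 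Every such schedule lies in $\mc{S}(n_\ell)\times\widetilde{\mc{S}}(n_\ell)$ by construction: the distance constraint is inherited from $\mc{K}_{u,w}$, the per-source and per-destination constraints follow from the fact that each pair is placed in at most one schedule per copy, and the per-relay capacity is enforced by the greedy rule. Concatenating over copies yields at most $K_2 2^{-\ell-1}\gamma(n)\cdot 2\gamma(n)=K_2 2^{-\ell}\gamma^2(n)$ schedules in total, and every true pair $(u,w)$ contributes a unit to $S^{(i)}\widetilde{S}^{(i)}$ for exactly one index $i$ per copy and $K_2 2^{-\ell-1}\gamma(n)$ indices in aggregate, so dividing by that factor recovers $\lambdauc(n_\ell)$.

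Finally, the orthogonality conditions in \eqref{eq:disjoint} are automatic from the copy construction: for a fixed source $u$, the relays $\{k^{(j)}_{u,w}\}_j$ used across different copies are pairwise distinct by the choice of $\mc{K}_{u,w}$, and within a single copy $u$ appears in exactly one schedule, so no pair $(u,k)$ is activated in two different schedules; the identical argument applies to destinations. The main obstacle I anticipate is the bookkeeping to verify that the intra-copy greedy packing really fits in $2\gamma(n)$ schedules (an elementary bipartite edge-coloring count) and that, when the schedules from different copies are concatenated, the orthogonality and schedule-membership properties are preserved simultaneously -- everything else reduces to arguments already present in the proof of Lemma~\ref{thm:relay}.
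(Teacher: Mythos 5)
Your proposal is correct and follows essentially the same route as the paper's proof: each pair is assigned $K_2 2^{-\ell-1}\gamma(n)$ distinct admissible dense relays (via the distance count from Lemma~\ref{thm:relay}), the assignments are packed into at most $2\gamma(n)$ capacity-respecting groups, giving $K_2 2^{-\ell}\gamma^2(n)$ schedules in total, and orthogonality follows because each pair uses each of its relays exactly once. The only difference is bookkeeping -- you iterate over relay indices (``copies'') and greedily pack within each copy, whereas the paper runs synchronized ``rounds'' that stop when some relay reaches $n_{\ell+1}$ pairs and then emits one schedule per relay index -- which yields the same count and the same orthogonality argument.
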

\begin{proof}
    The proof is similar to that of Lemma \ref{thm:relay}. In order to
    construct $\{S^{(i)}(n_\ell)\}$ and
    $\{\widetilde{S}^{(i)}(n_\ell)\}$, consider the sequential pass over
    all $n$ source-destination pairs (assume $n$ is large enough for
    Lemma \ref{thm:relay} to hold). As before, for each
    source-destination pair, there are $K_22^{-\ell-1}\gamma(n)$ dense
    relay squarelets that are at distance at least $\sqrt{2a_{\ell+1}}$.
    Each pair chooses all of these $K_22^{-\ell-1}\gamma(n)$ squarelets,
    instead of just one as before. Stop one round of this procedure as
    soon as any of the relay squarelets is chosen by $n_{\ell+1}$ pairs.
    Since by the end of one round at least one relay squarelet is
    matched by $n_{\ell+1}$ source-destination pairs, there are at most
    $n_{\ell}/n_{\ell+1}=2\gamma(n)$ such rounds.

    Consider now the result of one such round. We construct $K_2
    2^{-\ell-1}\gamma(n)$ matrices $S^{(i)}(n_\ell)$ and
    $\widetilde{S}^{(i)}(n_\ell)$, with the $i$-th pair of matrices
    describing communication over the $i$-th relay squarelets chosen
    by source-destination pairs matched in this round. Thus, this
    process produces a total of $2\gamma(n) K_2 2^{-\ell-1}\gamma(n)=K_2
    2^{-\ell} \gamma^2(n)$ such matrices. The orthogonality property
    follows since each source-destination pair relays over the same
    relay squarelet only once.
\end{proof}

Given a decomposition of the scaled traffic matrix
$K_22^{-\ell-1}\gamma(n)\lambdauc(n)$ into $K_2 2^{-\ell}\gamma^2(n)$ matrices,
each source-destination pair tries to relay over $K_2
2^{-\ell-1}\gamma(n)$ dense squarelets. We time share between these
relay squarelets. Since each source-destination pair relays only a 
$(K_2 2^{-\ell-1}\gamma(n))^{-1}$ fraction of traffic over any of its
relay squarelets, the loss due to this time sharing is now
\begin{equation*}
    \frac{K_2 2^{-\ell}\gamma^2(n)}{K_2 2^{-\ell-1}\gamma(n)} 
    = 2\gamma(n)
\end{equation*}
as opposed to $K_32^{\ell}$ in Lemma~\ref{thm:relay}.  In other words,
the loss is at most a factor $2\gamma(n)$ more than in
Lemma~\ref{thm:relay}.  Using the definition of $\gamma(n)$
in~\eqref{eq:gammadef}, we have 
\begin{equation*}
    \gamma(n)\leq n^{-\log^{\delta-1/2}(n)}\leq b^{-1}(n). 
\end{equation*}
In other words, this additional loss is small.

Consider now a specific relay squarelet. If a source-destination pair
can communicate over this relay squarelet at a rate at least 
$1/64$-th of the rate achievable in the fast fading case (given by
Lemmas~\ref{thm:mac} and~\ref{thm:bc}), it sends information over this
relay. Otherwise it does not send anything during the period of time it
is assigned this relay. We now show that, with probability $1-o(1)$ as
$n\to\infty$, for every source-destination pair on every level of the
hierarchy at least one quarter of its relay squarelets can support this
rate.  As we only communicate over a quarter of the relay squarelets,
this implies that we can achieve at least $1/256$-th of the
per-node rate for the fast fading case (see Section \ref{sec:fast}),
i.e., that $b(n)n^{1-\alpha/2}$ is achievable with probability
$1-o(1)$ as $n\to\infty$. 

Assume we have for each source-destination pair $(u,w)$ picked
$K_22^{-\ell-1}\gamma(n)$ dense squarelets over which it can relay; call
those relay squarelets $\{A_{u,w,k}\}_{k=1}^{K_22^{-\ell-1}\gamma(n)}$.
Consider the event $B_{u,w,k}$ that source node $u$ can communicate
at the desired rate to destination node $w$ over relay squarelets
$A_{u,w,k}$ (assuming, as before, that we can solve the
communication problem within this squarelet). 

Let $\{B_{u,w,k}^{(i)}\}_{i=1}^4$ be the events that the interference
due to matched filtering in the MAC phase, the interference from spatial
re-use in the MAC phase, the interference due to beamforming in the BC
phase, and the interference from spatial re-use in the BC phase, are less
than $8$ times the one for fast fading, respectively.  From the proof of
Lemmas~\ref{thm:mac},~\ref{thm:bc}, and of
Theorem~\ref{thm:achievability} for the fast fading case in
Section\ref{sec:fast}, we see that 
\begin{equation*}
    \bigcap_{i=1}^4 B_{u,w,k}^{(i)} \subset B_{u,w,k}.
\end{equation*}
Due to spatial re-use, multiple relay squarelets will be active in
parallel. Let $\widetilde{H}$ denote the set of channel gains between
active relay squarelets. Using essentially the same arguments as for the
fast fading case (see Lemmas~\ref{thm:mac},~\ref{thm:bc}, and Section
\ref{sec:fast}) and from Markov's inequality, we have
$\Pp(B_{u,w,k}^{(i)}|\widetilde{H}) \geq 7/8$ for all
$i\in\{1,\ldots,4\}$ and hence $\Pp(B_{u,w,k}|\widetilde{H}) \geq 1/2$.

We now argue that the events 
\begin{equation}
    \label{eq:events}
    \Big\{\cap_{i=1}^4 B_{u,w,k}^{(i)}\Big\}_{k=1}^{K_22^{-\ell-1}\gamma(n)}
\end{equation}
are independent conditioned on $\widetilde{H}$, by showing that these
events depend on disjoint sets of channel gains and codebooks.  Assuming
the codebooks are generated new for each communication round, then they
are all independent. Thus we only have to consider the dependence on the
channel gains. Let $U_k$ and $W_k$ be the source and destination nodes
communicating over relay squarelet $A_{u,w,k}$ in round $k$, and let
$V_k$ be the nodes in $A_{u,w,k}$.  Let $\widetilde{U}_k$,
$\widetilde{W}_k$ be the source and destination nodes that are
communicating at the same time as $(u,w)$ due to spatial re-use. Let
$\widetilde{V}_k$ be the relay nodes of $\widetilde{U}_k$ and
$\widetilde{W}_k$. Now, $B_{u,w,k}^{(1)}$ and $B_{u,w,k}^{(2)}$ depend
(for fixed $\widetilde{H}$) on the channel gains between $U_k$ and
$V_k$.  $B_{u,w,k}^{(3)}$ depends on the channel gains between $V_k$ and
$W_k$. $B_{u,w,k}^{(4)}$ depends (again for fixed $\widetilde{H}$) on
the channel gains between $\widetilde{V}_{k}$ and $\widetilde{W}_k$.
Since these sets are disjoint for different $k$ by the orthogonality of
the schedules (see \eqref{eq:disjoint}), conditional independence of the
events in~\eqref{eq:events} follows. 

To summarize, conditioned on the channel gains $\widetilde{H}$ between
active relay squarelets, the random variables $\{\ind_{B_{u,w,k}}\}_k$
are independent and have expected value
$\E(\ind_{B_{u,w,k}}\vert\widetilde{H}) \geq 1/2$. The sum
\begin{equation*}
    \sum_{k=1}^{K_22^{-\ell-1}\gamma(n)}\ind_{B_{u,w,k}}
\end{equation*}
is the number of relay squarelets over which the source-destination pair
$(u,w)$ successfully relays traffic. We now show that with high
probability at least one quarter of these relay squarelets allow
successful transmission. Applying the Chernoff bound yields that 
\begin{align*}
    \Pp \Big( {\textstyle\sum_{k}} \ind_{B_{u,w,k}} < K_22^{-\ell-3}\gamma(n)
    \Big| \widetilde{H} \Big)
    & \leq \Pp\Big( {\textstyle\sum_{k}} \ind_{B_{u,w,k}} 
    < K_22^{-\ell-2}\gamma(n)\Pp(B_{u,w,k}\vert\widetilde{H})
    \Big| \widetilde{H} \Big) \\
    & \leq \exp\big(-2K 2^{-\ell}\gamma(n)\Pp(B_{u,w,k}\vert\widetilde{H})\big) \\
    & \leq \exp\big(-K 2^{-\ell}\gamma(n)\big)
\end{align*}
for some constant $K>0$. Since the right-hand side is the same for all
$\widetilde{H}$, this implies 
\begin{equation*}
    \Pp\Big({\textstyle\sum_{k}} \ind_{B_{u,w,k}} < K_22^{-\ell-3}\gamma(n) \Big)
    \leq \exp\big(-K 2^{-\ell}\gamma(n)\big).
\end{equation*}

In each of the $L(n)$ levels of the hierarchy there are at most $n^2$
source-destination pairs, and hence by the union bound with probability
at least
\begin{equation*}
    1-L(n)n^2\exp\big(-K 2^{-L(n)}\gamma(n)\big),
\end{equation*}
for every source-destination pair on every level of the hierarchy at
least one quarter of its relay squarelets can support the desired rate.  By
the choices of $\gamma(n)$ and $L(n)$ in~\eqref{eq:gammadef}, this
probability is at least 
\begin{align*}
    1-  L(n) n^2\exp\big(-K 2^{-L(n)}\gamma(n)\big) 
    & \geq 1-n^3\exp\Big(-K 2^{-L(n)}2^{\log(n)/2L(n)}\Big) \\
    & \geq 1-\exp\Big(\widetilde{K}2^{\log\log(n)}-K 2^{\frac{1}{2}\log^{1/2+\delta}(n)-\log^{1/2-\delta}(n)}\Big) \\
    & \geq 1-\exp\Big(-2^{\Omega(\log^{1/2+\delta}(n))}\Big) \\
    & \geq 1-o(1)
\end{align*}
as $n\to\infty$, and for some constant $\widetilde{K}$. This proves that
the same order rate as in the fast fading case can be achieved with high
probability for all levels $0\leq\ell<L(n)$.

It remains to argue that the same holds for level $\ell=L(n)$. Note that
since we assume phase fading only, the received signal power is only a
function of distance and not of the fading realization.  Since at level
$L(n)$ we use simple TDMA, this implies that we can always achieve the
same rate at level $L(n)$ as in the fast fading case. 

Hence with probability $1-o(1)$ as $n\to\infty$, we achieve the same
order rate at each level $0\leq \ell\leq L(n)$ as for fast fading,
proving Theorem \ref{thm:achievability} for the slow fading case.

\section{Proof of Theorem \ref{thm:converse}}
\label{sec:converse}

Here, we provide a generalization and sharpening of the converse in
\cite{ozg}. Most of the arguments follow \cite[Theorem 5.2]{ozg}.
We start by proving a lemma upper bounding the MIMO capacity.

Consider two subsets $S_1,S_2\subset V(n)$ such that $S_1\cap S_2 =
\emptyset$.  Assume we allow the nodes within $S_1$ and $S_2$ to
cooperate without any restriction. The maximum achievable sum rate
between the nodes in $S_1$ and $S_2$ is given by the MIMO capacity
$C(S_1,S_2)$ between them. The next lemma upper bounds $C(S_1,S_2)$ in
terms of the node distances between the two sets and the
\emph{normalized channel gains}
\begin{equation*}
    \tilde{h}_{u,v} 
    \defeq \frac{h_{u,v}}{\sqrt{\sum_{\tilde{v}\in S_2}r_{u,\tilde{v}}^{-\alpha}}}.
\end{equation*}

\begin{lemma}
    \label{thm:mimo}
    Under either fast or slow fading, for every $\alpha > 2$,
    $S_1,S_2\subset V(n)$ with $S_1\cap S_2 = \emptyset$, we have
    \begin{equation*}
        C(S_1,S_2) 
        \leq 4\bigg(\max\bigg\{1,\max_{v\in S_2}\sum_{u\in S_1}\lvert\tilde{h}_{u,v}\rvert^2\bigg\}\bigg)
        \sum_{u\in S_1} \sum_{v\in S_2} r_{u,v}^{-\alpha}.
    \end{equation*}
\end{lemma}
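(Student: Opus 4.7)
The plan is to recognize $C(S_1,S_2)$ as the capacity of a cooperative MIMO link from $S_1$ to $S_2$ and upper-bound the resulting log-determinant by a standard cut-set/Hadamard argument, combined with a channel normalization that exposes the factor $\sum_u\lvert\tilde h_{u,v}\rvert^2$. Starting from
\[
    C(S_1,S_2)=\sup_{Q\succeq 0,\,Q_{uu}\leq 1}\E_H\bigl[\log\det(I+HQH^\dagger)\bigr],
\]
with $H$ the $\card{S_2}\times\card{S_1}$ channel matrix (for slow fading the supremum is pathwise in the realized $H$), I would apply Hadamard's inequality on the PSD matrix $I+HQH^\dagger$ to reduce to per-receiver terms
\[
    \log\det(I+HQH^\dagger)\leq\sum_{v\in S_2}\log\bigl(1+h_v Q h_v^\dagger\bigr),
\]
where $h_v$ denotes the $v$-th row of $H$.

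Next I would normalize the channel so as to bring the $\tilde h_{u,v}$ into play. Set $g_u\defeq\sum_{\tilde v\in S_2}r_{u,\tilde v}^{-\alpha}$ and $D\defeq\mathrm{diag}(\sqrt{g_u})_{u\in S_1}$, so that $H=\tilde H D$, every column of $\tilde H$ has unit Euclidean norm, and the transformed covariance $\tilde Q\defeq DQD$ satisfies the global trace bound
\[
    \tr(\tilde Q)=\sum_u g_u Q_{uu}\leq\sum_u g_u=\sum_{u,v}r_{u,v}^{-\alpha}\defeq R.
\]
Each per-receiver quadratic form then becomes a Rayleigh quotient for $\tilde Q$,
\[
    h_v Q h_v^\dagger=\tilde h_v\tilde Q\tilde h_v^\dagger\leq\lambda_{\max}(\tilde Q)\lVert\tilde h_v\rVert^2\leq R\cdot T_v,
\]
with $T_v\defeq\sum_u\lvert\tilde h_{u,v}\rvert^2$.

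To conclude I would combine these bounds via a two-case analysis of $\log(1+x)$: use $\log(1+x)\leq x$ when $RT_v\leq 1$ and $\log(1+x)\leq 2\log(2x)$ otherwise, then sum over $v\in S_2$. The constant $4$ and the prefactor $\max\{1,\max_v T_v\}$ together form a uniform multiplicative cushion that absorbs both regimes of $RT_v$; since the reduction is pathwise, the same bound applies under both fading models (the fast-fading case additionally admits a slicker route via Jensen's inequality on the concave map $\log\det$, since under i.i.d.\ uniform phases $\E[HQH^\dagger]$ is diagonal, collapsing the argument immediately to $\sum_v\log(1+\sum_u Q_{uu}r_{u,v}^{-\alpha})$).

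The main obstacle I anticipate is avoiding a spurious $\card{S_1}$ factor in the final bound: naively bounding $\lambda_{\max}(Q)\leq\card{S_1}$ and applying Hadamard costs a factor of $\card{S_1}$ in the prefactor, which would be useless for the order analysis that follows. The channel normalization combined with the trace constraint on $\tilde Q$ is precisely what replaces the crude $\card{S_1}$ dependence by the tighter $\max_v T_v$ dependence stated in the lemma, and book-keeping the constant-factor slack (especially from the off-diagonal entries of $Q$ in the slow fading case) is where the factor $4$ is incurred.
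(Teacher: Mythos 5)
Your reductions up to the pathwise bound $C(S_1,S_2)\leq\sum_{v\in S_2}\log\big(1+R\,T_v\big)$ (Hadamard, the normalization $H=\tilde HD$, and $\lambda_{\max}(\tilde Q)\leq\tr(\tilde Q)\leq R$) are all correct, but the final step does not go through, and in fact cannot. Concretely, in the regime where $R\,T_v\leq 1$ for every $v$, your two-case estimate gives $\sum_v\log(1+RT_v)\leq R\sum_{v\in S_2}T_v=R\,\card{S_1}$ (since the normalization forces $\sum_{v\in S_2}T_v=\card{S_1}$), which is exactly the spurious $\card{S_1}$ factor you set out to avoid; no choice of absolute constant turns $\sum_v\log(1+RT_v)$ into $O\big(\max\{1,\max_v T_v\}R\big)$. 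The deeper problem is your claim that "the reduction is pathwise, so the same bound applies under both fading models": the lemma is \emph{not} a deterministic statement about an arbitrary phase realization. Take two clusters of $m$ nodes each with all cross-distances essentially equal, so $r_{u,v}^{-\alpha}\approx c$, and suppose the phases are aligned. Then $T_v=1$, the right-hand side is $4m^2c$, while coherent beamforming with per-antenna power $1$ gives $C=\log(1+cm^3)\approx cm^3$ when $cm^3\ll 1$, which exceeds $4m^2c$ for $m$ large. So the inequality fails for such realizations, and any argument that never uses the randomness of $\{\theta_{u,v}\}$ must fail with it. Your fast-fading aside has a related issue: with full CSI the input covariance is $\bm Q(\bm H)$, so $\E[\bm H\bm Q(\bm H)\bm H^\dagger]$ need not be diagonal; controlling precisely this CSI-adaptive beamforming gain is the content of the lemma.

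The paper's proof supplies the missing probabilistic ingredient. After the same normalization and a relaxation of the per-node constraints to the total-power constraint $\E(\tr\bm Q)\leq P_{S_1,S_2}=R$, it bounds $\log\det(\bm I+\widetilde{\bm H}^\dagger\bm Q\widetilde{\bm H})\leq\tr(\widetilde{\bm H}^\dagger\bm Q\widetilde{\bm H})\leq\lVert\widetilde{\bm H}\rVert^2\tr\bm Q$ and then shows that the event $\lVert\widetilde{\bm H}\rVert^2>b$ has probability \emph{zero} whenever $b>4\max\{1,\max_{v}\sum_u\lvert\tilde h_{u,v}\rvert^2\}$. This uses the i.i.d.\ uniform phases through the moment bound $\E\,\tr\big((\widetilde{\bm H}\widetilde{\bm H}^\dagger)^k\big)\leq t_k\,n\,(\max\{1,\max_v T_v\})^k$ of \cite[Lemma 5.3]{ozg} ($t_k$ the $k$-th Catalan number, which counts the phase-cancellation-surviving terms), followed by Markov's inequality and the limits $k\to\infty$, $m\to\infty$; the constant $4$ is exactly $\lim_k t_k^{1/k}$, not a bookkeeping slack from $\log(1+x)$ estimates. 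If you want to salvage your write-up, you would need to replace the Rayleigh-quotient step $\tilde h_v\tilde Q\tilde h_v^\dagger\leq\tr(\tilde Q)\,T_v$ by an almost-sure operator-norm bound on $\widetilde{\bm H}$ of this probabilistic type, at which point Hadamard becomes unnecessary and you recover the paper's argument.
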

\begin{proof}
    Let
    \begin{align*}
        \bm{H} & \defeq \{h_{u,v}\}_{u\in S_1,v\in S_2}, \\
        \widetilde{\bm{H}} & \defeq \{\tilde{h}_{u,v}\}_{u\in S_1,v\in S_2}, 
    \end{align*}
    be the matrix of (normalized) channel gains between the nodes in $S_1$ and $S_2$.
    Consider first fast fading. Under this assumption, we have 
    \begin{equation*}
        C(S_1,S_2) \\
        \defeq 
        \max_{\substack{\bm{Q}(\bm{H})\geq 0: \\ \E(q_{u,u})\leq 1\ \forall u\in S_1}}
        \E\bigg( \log \det\big(\bm{I}+\bm{H}^\dagger\bm{Q}(\bm{H}) \bm{H}\big)\bigg).
    \end{equation*}
    Define 
    \begin{equation*}
        P_{S_1,S_2} \defeq \sum_{u\in S_1} \sum_{v\in S_2}r_{u,v}^{-\alpha}
    \end{equation*}
    as the total received power in $S_2$ from $S_1$, and set
    \begin{equation*}
        P_{u,S_2}\defeq P_{\{u\},S_2}
    \end{equation*}
    with slight abuse of notation. Then
    \begin{align}
        \label{eq:mimo1}
        C(S_1,S_2) 
        & = \max_{\substack{\bm{Q}(\bm{H})\geq 0: \\ \E(q_{u,u})\leq P_{u,S_2}\forall u\in S_1}}
        \E\bigg( \log \det\big(\bm{I}+\widetilde{\bm{H}}^\dagger\bm{Q}(\bm{H})\widetilde{\bm{H}}\big)\bigg) \nonumber\\
        & \leq\max_{\substack{\bm{Q}(\bm{H})\geq 0: \\ \E(\tr\bm{Q}(\bm{H}))\leq P_{S_1,S_2}}}
        \E\bigg( \log \det\big(\bm{I}+\widetilde{\bm{H}}^\dagger\bm{Q}(\bm{H})\widetilde{\bm{H}}\big)\bigg).
    \end{align}

    Define the event
    \begin{equation*}
        B \defeq \big\{ \lVert\widetilde{\bm{H}}\rVert^2 > b\big\}
    \end{equation*}
    for some $b$ and where $\lVert\widetilde{\bm{H}}\rVert$ denotes the
    largest singular value of $\widetilde{\bm{H}}$. In words, $B$ is the
    event that the channel gains between $S_1$ and $S_2$ are ``good''.
    We argue that, for appropriately chosen $b$, the event $B$ has
    probability zero (i.e., the channel can not be too ``good''). By Markov's
    inequality
    \begin{equation}
        \label{eq:mimo2}
        \Pp(B) \leq b^{-m}\E(\lVert\widetilde{\bm{H}}\rVert^{2m}),
    \end{equation}
    for any $m$. We continue by upper bounding
    $\E(\lVert\widetilde{\bm{H}}\rVert^{2m})$. We have
    \begin{equation*}
        \lVert\widetilde{\bm{H}}\rVert^{2k}
        \leq \tr{\big((\widetilde{\bm{H}}\widetilde{\bm{H}}^\dagger)^k\big)}
    \end{equation*}
    for any $k$, and hence
    \begin{equation}
        \label{eq:mimo3}
        \E(\lVert\widetilde{\bm{H}}\rVert^{2m})
        \leq \E\Big( \big( \tr{\big((\widetilde{\bm{H}}\widetilde{\bm{H}}^\dagger)^k\big)} \big)^{m/k} \Big).
    \end{equation}
    Now, for any $k \geq m$, we have by Jensen's inequality
    \begin{equation}
        \label{eq:mimo4}
        \E\Big( \big( \tr{\big((\widetilde{\bm{H}}\widetilde{\bm{H}}^\dagger)^k\big)} \big)^{m/k} \Big)
        \leq \Big( \E \tr{\big((\widetilde{\bm{H}}\widetilde{\bm{H}}^\dagger)^k} \big) \Big)^{m/k}.
    \end{equation}
    Combining \eqref{eq:mimo2}, \eqref{eq:mimo3}, and \eqref{eq:mimo4}
    yields
    \begin{equation}
        \label{eq:mimo5}
        \Pp(B) 
        \leq b^{-m}\Big( \E \tr{\big((\widetilde{\bm{H}}\widetilde{\bm{H}}^\dagger)^k} \big) \Big)^{m/k}
    \end{equation}
    for any $k \geq m$.

    Now, the arguments in \cite[Lemma 5.3]{ozg} show that
    \begin{equation*}
        \E\big( \tr{\big((\widetilde{\bm{H}}\widetilde{\bm{H}}^\dagger)^k\big)} \big)
        \leq t_k n \bigg(\max\bigg\{1,
        \max_{v\in S_2}\sum_{u\in S_1}\lvert\tilde{h}_{u,v}\rvert^2\bigg\}\bigg)^k,
    \end{equation*}
    where $t_{k}$ is the $k$-th Catalan number. Combining with
    \eqref{eq:mimo5}, this yields
    \begin{equation*}
        \Pp(B) 
        \leq \bigg(b^{-1}t_k^{1/k} n^{1/k} 
        \Big(\max\Big\{1,\max_{v\in S_2}\sum_{u\in S_1}\lvert\tilde{h}_{u,v}\rvert^2\Big\}\Big)
        \bigg)^m.
    \end{equation*}
    Taking the limit as $k\to\infty$ and using that $t_k^{1/k}\to 4$
    yields
    \begin{equation*}
        \Pp(B) 
        \leq \bigg(b^{-1}4 
        \Big(\max\Big\{1,\max_{v\in S_2}\sum_{u\in S_1}\lvert\tilde{h}_{u,v}\rvert^2\Big\}\Big)
        \bigg)^m.
    \end{equation*}
    Assume
    \begin{equation}
        \label{eq:mimo6}
        b > 4 \Big(\max\Big\{1,\max_{v\in S_2}\sum_{u\in S_1}\lvert\tilde{h}_{u,v}\rvert^2\Big\}\Big), 
    \end{equation}
    then taking the limit as $m\to\infty$ shows that
    \begin{equation*}
        \Pp(B) = 0.
    \end{equation*}
 
    Using this, we can upper bound \eqref{eq:mimo1} as
    \begin{align*}
        C (S_1, S_2) 
        & \leq\max_{\substack{\bm{Q}(\bm{H})\geq 0: \\ \E(\tr\bm{Q}(\bm{H}))\leq P_{S_1,S_2}}}
        \E\Big( \tr\Big(\widetilde{\bm{H}}^\dagger\bm{Q}(\bm{H})\widetilde{\bm{H}}\Big)\Big) \\
        & = \max_{\substack{\bm{Q}(\bm{H})\geq 0: \\ \E(\tr\bm{Q}(\bm{H}))\leq P_{S_1,S_2}}}
        \E\Big( \ind_{B^c} \tr\Big(\widetilde{\bm{H}}^\dagger\bm{Q}(\bm{H})\widetilde{\bm{H}}\Big)\Big) \\
        & \leq\max_{\substack{\bm{Q}(\bm{H})\geq 0: \\ \E(\tr\bm{Q}(\bm{H}))\leq P_{S_1,S_2}}}
        \E\Big( \ind_{B^c} \lVert\widetilde{\bm{H}}\rVert^2\tr\bm{Q}(\bm{H})\Big) \\
        & \leq b P_{S_1,S_2}.
    \end{align*}
    Since this is true for all $b$ satisfying \eqref{eq:mimo6}, we
    obtain the lemma for the fast fading case.

    Under slow fading
    \begin{equation*}
        C(S_1,S_2)
        \defeq \max_{\substack{\bm{Q}\geq 0: \\ q_{u,u}\leq P\ \forall u\in S_1}}
        \log \det\big(\bm{I}+\bm{H}^\dagger\bm{Q} \bm{H}\big),
    \end{equation*}
    and the lemma can be obtained by the same steps.
\end{proof}

We now proceed to the proof of Theorem \ref{thm:converse}.  Consider a
vertical cut dividing the network into two parts.  By the
minimum-separation requirement, an area of size $o(n)$ can contain at
most $o(n)$ nodes, and hence we can find a cut such that each part is of
size $\Theta(n)$ and contains $\Theta(n)$ nodes. Call the left part of
the cut $S$. Since there are $\Theta(n)$ nodes in $S$ and in $S^c$,
there are $\Theta(n)$ sources in $S$ with their destination in $S^c$
with probability $1-o(1)$.  For technical reasons we add a node inside
each square in $V(n)$ of the form $[i d,(i+1)d]\times[jd,(j+1)d]$ for
some $i,j\in\mbb{N}$, where $d\defeq\sqrt{2\log(n)}$. These additional
nodes have no traffic demands on their own, and simply help with the
transmission. This can clearly only increase achievable rates. Moreover,
this increases the number of nodes in $V$ by less than a factor $2$. We
now show that
\begin{equation}
    \label{eq:converse1}
    C(S,S^c) = O\big(\log^6(n) n^{2-\alpha/2}\big),
\end{equation}
and hence by the cut-set bound, and since there are $\Theta(n)$ sources
in $S$ with their destination in $S^c$, we have
\begin{equation*}
    \rho^*(n) = O\big(\log^6(n)n^{1-\alpha/2}\big).
\end{equation*}

We prove \eqref{eq:converse1} using Lemma \ref{thm:mimo}. To this end,
we need to upper bound
\begin{equation*}
    \max_{v\in S^c}\sum_{u\in S}\lvert\tilde{h}_{u,v}\rvert^2.
\end{equation*}
The proof of \cite[Lemma 5.3]{ozg} shows that if
\begin{enumerate}
    \item there are less than $\log(n)$ nodes inside
        $[i,i+1]\times[j,j+1]$ for any $i,j\in\{0,\ldots,\sqrt{n}-1\}$,
    \item there is at least one node inside $[i d,(i+1)d]\times[jd,(j+1)d]$
        for any $i,j$, where $d\defeq \sqrt{2\log n}$,
\end{enumerate}
then
\begin{equation}
    \label{eq:converse2}
    \max_{v\in S^c}\sum_{u\in S}\lvert\tilde{h}_{u,v}\rvert^2
    \leq K \log^3(n),
\end{equation}
and for $\alpha\in(2,3]$
\begin{equation}
    \label{eq:converse3}
    \sum_{u\in S} \sum_{v\in S^c} r_{u,v}^{-\alpha}
    \leq \widetilde{K}\log^3(n)n^{2-\alpha/2},
\end{equation}
for constants $K,\widetilde{K}$. For arbitrary node placement with
minimum separation, the first requirement is satisfied for $n$ large
enough, since only a constant number of nodes can be contained in each
area of constant size. By our addition of nodes into $V(n)$ described
above, the second condition is also satisfied. Using Lemma
\ref{thm:mimo} with \eqref{eq:converse2} and \eqref{eq:converse3} yields
\eqref{eq:converse1}, concluding the proof of Theorem
\ref{thm:converse}.

\section{Proof of Theorem~\ref{thm:adversary}}
\label{sec:adversary}

Consider a node placement with $n/2$ nodes located uniformly on
$[0,\sqrt{n}/4]\times[0,\sqrt{n}]$ and $n/2$ nodes located on
$[\sqrt{n}/2,\sqrt{n}]\times[0,\sqrt{n}]$ with minimum separation
$r_{\min}=1/2$.  A random traffic matrix $\lambdauc(n)$ is such that at least
$n/4$ communication pairs have their sources in the left cluster and
destinations in the right cluster with probability $1-o(1)$.  Assume we
are dealing with such a $\lambdauc(n)$ in the following.

In this setup, with multi-hop at least one hop has to cross the gap
between the left and the right cluster. Thus, even without any
interference from other nodes, we can obtain at most
\begin{equation*}
    \rho^{\tsf{MH}}(n) \leq 4^\alpha n^{-\alpha/2}.
\end{equation*}

Moreover, considering a cut between the two clusters (say, $S$ and
$S^c$), and applying Lemma \ref{thm:mimo} yields that 
\begin{equation}
    \label{eq:adversary1}
    \rho^*(n) \leq 16n^{-1}
    \bigg(\max\bigg\{1,\max_{v\in S^c}\sum_{u\in S}\lvert\tilde{h}_{u,v}\rvert^2\bigg\}\bigg)
        \sum_{u\in S} \sum_{v\in S^c} r_{u,v}^{-\alpha}.
\end{equation}
Now note that for any $u\in S$, $v\in S^c$, we have
\begin{equation*}
    \frac{1}{4}\sqrt{n} 
    \leq r_{u,v}
    \leq 2\sqrt{n}.
\end{equation*}
Hence
\begin{equation*}
    \sum_{u\in S}\lvert\tilde{h}_{u,v}\rvert^2
    = \sum_{u\in S} \frac{r_{u,v}^{-\alpha}}{\sum_{\tilde{v}\in S^c} r_{u,\tilde{v}}^{-\alpha}}
    \leq 2^{3\alpha},
\end{equation*}
and
\begin{equation*}
    \sum_{u\in S}\sum_{v\in S^c}r_{u,v}^{-\alpha}
    \leq 4^{\alpha-1}n^{2-\alpha/2}.
\end{equation*}
Combining this with \eqref{eq:adversary1} yields
\begin{equation*}
     \rho^*(n) \leq 2^{2+5\alpha}n^{1-\alpha/2}
\end{equation*}
for all $\alpha >2$.

\section{Proof of Theorem \ref{thm:achievabilityx}}
\label{sec:multihop}

We construct a cooperative multi-hop communication scheme and lower
bound the per-node rate $\rho^{\tsf{CMH}}(n)$ it achieves. We use the
hierarchical relaying scheme as building block. Assume the node
placement $V(n)$ is $\mu$-regular at resolution $d(n)$ for all $n\geq
1$. We show that this implies that we can achieve a per-node rate of at
least $d^{3-\alpha}(n)n^{-1/2-\beta(n)}$ as $n\to\infty$. Taking the
smallest such $d(n)$ then yields the result. 

We consider three cases for the value of $d(n)$ (namely, $d(n) =
\Theta(\sqrt{n})$, $d(n) \geq n^{o(1)}$, and $d(n) \leq n^{o(1)}$).
First, if $d(n) = \Theta(\sqrt{n})$ as $n\to\infty$ then the result follows
directly from Theorem~\ref{thm:achievability}. Considering a subsequence
if necessary, we can therefore assume without loss of generality that
$d(n) = o(\sqrt{n})$ in the following.

Second, consider $d(n)$ satisfying
\begin{equation}
    \label{eq:hassumption}
    d(n) \geq n^{\frac{1}{2+\alpha}\log^{\delta-1/2}(n)}.  
\end{equation}
Divide $A(n)$ into squares of sidelength $d(n)$.  Since
$d(n) = o(\sqrt{n})$, the number of such squares grows unbounded as
$n\to\infty$. We now show that we can use multi-hop communication with a
hop length of $d(n)$ where each hops is implemented by squares
cooperatively sending information to a neighboring square. In other words,
we perform cooperative communication at local scale $d(n)$ and multi-hop
communication at global scale $\sqrt{n}$. 

Since $V(n)$ is $\mu$-regular at resolution $d(n)$, each such square
contains at least $\mu d^2(n)$ nodes.  Pick the top left most square and
construct the square of sidelength $2d(n)$ consisting of it together
with its $3$ neighbors.  Continue in the same fashion, partitioning all
of $A(n)$ into squares of sidelength $2d(n)$. Note that each such bigger
square contains at least $4\mu d^2(n)$ nodes by the definition of
$d(n)$. We assume this worst case in the following.  Partition $A(n)$
into $4$ subsets of those bigger squares such that within each such
subset each square is at distance at least $2d(n)$ from any other square
(see Figure~\ref{fig:backbone}). We time share between those $4$
subsets.  Consider in the following one such subset. For every bigger
square, we construct two permutation traffic matrices $\lambdauc_1(4\mu d^2(n))$ and
$\lambdauc_2(4\mu d^2(n))$. In $\lambdauc_1$ the nodes in the top two squares have as
destinations the nodes in the bottom two squares and the nodes in the
bottom two squares have as destinations the nodes in the top two squares
(see Figure~\ref{fig:backbone}). Similarly, $\lambdauc_2$ contains communication
pairs between left and right squares.  We time share between $\lambdauc_1$ and
$\lambdauc_2$.
\begin{figure}[!ht]
    \begin{center}
        \input{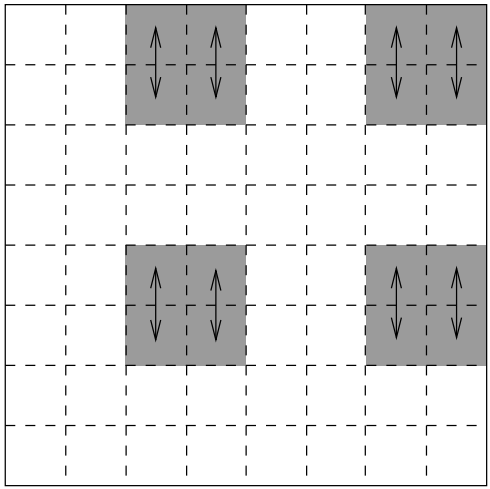}
    \end{center}
    \caption{Sketch of the construction of the cooperative multi-hop
    scheme in the proof of Theorem~\ref{thm:achievabilityx}. The dashed
    squares have sidelength $d(n)$. The gray area is one of the $4$
    subsets of bigger squares that communicate simultaneously. The
    arrows indicate the traffic matrix $\lambdauc_1$.}
    \label{fig:backbone}
\end{figure}

Communication according to $\lambdauc_i$ within bigger squares in the same
subset occurs simultaneously. We are going to use hierarchical relaying
within each bigger square. This is possible since each such square
contains at least $4\mu d^2(n)$ nodes. We have to show that the
additional interference from bigger squares in the same subset is such
that Theorem~\ref{thm:achievability} still applies.  In particular, we
need to show that the interference has bounded power, say $K$. Using the
same arguments as in the proof of Theorem~\ref{thm:achievability} in
Section \ref{sec:achievability} yields that this is indeed the case (the
interference from other bigger squares here behaves the same way as the
interference due to spatial re-use from other active relay squarelets
there). With this, we are now dealing with a hierarchical relaying
scheme with area $4d^2(n)$, $4\mu d^2(n)$ nodes, and additive noise with
power $1+K$.  Both the lower number of nodes and the higher noise power
will decrease the achievable per-node rate by only some constant factor,
and hence Theorem~\ref{thm:achievability} shows that under 
fast fading we can achieve a per-node rate of at least 
\begin{equation*}
    b_1\big(d^2(n)\big)(d^2(n))^{1-\alpha/2} \geq b_1(n)d^{2-\alpha}(n),
\end{equation*}
as $n\to\infty$, where 
\begin{equation*}
    b_1(n) \geq n^{-O\big(\log^{\delta-1/2}(n)\big)}.
\end{equation*}
Moreover, the same rate is
achievable under slow fading with probability $1- b_2(d^2(n))$, where
\begin{equation*}
    b_2(n) \leq \exp\Big(-2^{\Omega\big(\log^{1/2+\delta}(n)\big)}\Big).
\end{equation*}
The setup is the same for all bigger squares within each of the $4$ subsets. 

We now ``shift'' the way we defined the bigger squares by $d(n)$ to
the right and to the bottom. With this, each new bigger square
intersects with $4$ bigger squares as defined before. We use the same
communication scheme within these new bigger squares and time share
between the two ways of defining bigger squares. 

Construct now a
graph where each vertex corresponds to a square of sidelength $d(n)$
and where two vertices are connected by an edge if they are adjacent
in either the same old or new bigger square. This graph is depicted in
Figure~\ref{fig:backbone2} in Section \ref{sec:schemes_multihop}.

With the above construction, we can communicate along each edge of this
graph simultaneously at a per-node rate of
\begin{equation*}
    \frac{b_1(n)}{16}d^{2-\alpha}(n)
\end{equation*}
in the fast fading case.  In the slow fading case, this statement holds
with probability at least
\begin{align*}
    1-\frac{n}{d^2(n)}b_2(d^2(n))
    & = 1-\frac{n}{d^2(n)}\exp\Big(-2^{\Omega\big(\log^{1/2+\delta}(d^2(n))\big)}\Big) \\
    & \geq 1-\exp\Big(K'2^{\log\log(n)}-2^{\widetilde{K}\log^{1/2+\delta}(d(n))}\Big)
\end{align*}
for constants $K',\widetilde{K}$. By assumption
\eqref{eq:hassumption}, 
\begin{equation*}
    \log^{1/2+\delta}\big(d(n)\big)
    \geq \Big(\frac{1}{2+\alpha}\log^{1/2+\delta}(n)\Big)^{1/2+\delta},
\end{equation*}
and hence
\begin{equation*}
    1-\frac{n}{d^2(n)}b_2(d^2(n)) \geq 1-o(1)
\end{equation*}
as $n\to\infty$, showing that with high probability we achieve the same
order rate under slow fading as under fast fading.

The communication graph constructed forms a grid with $n/d^2(n)$ nodes.
Using that each bigger square can contain at most $K_1 d^2(n)$ nodes by
the minimum-separation requirement, standard arguments for routing over
grid graphs (see \cite{kul}) show that in the fast fading case we can
achieve a per-node rate of
\begin{equation*}
    \rho^{\tsf{CMH}}(n) 
    \geq \tilde{b}(n)d^{2-\alpha}(n)\frac{d(n)}{\sqrt{n}}
    \geq \tilde{b}(n)d^{3-\alpha}(n)n^{-1/2},
\end{equation*}
where 
\begin{equation*}
    \tilde{b}(n)= n^{-O\big(\log^{\delta-1/2}(n)\big)}. 
\end{equation*}
Moreover, the same statement holds in the slow fading case with
probability $1-o(1)$.

Finally, consider $d(n)$ such that
\begin{equation}
    \label{eq:hassumption2}
    d(n) \leq n^{\frac{1}{2+\alpha}\log^{\delta-1/2}(n)}.  
\end{equation}
Construct the same communication graph as before, but this time we use
simple multi-hop communication between adjacent squares of sidelength
$d(n)$. By time sharing between the at most $K_1 d^2(n)$ nodes in each
square, and since we communicate over a distance of at most $3d(n)$, we
achieve under either fast of slow fading a per-node rate between the
squares of at least
\begin{equation*}
    K'' d^{-2-\alpha}(n)
    \geq K'' n^{-\log^{\delta-1/2}(n)}
\end{equation*}
for some constant $K''$, and where we have used \eqref{eq:hassumption2}.
Using the analysis of grid graphs as before, we can achieve a per-node
rate of at least
\begin{equation*}
    \rho^{\tsf{CMH}}(n) 
    \geq K'' n^{-\log^{\delta-1/2}(n)}\frac{d(n)}{\sqrt{n}} 
    \geq \tilde{b}(n)d^{3-\alpha}(n)n^{-1/2},
\end{equation*}
for either the fast or slow fading case.

\section{Proof of Theorem \ref{thm:adversaryx}}
\label{sec:adversaryx}

Consider $V(n)$ with $n/2$ nodes located uniformly on
$[0,(\sqrt{n}-d^*(n))/2]\times[0,\sqrt{n}]$ and $n/2$ nodes located
uniformly on $[\sqrt{n}/2,\sqrt{n}]\times[0,\sqrt{n}]$ such that
$r_{\min}=1/2$. This node placement is $1/2$-regular at resolution
$d^*(n)$. A random traffic matrix $\lambdauc(n)$ is such that $\Theta(n)$
communication pairs have their sources in the left cluster and
destinations in the right cluster with probability $1-o(1)$. Assume we
are dealing with such a $\lambdauc(n)$ in the following. 

Considering a cut between the two clusters and applying Lemma
\ref{thm:mimo} (slightly adapting the arguments in Section
\ref{sec:converse}), yields that
\begin{equation*}
    \rho^*(n) 
    = O\big(\log^6(n) {d^*}^{3-\alpha}(n)n^{-1/2}\big)
\end{equation*}
for $\alpha>3$.

\section{Discussion}
\label{sec:discussion}

We briefly discuss several aspects of the proposed hierarchical
relaying scheme. Section \ref{sec:csi} comments on the full CSI
assumption and Section \ref{sec:bursty} on the use of bursty communication.
Sections \ref{sec:dense} and \ref{sec:dmin} outline how the results
obtained here can be extended to the case of dense networks and networks
without minimum separation between nodes. Section \ref{sec:comparison}
compares our hierarchical relaying scheme to the hierarchical
cooperation scheme presented in \cite{ozg}.

\subsection{Full CSI Assumption}
\label{sec:csi}

Throughout our analysis, we have made a full CSI assumption. In other
words, we assumed that the phase shifts $\{\theta_{u,v}[t]\}_{u,v}$ are
available at time $t$ at all nodes in the network.  As this assumption
is quite strong, it is worth commenting on. First, we make the full CSI
assumption in all the converse results in this paper. This implies that
all the converses also hold under weaker assumptions on the CSI, and
hence are valid as well under a wide variety of more realistic
assumptions on the availability of side information. Second, all
achievability results can be shown to hold under weaker assumptions on
the availability of CSI. In fact, in all cases, a $2$-bit quantization
of the channel state $\{\theta_{u,v}[t]\}_{u,v}$ available at all nodes 
at time $t$ is sufficient to obtain the same scaling behavior. This
follows by an argument similar to the one used in the analysis of the BC
phase in Section \ref{sec:bc}, where it is shown that beamforming using
a quantized channel state results only in a constant factor rate loss.

\subsection{Burstiness of Hierarchical Relaying Scheme}
\label{sec:bursty}

The hierarchical relaying scheme presented here is bursty in the sense
that nodes communicate at high power during a small fraction of time.
This leads to high peak-to-average power ratio, which is undesirable
in practice. We chose burstiness in the time domain to simplify the
exposition. The same bursty behavior could be achieved in a more
practical manner by using CDMA with several orthogonal signatures or by using
OFDM with many sub-carriers. Each approach leads to many parallel
channels out of which only few are used with higher power. This avoids
the issue of high peak-to-average power ratio in the time domain.

\subsection{Dense Networks}
\label{sec:dense}

Throughout this paper, we have only considered \emph{extended} networks,
i.e, $n$ nodes placed on a square region of area $n$ with a minimum
separation of $r_{u,v}\geq r_{\min}$. The results can, however, be
recast for \emph{dense} networks, where $n$ nodes are arbitrarily placed
on a square region of unit area with a minimum separation of
$r_{u,v}\geq r_{\min}/\sqrt{n}$. It suffices to notice that by rescaling
power by a factor $n^{-\alpha/2}$ a dense network can essentially be
transformed into an extended network with path-loss exponent $\alpha$
(see also \cite{ozg}).  Hence the same result for dense networks can be
obtained from the result for extended networks by considering the limit
$\alpha\to 2$. Applying this to Theorem \ref{thm:achievability}, yields
a linear per-node rate scaling of the hierarchical relaying scheme.

\subsection{Minimum-Separation Requirement}
\label{sec:dmin}

The minimum-separation requirement $r_{\min}\in(0,1)$ on the node
placement is sufficient but not necessary for
Theorem~\ref{thm:achievability} to hold. A weaker sufficient condition
is that a constant fraction of squarelets are dense, as shown in
Lemma~\ref{thm:structure} to be a consequence of the minimum-separation
requirement. It is straightforward to show that this weaker condition is
satisfied with high probability for nodes placed uniformly at random on
$[0,\sqrt{n}]^2$. This yields a different proof of Theorem 5.1
in~\cite{ozg}.

\subsection{Comparison with~\cite{ozg}}
\label{sec:comparison}

Both, the hierarchical relaying scheme presented here and the
hierarchical scheme presented in~\cite{ozg}, share that they use virtual
multiple-antenna communication and a hierarchical architecture to
achieve essentially global cooperation in the network.  The schemes
differ, however, in several key aspects, which we point out here.

First, we note that we obtain a slightly better scaling law. Namely
\begin{equation*}
    b_1(n)n^{1-\alpha/2}
    \leq \rho^*(n)
    \leq b_2(n)n^{1-\alpha/2}
\end{equation*}
with
\begin{align*}
    b_1(n) & \geq n^{-O\big(\log^{\delta-1/2}(n)\big)}, \\
    b_2(n) & = O\big(\log^6(n)\big),
\end{align*}
for any $\delta\in(0,1/2)$ obtained here, compared to
\begin{equation*}
    \tilde{b}_1(n)n^{1-\alpha/2}
    \leq \rho^*(n)
    \leq \tilde{b}_2(n)n^{1-\alpha/2}
\end{equation*}
with
\begin{align*}
    \tilde{b}_1(n) & = \Omega\big(n^{-\varepsilon}\big), \\
    \tilde{b}_2(n) & = O\big(n^{\varepsilon}\big),
\end{align*}
for any $\varepsilon>0$ in \cite{ozg}.  For the lower bound (i.e.,
achievability), this is because the hierarchy here is not of fixed depth
$L$ as in~\cite{ozg}, but rather of depth $L(n)=\log^{1/2-\delta}(n)$
(for some constant $\delta\in(0,1/2)$), i.e., changing with $n$. For the
upper bound (i.e., converse), this is due to a sharpening of the
arguments in \cite{ozg}.

Second, note that the multi-user decoding at the relay squarelets during
the MAC phase and the multi-user encoding during the BC phase are very
simple in our setup. In fact, using matched filter receivers and
transmit beamforming, we convert the multi-user encoding and decoding
problems into several single-user decoding and encoding problems. This
differs from the approach in~\cite{ozg}, in which joint decoding of a
number of users on the order of the network size is performed. Our
results thus imply that these simpler transmitter and receiver
structures provide the same scaling as the more complicated joint
decoding in~\cite{ozg}. We note that the scheme proposed in \cite{ozg}
can be modified to also use matched filter receivers as suggested here.

Third, and probably most important, the schemes differ in how they
achieve the throughput gain from using multiple antennas.
In~\cite{ozg}, the nodes are located almost regularly with high
probability. This allowed the use of a scheme in which a source
squarelet directly communicates with a destination squarelet. In other
words, the multiple-antenna gain comes from setting up a virtual MIMO
channel between the source and the destination. In our setup, the
arbitrary location of nodes prevents such an approach.  Instead, we use
that at least some fixed fraction of squarelets is almost regular (we
called them dense squarelets). Source-destination pairs relay their
traffic over such a dense squarelet. In other words, the
multiple-antenna gain comes from setting up a virtual multiple-antenna
MAC and BC. Thus, the hierarchical relaying scheme presented here shows
that considerably less structure on the node locations than assumed
in~\cite{ozg} suffices to achieve a multiple-antenna gain essentially on
the order of the network size. Note also that the additional degree of
freedom offered by the choice of relay squarelet for a given
source-destination pair makes it possible to extend the result to hold
also for slow fading channels.

\section{Conclusions}
\label{sec:conclusions}

We considered the problem of the scaling of achievable rates in
arbitrary extended wireless networks. We generalized the hierarchical
cooperative communication scheme presented in~\cite{ozg} for a fast
fading channel model and with random node placements. We proposed a
different hierarchical cooperative communication scheme, which also
works for arbitrary node placement (with a minimum-separation
requirement) and for either fast or slow fading. 

For small path-loss exponent $\alpha\in(2,3]$, we showed that our scheme
is order optimal and achieves the same rate irrespective of the node
placement. In particular, this rate is equal to the one achievable under
random node placement.  In other words, the regularity of the node
placement has no impact on achievable rates for small path-loss
exponent.

The situation is, however, quite different for large path-loss exponent
$\alpha >3$. We argued that in this regime the regularity of the node
placement directly impacts the scaling of achievable rates. We then
presented a  cooperative communication scheme that smoothly
``interpolates'' between multi-hop and hierarchical cooperative
communication depending on the regularity of the node placement. We
showed that this scheme is order optimal for all $\alpha>3$ under
adversarial node placement with regularity constraint.  This contrasts
with the situation for more regular networks (like the ones obtained
with high probability through random node placement), in which multi-hop
communication is order optimal for all $\alpha > 3$.  Thus, for less
regular networks, the use of more complicated cooperative communication
schemes can be necessary for optimal operation of the network.

\section{Acknowledgments}

The authors would like to thank the anonymous reviewers and the Associate
Editor Gerhard Kramer for their comments. We would also like to
acknowledge helpful discussions with Olivier L{\'e}v{\^e}que, Ayfer
{\"O}zg{\"u}r, and Greg Wornell.

\bibliography{arbitrary}

\end{document}